\newcommand{\tw}{\operatorname{tw}}
\newcommand{\sm}{\setminus}
\newcommand{\cO}{{\cal O}}
\newcommand{\cC}{{\cal C}}
\newcommand{\cP}{{\cal P}}
\newcommand{\cF}{{\cal F}}
\newcommand{\cOs}{{\cal O}^*}
\newcommand{\pmc}{potential maximal clique}
\newcommand{\npmc}{1.7347}
\newcommand{\fd}{{\sc Minimum $\mathcal{F}$-Deletion}}
\newcommand{\mitwt}{{\sc Maximum Induced Subgraph of Treewidth $
\leq t$}}
\newcommand{\mis}{{\sc Maximum Independent Set}}
\newcommand{\mim}{{\sc Maximum Induced Matching}}
\newcommand{\mif}{{\sc Maximum Induced Forest}}
\newcommand{\ois}{{\sc Optimal Induced Subgraph for $\cP$ and $t$}}
\newcommand{\cis}{{\sc Constrained Induced Subgraph for $\cP$ and $t$}}
\newcommand{\owais}{{\sc Optimal Weighted Annotated Induced Subgraph for $\cP$ and $t$}}
\newcommand{\mifd}{{\sc Maximum Induced $\mathcal{F}$-free Subgraph}}
\newcommand{\amims}{\textsc{Maximum Induced Subgraph with $\leq \ell$ copies of $\cF_m$-cycles}}
\newcommand{\mfvs}{{\sc Minimum Feedback Vertex Set}}
\newcommand{\mislkc}{\textsc{Maximum Induced Subgraph with $\leq \ell$ copies of $p$-cycles}}
\newcommand{\mislf}{{\sc Maximum Induced Subgraph with   $ \leq  \ell$ copies of Minor Models from $\mathcal{F}$}}
\newcommand{\midcls}{{\sc Minimum Induced Disjoint Connected $\ell$-Subgraphs}}
\newcommand{\kip}{{\sc $k$-in-a-Path}}
\newcommand{\kit}{{\sc $k$-in-a-Tree}}
\newcommand{\kic}{{\sc $k$-in-a-Cycle}}
\newcommand{\kig}{{\sc $k$-in-a-Graph From $\mathcal{G}(t,\varphi)$}}
\newcommand{\migp}{{\sc  Maximum Induced $\mathcal{G}(t,\cP)$-Packing}}
\newcommand{\igp}{{\sc  Independent $\mathcal{G}(t,\varphi)$-Packing}}
\newcommand{\hftg}{{\sc  Homomorphism from $t$-Treewidth Subgraph}}
\def\imod#1{\allowbreak\mkern10mu({\operator@font mod}\,\,#1)}
\newcommand{\defproblem}[3]{
  \vspace{1mm}
\noindent\fbox{
  \begin{minipage}{0.96\textwidth}
  \begin{tabular*}{\textwidth}{@{\extracolsep{\fill}}lr} #1   \\ \end{tabular*}
  {\bf{Input:}} #2  \\
  {\bf{Task:}} #3
  \end{minipage}
  }
  \vspace{1mm}
}
  \newtheorem{theorem}{Theorem}
  \newtheorem{lemma}{Lemma}
  \newtheorem{corollary}{Corollary}
  \newtheorem{proposition}{Proposition}
  \newtheorem{definition}{Definition}
\title{Large induced subgraphs via triangulations and CMSO}
\author[1]{Fedor V. Fomin}
\author[2]{Ioan Todinca\thanks{Partially supported by the ANR project AGAPE.} }
\author[1]{Yngve Villanger}
\affil[1]{Department of Informatics, University of Bergen, Norway, \texttt{fomin@ii.uib.no, yngve.villanger@uib.no}}
\affil[2]{LIFO, Univ. Orl\'eans, France, \texttt{ioan.todinca@univ-orleans.fr}}
 \date{\today}
\begin{document}
\maketitle

\thispagestyle{empty}
\begin{abstract} 
We obtain an   algorithmic meta-theorem for the following    optimization problem. Let 
$\varphi$ be   a Counting Monadic Second Order Logic  (CMSO) formula and $t\geq 0$ be an integer. For a given graph $G=(V,E)$, the task is to maximize 
$|X|$ subject to the following: there is a set $  F\subseteq V$ such that $X\subseteq F $, 
    the subgraph $G[F]$ induced by $F$ is of treewidth at most $t$,   and structure $(G[F],X)$ models $\varphi$, i.e.  $(G[F],X)\models\varphi$. 
Special cases of this optimization problem are the following generic examples. Each of these special cases  contains various problems as a special subcase:
 \begin{itemize}
 \item \amims, where for fixed  nonnegative integers $m$ and $\ell$, the task is to find a maximum induced subgraph of a given graph with  at most $\ell$ vertex-disjoint cycles of length $0\imod{m}$. For example,  this encompasses     the problems of finding a maximum induced forest  or a maximum subgraph without even cycles.
 
\item 
\fd, where for a fixed finite set of graphs ${\cal F}$ containing a planar graph, the task is to find   a maximum induced subgraph  of a given graph containing no graph from ${\cal F}$ as a minor. Examples of \fd{} are the problems of finding a minimum  vertex cover  or a minimum number of vertices required to delete from the graph to obtain an outerplanar graph. 
\item \textsc{Independent $\mathcal{H}$-packing},  where for a fixed finite set of connected graphs ${\cal H}$, the task is 
 to find   an induced subgraph $F$ of a given graph with the maximum number of connected components, such that each connected component of $F$ is isomorphic to some graph from  $\mathcal{H}$. For example, the problem of finding  a maximum induced matching or packing into nonadjacent triangles, are the special cases of this problem. 
\end{itemize}   
   We give an algorithm solving the optimization  problem on an $n$-vertex graph $G$ in time $\cO(|\Pi_G| \cdot n^{t+4}\cdot f(t,\varphi))$, where  $\Pi_G$ is the set of all potential maximal cliques in $G$ and  $f$ is a function of $t$ and $\varphi$  only. We also show how similar running time can be obtained for the weighted version of the problem. 
   Pipelined with known bounds on the number of potential maximal cliques,  we derive   a plethora of algorithmic consequences 
   extending and subsuming many known results on algorithms for special graph classes and  exact exponential algorithms. 
\end{abstract}

\newpage
 \setcounter{page}{1}
\section{Introduction}

We provide a generic algorithmic result concerning  induced subgraphs with properties expressible in some logic. The main applications of our result can be found in two areas of graph algorithms: polynomial time algorithms on special graph classes and exponential time algorithms. 
%

%


\medskip\noindent\textbf{Graph classes.}    The algorithmic study of  graphs with particular structure can be traced to  the introduction of perfect graphs by Berge in the beginning of 1960s. Most of the research in this area 
focuses  on graph algorithms  exploiting  the structure of the input graph. Many problems intractable on general graphs were shown to be solvable in polynomial time on different classes of graphs like interval or chordal graphs.
The book of Golumbic \cite{Golumbic80} provides algorithmic studies of fundamental classes of perfect graphs while  the book of Brandst\"{a}dt  et al. \cite{brandstadt1999graph} gives an extensive overview of different classes of graphs.
By the seminal work of Gr{\"o}tschel et al. 
\cite{GrotschelLS81}, the weighted versions of  \mis, \textsc{Maximum Clique},  \textsc{Coloring}, and \textsc{Minimum Clique Cover} are solvable in polynomial time on perfect graphs. 
 There are two natural research directions in this area   extending the limits of tractability. One direction  is to identify  graph classes beyond perfect graphs,  where a specific problem like \mis, can still be solved efficiently. The second   direction is to identify more general problems which still can be solved in polynomial time on subclasses of perfect graphs. 
%

As an  example, let us take   \mif\footnote{In the literature, the complementary minimization problem of deleting the minimum number of vertices such that the remaining graphs has no cycles, is known as \mfvs. Since from exact algorithms perspective maximization and minimization versions are equivalent, we will be discussing mostly maximization problems.}, which can be seen as a natural extension of \mis, where instead of maximum edgeless graph one is seeking for a maximal acyclic graph.  It easy to notice that the problem is NP-complete being restricted to bipartite, and thus to perfect, graphs.  On the other hand,  for other classes of graphs the problem is solvable in polynomial time. 
Yannakakis and Gavril \cite{YannakakisG87} have shown how to find in polynomial time  a maximum induced forest and tree on chordal graphs.
In fact, they show polynomial time solvability of more general problem  of finding maximum and connected maximum $k$-colorable subgraphs in chordal graphs, where $k$ is a constant. When $k$ is a part of the input, they showed that on chordal graphs both problems are NP-compete. 
Other graph classes where \mif{}  was known to be solvable in polynomial time include  circle {$n$}-gon graphs, circle trapezoid, circle graphs, and bipartite chordal graphs \cite{Gavril08,Gavril11,KloksKP12}. 
The containment relations between these classes of graphs is given in Fig~\ref{fig:graph_classes}. 
  \begin{figure} 
\begin{center}
\includegraphics[scale=.42]{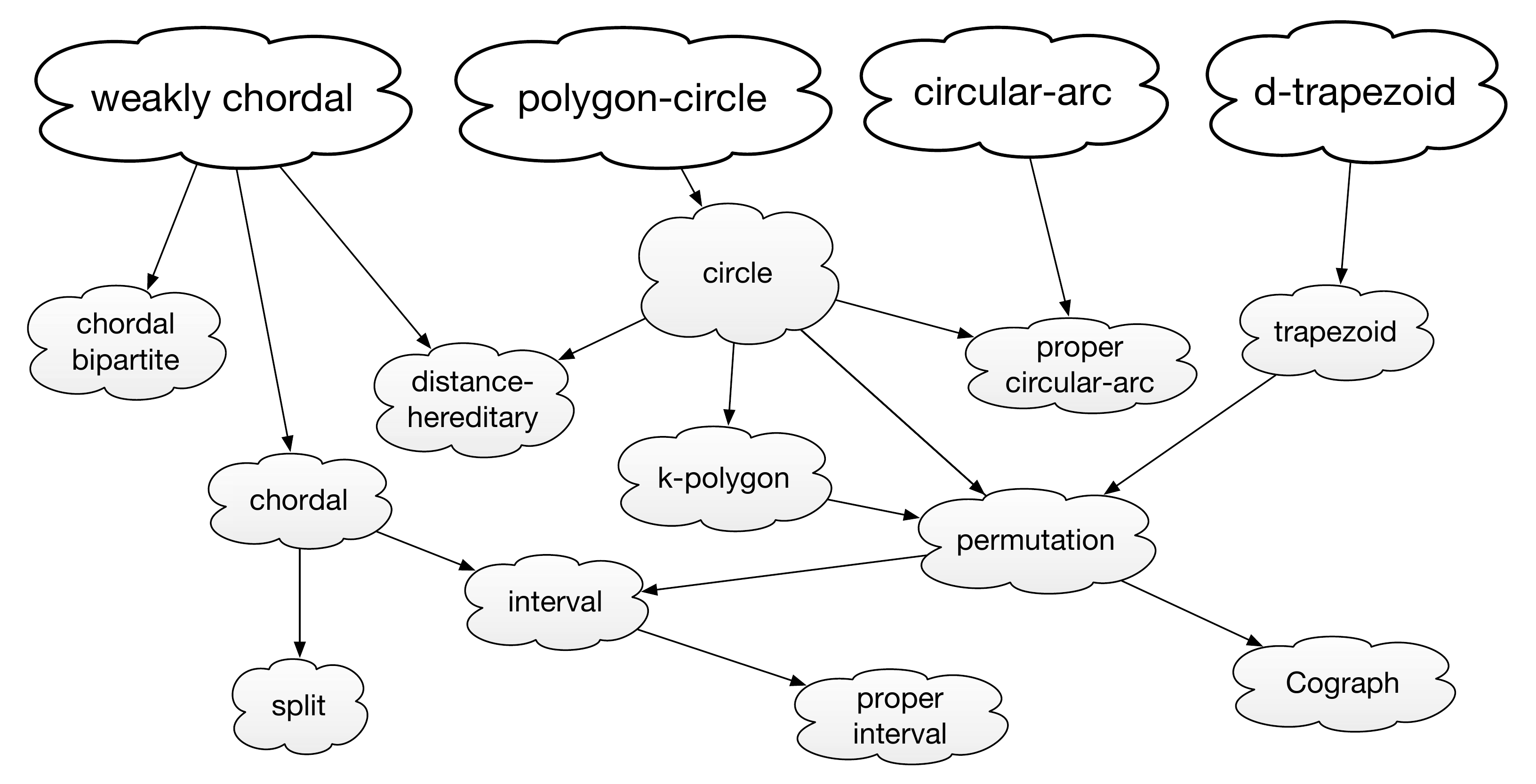}
\caption{Graph classes with a polynomial number of potential maximal cliques. }\label{fig:graph_classes}
\end{center}
\end{figure}
According to the database \texttt{http://www.graphclasses.org} on special graph classes the complexity of (weighted) \mif{} on weakly chordal is open.  

%
%
  Another example of a well-studied problem on special graph classes is \mim. Here the task is to find a maximum induced subgraph such that every connected component of this graph is an edge. 
 The complexity of this problem on different graph classes was investigated in     \cite{Cameron89,CameronST03,Chang03,GolumbicL00}.    
Cameron and Hell in \cite{CameronH06} introduced the following   generalization of \mim.
Let $\mathcal{H}$ be a finite set of connected graphs. An $\mathcal{H}$-packing of a given graph $G$ is a pairwise vertex-disjoint set of subgraphs of $G$, each isomorphic to a member of $\mathcal{H}$. An independent $\mathcal{H}$-packing of a given graph $G$ is an $\mathcal{H}$-packing,
i.e. a set of pairwise vertex-disjoint set of subgraphs of $G$, each isomorphic to a member of $\mathcal{H}$, such that 
  no two subgraphs of the packing are joined by an edge of $G$. The task is to find 
 the maximum number of graphs contained in an independent $\mathcal{H}$-packing. For example, when 
  $\mathcal{H}$ consists of $K_1$ this is \mis, and when $\mathcal{H}=\{K_2\},$ this is \mim.
  It has been shown in \cite{CameronH06}  that for many graph classes  including weakly chordal and  polygon-circle graphs, $\mathcal{H}$-packing is solvable in polynomial time.

\medskip\noindent\textbf{Exact exponential algorithms.} The second  application of our   results can be found in the area of exact exponential algorithms.
The area of exact exponential algorithms is about   solving  intractable  problems  faster than the trivial exhaustive search, though still in exponential time  \cite{FominKratschbook10}.  While for any graph property $\pi$ testable  in polynomial time, the problem of finding a maximum induced subgraph with property $\pi$ is trivially solvable in time  $2^n n^{\cO(1)}$, for several fundamental problems much faster algorithms are known. A longstanding open  question in the area is  if \textsc{Maximum Induced   Subgraph with Property $\pi$} can be solved faster than  the trivial $\cOs(2^n)$ for 
 every hereditary property $\pi$ testable  in polynomial time.

For the simplest property $\pi$, being edge-less, the corresponding maximum induced subgraph problem is \mis. 
A significant amount of  research was also devoted to algorithms for this problem starting from the classical work of Moon and Moser \cite{MoonM65} (see also  Miller and Muller \cite{MillerMuller60}) from the 1960s  
\cite{TarjanTrojanowski77,Jian86,Robson86,FominGK09-acm,BourgeoisEPR12,KneisLP09}. To the best of our knowledge, the fastest known algorithm of running time  $\cO(1.2109^n)$ is due to Robson \cite{Robson86}. For \mif{} an algorithm of running time $\cO(1.7548^n)$ was known \cite{FominGPR08-On}. This result was   improved and generalized by a  subset of the authors, who have shown that for any fixed $t$, the maximum induced subgraph of treewidth at most $t$ can be computed in time  $\cO(\npmc^n)$ \cite{Fomin:2010ys}.
   There is also a relevant work of  Gupta et al. \cite{GuptaRS12}
 who gave algorithms for 
 \textsc{Maximum Induced Matching} and \textsc{Maximum 2-Regular Induced Subgraph}, with running times  time 
 $\cO(1.695733^n)$ and $\cO(1.7069^n)$, respectively.  
  
   \medskip
 Our main theorem is based on developments from  two  research areas: the theory of minimal triangulations and logic. 
   
   \medskip\noindent\textbf{Minimal triangulations.} A triangulation of a graph $G$ is a chordal (no induced cycle of length at least four) supergraph 
   of $G$. A triangulation $H$ of $G$ is minimal, if no proper subgraph of $H$ is a triangulation of $G$.  Triangulations are closely related to 
   fundamental problems arising in sparse matrix computations which were  studied intensively in the past
  \cite{Parter61,Rose72}.   The survey of Heggernes    \cite{Heggernes06} gives an overview of techniques and applications of   minimal triangulations. It appeared in 1990s that minimal separators play important role in obtaining minimal triangulations with certain properties. Techniques based on minimal separators were used to obtain polynomial algorithm computing the treewidth and  minimum fill-in for different classes of graphs  \cite{Bodlaender:1995fk,KloksKW98,Kloks:1997uq}. These results were extended by Bouchitt\'e and Todinca in 
  \cite{BoTo01,BoTo02}, who also introduced the notion of a \pmc, which is a set of vertices of a graph that is a clique in some minimal triangulation. Potential maximal cliques appeared to be a handy tool for computing the treewidth of  a graph  \cite{FKTV08,Fomin:2012kx}.
 Recently     \pmc{} based machinery was used to obtain a subexponential parameterized algorithm finding a minimum fill-in of a graph \cite{fomin2012subexponential}. The work which is most relevant to our results is the work of a  subset of the authors   \cite{Fomin:2010ys}, where \pmc s were used to find maximum induced subgraphs of treewidth at most $t$. We build on the previous techniques exploiting  the structure of minimal triangulations, minimal separators and \pmc s but  to use   the   framework of minimal triangulations in full generality, we have to combine it with the powerful tools from logic.
 
\medskip\noindent\textbf{Algorithmic applications of logic.} Algorithmic meta-theorems are algorithmic results which can be applied  
  to large  families of combinatorial problems, instead of just specific problems.
 Such theorems provide  a better understanding of the scope of general algorithmic
techniques and the limits of tractability.
 Usually meta-theorems are based on the deep relations
between logic and combinatorial structures, which is a fundamental issue
of computational complexity \cite{Grohe07logic,Kreutzer2011}.
  A typical example of a meta-theorem is the  celebrated Courcelle's theorem
\cite{Courcelle92a} which states that all graph
properties definable in Monadic Second Order Logic   can be decided
in linear time on graphs of bounded treewidth. More recent
examples of such meta-theorems state that all first-order 
definable properties on planar graphs can be decided in linear
time~\cite{FrickG01-dec}, that all first-order definable optimization problems
on classes of graphs with excluded minors can be approximated in
polynomial time to any given approximation ratio \cite{DawarGK07},  and that all parameterized problems with finite integer index and additional ``compactness" or ``bidimensional" combinatorial property, admit linear kernels on planar graphs  \cite{H.Bodlaender:2009ng,F.V.Fomin:2010oq}.
 As it often happens with  meta-theorems, a combination of   logic and graph theory not only give a uniform 
explanation to tractability  of many algorithmic problems  but also
provide a variety of new results.
There are several extensions of Courcelle's theorem known in the literature. In particular, for a counting variant of MSO, Counting Monadic Second Order Logic (CMSO),  where we are allowed to have sentences   testing if a 
set is equal 
to $q$ modulo $r,$ for some integers $q$ and $r$, and analogue of Courcelle's theorem was obtained by Borie et al. \cite{BPT92}
and Lagergren and  Arnborg \cite{Lagergren:1991vn}. 
 Our proof is using the framework of Borie et al. \cite{BPT92}.
 


\medskip\noindent\textbf{Our results.}  
A \emph{property $\cP(G,X)$} on graphs, where $G$ is a graph and $X$ is a vertex subset of its vertices,   associates to each graph $G$ and each vertex subset $X$ of $G$ a boolean value. Borie et al. \cite{BPT92} defined \emph{regular properties},  which  definition we postpone till the next section. For all our applications, we need only the fact from Borie et al. \cite{BPT92} that 
every  property $\cP(G,X)$ expressible by a CMSO-formula is regular. Then our result can be stated as follows.  
 Let 
$\varphi$ be   a  CMSO-formula,  $G=(V,E)$ be a graph,  and $t\geq 0$ be an integer.  We consider the following optimization problem
\begin{equation}\label{eq:opt_phi} 
\begin{array}{ll}
\mbox{Max}  &  |X|   \\
\mbox{subject to } &  \mbox{There is a set }   F\subseteq V    \mbox{ such that }  X\subseteq F;       \\
 &  \mbox{The treewidth of   }  G[F]    \mbox{ is at most }  t ;     \\
 &  (G[F],X)\models\varphi.  
\end{array}
\end{equation}
 For example, \mis{} can be encoded by \eqref{eq:opt_phi} by taking $t=0$, and $\varphi$ expressing  that  $X=F$ and the  absence of edges in $G[F]$. For another example, consider \textsc{Independent Cycle Packing}, where the task is to find an induced subgraph with maximum number of connected components such that each component is a  cycle. In this case,   $t=2$ and $\varphi$ expresses the    property that each connected component is a cycle and that $X$ is a  set  of vertices containing exactly one vertex from each cycle. 
 
Let $\Pi_G$ be the set of all potential maximal cliques in  $G$.  Our main result is that  \eqref{eq:opt_phi} is solvable in time 
$O(|\Pi_G| \cdot |V|^{t+4}\cdot f(t,\varphi))$ for some function $f$. Moreover, within the same running time one can find the corresponding sets $X$ and $F$. Also it is easy to extend our algorithm to solve within the same running time  weighted and annotated versions of \eqref{eq:opt_phi}. 
  
Many well studied 
 graph classes have the following property: there is a polynomial function $p$, depending only on the graph class, such that for every graph $G$ from the class, the number of potential maximal cliques in $G$  is at most $p(n)$, see Fig~\ref{fig:graph_classes} for examples of such classes. Moreover, if the number of \pmc s in a graph is bounded by some polynomial of $n$, then all \pmc s can be enumerated in polynomial time \cite{BoTo02}. 
Our algorithm implies directly that 
 every problem expressible in the form of  \eqref{eq:opt_phi} is solvable in polynomial time on such graph classes. We discuss in details the bounds  on the number of \pmc s for different graph classes in Section~\ref{sec:graph_classes}.
  Interestingly enough,  while recognition of several of  graph classes, like polygon-circle or $d$-trapezoid, can be  NP-complete, our algorithm is still able either to solve the problem, or to report that the input graph does  not belong to the specified graph class. Such algorithms were called \emph{robust} by  Raghavan and Spinrad~\cite{RaghavanS03}. To the best of our knowledge, very few  robust algorithms were known in the literature prior to our work.

  Another direct consequence of our algorithm is that because  every $n$-vertex  graph has    $\cO(\npmc^n)$  \pmc s \cite{Fomin:2010ys},    many  intractable problems concerning maximum induced subgraphs with different properties expressible  in the form of  \eqref{eq:opt_phi}, can be solved significantly faster than by the  trivial  $\cO(2^n)$-time  brute-force algorithm. We are not aware of any algorithmic meta-result of this flavor in the area of exact algorithms.


\medskip
We mention below the most interesting  special cases of the optimization problem \eqref{eq:opt_phi}.
Each of these special cases  contains various problems as a special subcase, we discuss subcases after introducing each of the problems. 
For some of these cases, expressibility in the form of  \eqref{eq:opt_phi} is trivial but for some it is non-obvious and requires deep results from Graph Theory. We discuss these issues  in more details in Section~\ref{se:applications}.

   \medskip
  Let   $\cF_m$  be the set of cycles of length  $0\imod{m}$.
   Let  $\ell \geq 0$ be an  integer.  Our first example is the following  problem.
   
   \medskip
\defproblem{\amims{}}{A graph $G$.}{Find a  set $F\subseteq V(G)$ of maximum size such that  
   $G[F]$ contains at most $\ell$ vertex-disjoint cycles  from ${\cal F}_m$.}
 \medskip 
  
   \amims{} encompasses several interesting problems.
   For example, when $\ell=0$, the problem is to find a maximum induced subgraph without cycles divisible by $m$. For $\ell=0$ and $m=1$ this is \textsc{Maximum Induced Forest}. 
   
   For  integers $\ell \geq 0 $ and $p\geq 3$, the problem related to \amims{} is the   following.
   
   \medskip
\defproblem{\mislkc{}}{A graph $G$.}{Find a  set $F\subseteq V(G)$ of maximum size such that  
   $G[F]$ contains at most $\ell$ vertex-disjoint cycles  of length at least $p$.
   }
   
   \medskip 
   
Next example concerns properties described by forbidden minors. 
    Graph $H$ is a \emph{minor} of graph $G$   if $H$ can be obtained from a
subgraph of $G$ by a (possibly empty) sequence of edge contractions.  A \emph{model} $M$ of minor $H$ in $G$ is a minimal subgraph of $G$, where the edge set $E(M)$ is partitioned into \emph{c-edges (contraction edges)} and \emph{m-edges (minor edges)} such that the graph resulting from contracting all c-edges is isomorphic
 to $H$.
Thus, $H$ is isomorphic to a minor of $G$ if and only if there exists a model
of $H$ in $G$. For  an integer $\ell$ a finite set of graphs ${\cal F}$, we define he following generic problem.
 
\medskip


 \defproblem{\mislf{} }{A graph $G$.}{Find a set  $F \subseteq V(G)$
  of maximum size such that  $G[F]$ contains at most $\ell$ vertex disjoint  minor models of  graphs  from
    ${\cF}$.}

Even the special case with $\ell=0$, this problem and its complementary version called the \fd, encompass many different problems.
In the literature, the case $\ell=0$ was studied from parameterized and approximation perspective \cite{FominLMS12}.
%
%
%

When ${\cal F}=\{K_2\}$, a complete graph on two
vertices, this is   {\sc Maximum Independent Set}, the  problem complementary to the {\sc Minimum Vertex Cover} problem. When ${\cal F}=\{C_3\}$, a cycle on three
vertices, this is {\sc Maximum Induced Forest}.
Case $\cF=\{K_4\}$ of \mifd{} corresponds to maximum induced serial-parallel graph, $\cF=\{K_4, K_{2,3}\}$ to maximum induced  outerplanar, and case when  $\cF$ consists of a diamond graph, which is $K_4$ minus one edge, is to find a maximum induced cactus subgraph. Maximum induced 
pseudo-forest is the case of $\cF$ containing  the diamond and butterfly graphs, which is obtained by identifying one vertex of two triangles. Maximum  
 Apollonian graph corresponds to the situation with $\cF$ consisting of  the complete graph $K_5$, the complete bipartite graph $K_{3,3}$, the graph of the octahedron, and the graph of the pentagonal prism. 
A fundamental problem, which is a special case of  \fd{}, is 
 {\sc Minimum Treewidth $\eta$-Deletion} or  {\sc
$\eta$-Transversal} which is  to delete
minimum vertices to obtain a graph of treewidth at most $\eta$. Since by the result of Robertson and Seymour \cite{RobertsonS3} every graph of treewidth $\eta$ excludes a 
$(\eta+1)\times (\eta+1)$ grid as a minor, we have that the set $\cal F$ of forbidden minors of treewidth 
$\eta$ graphs contains a planar graph. 
Similarly, for $\ell>0$, \mislf{} generalizes problems like finding a maximum induced subgraph containing at most $\ell$ vertex-disjoint cycles,  at most $\ell$ vertex-disjoint outerplanar graphs, at most $\ell$ vertex-disjoint subgraphs of treewidth $t$, etc. 
 For some graph classes, like circular-arc and weakly chordal, we show that even more general cases of 
\fd, when   $\cF$ is not requested to contain a planar graph, are still solvable in polynomial time.

%
  
  \medskip

   Let $t\geq 0$ be an integer and $\varphi$ be a CMSO-formula. 
  Let $\mathcal{G}(t,\varphi)$ be a class of connected graphs of treewidth at most $t$ and with property expressible by $\varphi$.
Our next example is  the following problem.
%


%

\defproblem{\igp{}}{A graph $G$.}{Find a  set $F \subseteq V(G)$ with  maximum number of connected components such that  
   each connected component of $G[F]$ is in $\mathcal{G}(t,\varphi)$.}

\medskip

 In other words, the task is to find a maximum  vertex-disjoint packing in $G$ of  subgraphs  from   $\mathcal{G}(t,\varphi)$ such that  no two subgraphs of the packing are joined by an edge of  $G$.
 This problem trivially generalizes several well studied problems. For example,  \mim{} is to find a maximum induced matching which was studied intensively for different graph classes.  Similarly, when class $\mathcal{G}(t,\varphi)$ consists of one graph $K_3$, 
 then \migp{} is induced triangle packing.
 This problem, under the name \textsc{Independent Triangle Packing} was studied by Cameron and Hell \cite{CameronH06}. Recall that Cameron and Hell defined more general problem, namely,  \textsc{Independent $\mathcal{H}$-Packing}, 
  where for a finite set of connected graphs $\mathcal{H}$, the task is to find a maximum number of disjoint copies of graphs from $\mathcal{H}$ such that there is no edges between the copies. Since every finite set of graphs is trivially   in $\mathcal{G}(t,\cP)$ for some $t$ and $\cP$,   \textsc{Independent $\mathcal{H}$-Packing} is a special case of \igp.
  Another studied variant of the problem is \textsc{Induced Packing of Odd Cycles} introduced by Golovach et al.  in  \cite{GolovachKPT12}, where the task is to find the maximum number of odd cycles such that there is no edge between any pair of cycles. 
  
%

The next problem is an example of annotated version of optimization problem  \eqref{eq:opt_phi}.

\defproblem{\kig{}}{A graph $G$,  with $k$ terminal vertices.}{Find   an induced graph from   $\mathcal{G}(t,\varphi)$ containing all $k$ terminal vertices.}

\medskip
It is also easy to handle variants of this problem where terminal vertices have specific properties, like being the endpoints of the path if  $\mathcal{G}(t,\varphi)$ is the class of paths. 
Many variants of \kig{} can be found in the literature, like  \kip,\ \kit, \kic.
 \kip{} is clearly solvable in polynomial time for $k=2$. For $k=3$ the problem is NP-complete already on graph of maximum vertex degree at most three \cite{DerhyP09}. Bienstock \cite{Bienstock91} have shown that the following cases of \kig{} are NP-hard:
finding an induced odd cycle of length greater than three, passing through a prescribed vertex and 
finding an induced odd path between two prescribed vertices.  Polynomial time algorithms for   the odd path problem are known for several graph classes, including chordal  \cite{ArikatiP93} and  circular-arc graphs \cite{ArikatiRM91}. 
Chudnovsky and Seymour have shown that  \kit{}   for $k=3$ is solvable in polynomial time \cite{ChudnovskyS10}. The complexity of the case $k=4$ is open. 

\medskip 

Let us remark that because of the power of CMSO, different modifications of the problems mentioned above, with additional requirements   on the  induced subgraph like being connected,   constrains on vertex degree and parities of connected components,  can be easily handled. 
\section{Preliminaries}
We denote by $G=(V,E)$ a finite, undirected and simple graph
with $|V|=n$ vertices and $|E|=m$ edges.
Sometimes  the vertex set of a graph $G$ is referred to as $V(G)$ and its edge set as $E(G)$.  
A clique $K$ in $G$ is a set of pairwise adjacent vertices of $V(G)$. 
The \emph{neighborhood} of a vertex $v$ is
$N(v)=\{u\in V:~\{u,v\}\in E\}$.
For a vertex set $S\subseteq V$ we denote by $N(S)$ the set $\bigcup_{v \in S} N(v)\sm S$.

The notion of
treewidth is due to Robertson and Seymour \cite{RobertsonS3}. A {\em
tree decomposition} of a graph $G=(V,E)$, denoted by $TD(G)$, is a
pair $({X}, T)$, where   $T$ is a tree and ${X}=\{{X}_i
\mid i\in V(T) \}$ is a family of subsets of $V$, called  \emph{bags},   such that
\begin{itemize}
\item[(i)] $\bigcup_{i\in V(T)}{X}_i= V$,
\item[(ii)]  for each edge $e=\{u,v\} \in E(G)$ there  exists   $i\in V(T)$ such that both $u$ and $v$
are in  ${X}_i$, and
\item[(iii)]  for all $v\in V$, the set of nodes
$\{i\in V(T) \mid v \in {X}_i\}$ induces a connected subtree of $T$.
\end{itemize}
The maximum  of $|{X}_i|-1$, $i\in V(T)$,  is called the {\em width} of the
tree decomposition. The {\em treewidth} of a graph $G$, denoted by $\tw(G)$,
is the minimum width taken over all tree decompositions of $G$.

\medskip
\noindent
\textbf{Counting Monadic Second Order Logic.} 
We use Counting Monadic Second Order Logic (CMSO), an extension of MSO, as a basic tool to express properties of vertex/edge sets in graphs. 
 \smallskip

The syntax of Monadic Second Order Logic (MSO) of graphs includes the logical connectives $\vee,$ $\land,$ $\neg,$ 
$\Leftrightarrow ,$  $\Rightarrow,$ variables for 
vertices, edges, sets of vertices, and sets of edges, the quantifiers $\forall,$ $\exists$ that can be applied 
to these variables, and the following five binary relations: 
\begin{enumerate}

\item 
$u\in U$ where $u$ is a vertex variable 
and $U$ is a vertex set variable; 
\item 
 $d \in D$ where $d$ is an edge variable and $D$ is an edge 
set variable;
\item 
 $\mathbf{inc}(d,u),$ where $d$ is an edge variable,  $u$ is a vertex variable, and the interpretation 
is that the edge $d$ is incident with the vertex $u$; 
\item 
 $\mathbf{adj}(u,v),$ where  $u$ and $v$ are 
vertex variables  and the interpretation is that $u$ and $v$ are adjacent; \item 
 equality of variables representing vertices, edges, sets of vertices, and sets of edges.
\end{enumerate}

 In addition to the usual features of monadic second-order logic, if we have atomic sentences testing whether the cardinality of a set is equal 
to $q$ modulo $r,$ where $q$ and $r$ are integers such that $ 0\leq q<r $ and $r\geq 2,$ then 
this extension of the MSO is called the {\em counting monadic second-order logic}. So essentially CMSO 
is MSO with the following atomic sentence for a set $S$: 
\begin{quote}
$\mathbf{card}_{q,r}(S) = \mathbf{true}$ if and only if $|S| \equiv q \pmod r.$ 
\end{quote}
We refer to~\cite{ArnborgLS91,Courcelle90,Courcelle97} and the book of Courcelle and Engelfriet~\cite{CoEn12} for a detailed introduction on CMSO. In~\cite{CoEn12}, the CMSO is referred to as $CMS_2$.
%

\subsection{Treewidth, $t$-terminal recursive graphs and regular properties}

We use one of the (many) alternative definitions of treewidth, based on \emph{terminal graphs}.
A \emph{$t$-terminal graph $G=(V,T,E)$} is a graph with an ordered set $T \subseteq V$ of at most $t$ distinguished vertices, called \emph{terminals}. Denote by $\tau(G)$ the number of terminals of graph $G$.

A $t$-terminal graph $(V,T,E)$ is a \emph{base graph}  if $V = T$. We define \emph{composition operations} over the set of $t$-terminal graphs. A composition operation $f$ is of arity $1$ or $2$. When $f$ is of arity $2$, it acts on two $t$-terminal graphs $G_1, G_2$ and produces a $t$-terminal graph $G=f(G_1, G_2)$ as follows.  It first makes disjoint copies of the two graphs, then ``glues'' some terminals of graphs $G_1$ and $G_2$. 
Operation $f$ is represented by a matrix $m(f)$. The matrix has 2 columns and $\tau(G) \leq t$ lines, its values are integers between $0$ and $t$. At line $i$ of the matrix, elements $m_{ij}(f)$ indicate which terminals of graphs $G_j$ are identified to terminal number $i$ of $G$. If $m_{ij}(f) = 0$ it means that no terminal of $G_j$ was identified to terminal number $i$ of $G$. A terminal of $G_j$ can be identified to at most one terminal of $G$ (a column $j$ cannot contain two equal, non-zero values). Note that if $m_{i1}(f)=0$ and $m_{i2}(f)=0$ it means that terminal  $i$ of $G$ is a new vertex.

When $f$ is of arity 1, its matrix $m(f)$ has only one column. The $t$-terminal graph $G = f(G_1)$ is obtained from graph $G_1$ and matrix $m(f)$ as above, by identifying terminal $m_{i1}(f)$ to terminal number $i$ in $G$. 

Observe that the number of possible composition operations over $t$-terminal graphs is bounded by some function of $t$.
We say that a $t$-terminal graph $G$ is \emph{$t$-terminal recursive} if it can be obtained from $t$-terminal base graphs through a sequence of composition operations. This sequence is called the \emph{$t$-expression} of graph $G$. 

\begin{proposition}[\cite{Bodlaender98}]\label{pr:tw_expr}
For any $(t+1)$-terminal recursive graph $H=(V,T,E)$, there is a tree decomposition of $(V,E)$  of width at most $t$, with a bag containing $T$.
Conversely, for any tree decomposition of width $t$ of graph $G=(V,E)$ and any bag $W$ of the decomposition, $(V,W,E)$ is a $(t+1)$-terminal recursive graph.
\end{proposition}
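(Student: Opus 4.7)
\medskip
\noindent\textbf{Proof plan.} My plan is to prove both directions by structural induction. For the forward implication, I would induct on the $t$-expression of $H$. The base case is immediate: if $H=(V,T,E)$ is a base graph then $V=T$ has at most $t+1$ vertices, so a single-bag decomposition with bag $V$ will suffice. For a composition $H=f(H_1,H_2)$, I would apply the inductive hypothesis to obtain tree decompositions $TD_1,TD_2$ with bags $W_1\supseteq T_1$ and $W_2\supseteq T_2$; I would then rename the copies of identified terminals throughout $TD_1,TD_2$ to their common image in $H$ according to the matrix $m(f)$, create a new bag $W=T$, and attach it to both $W_1$ and $W_2$. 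Vertex and edge coverage transfer from the two inductive hypotheses since composition introduces no new edges, subtree-connectivity is preserved because every vertex identified across the two components lies in $W_1$, $W_2$, and the new bag $W$, and $|W|\leq t+1$ maintains the width bound. The unary case is analogous.

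For the converse direction, I would first fix an assignment of each edge of $G$ to exactly one bag containing both its endpoints, then root the decomposition tree at the node carrying $W$ and proceed by bottom-up induction, associating to every node $i$ a $(t+1)$-terminal recursive graph $H_i$ with terminal set $X_i$, underlying vertex set equal to the union of the bags in the subtree at $i$, and underlying edge set equal to the edges assigned to bags in that subtree. Leaves receive the base graph on $X_i$ equipped with the edges assigned to $i$. At an internal node with children $c_1,\dots,c_k$, I would start from the base graph on $X_i$ with its assigned edges and iteratively compose with each $H_{c_j}$ via a binary operation that identifies the vertices of $X_i\cap X_{c_j}$ in the two summands and whose output matrix declares $X_i$ to be the new terminal set (thus demoting the vertices of $X_{c_j}\setminus X_i$ to non-terminals). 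The subtree-connectivity property of the decomposition will ensure that vertices appearing below $c_j$ but outside $X_{c_j}$ occur nowhere else, so the disjoint-copies step in each composition does not clash on non-terminal vertices; at the root, $H_{\text{root}}$ will have terminal set $W$ and underlying graph $G$, as required.

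The main subtlety I expect to handle is the bookkeeping in the converse direction: because composition operations are at most binary and terminals can only be forgotten through them, every internal node with $k$ children has to be unfolded into a sequence of $k$ binary compositions, and the edge-to-bag assignment has to be fixed a priori so that each edge of $G$ is introduced exactly once. Once these conventions are settled, the two inductions become routine verifications.
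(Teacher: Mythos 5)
Your proof is correct, and the forward direction is essentially the construction the paper itself uses: the bags of the tree decomposition are exactly the terminal sets of the sub-expressions of the $(t+1)$-expression, which is what your inductive step produces node by node. For the converse the paper simply cites Theorem~40 of Bodlaender's survey, whereas you supply the standard argument in full (rooting the decomposition at the bag $W$, fixing an edge-to-bag assignment so each edge is introduced once, and unfolding each node into a chain of binary compositions whose matrices keep $X_i$ as terminals and forget $X_{c_j}\setminus X_i$); this is a faithful reconstruction of that proof, and your appeal to subtree-connectivity to justify that the disjoint-copies step never duplicates a vertex is precisely the point that needs checking.
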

\begin{proof}
Assume that $(V,T,E)$ can be obtained recursively, through composition operations, from $(t+1)$-terminal base graphs. The expression constructing this graph can be represented as a tree, the leaves being the base graphs, each internal node corresponding to a composition operation. The tree decomposition of $G$ is simply obtained by following this tree and putting, in each node, a bag corresponding to the terminals of the graph represented by the corresponding sub-expression. The bags are clearly of size at most $t+1$.  One can easily check that the set of bags satisfies the conditions of a tree decomposition. 

The other direction is proved in~\cite{Bodlaender98}, Theorem~40.
\end{proof}


Consider a \emph{property $\cP(G,X)$} on graphs  depending on a vertex subset $X$. That is,  property $\cP$ associates to each graph $G$ and each vertex subset $X$ of $G$ a boolean value. 
By the celebrated results of~\cite{Courcelle90,ArnborgLS91,BPT92}, it is well-known that if the property can be expressed by a CMSO-formula, there exists a linear-time algorithm taking as input a $(t+1)$-terminal recursive graph $G=(V,T,E)$ and computing a maximum (or minimum) size vertex set $X$ such that $\cP(G,X)$. Many natural problems like \textsc{Maximum Independent Set} or \textsc{Minimum Dominating Set} can be expressed in this setting. 

Typical algorithms for such problems proceed by dynamic programming. When browsing the $(t+1)$-expression of $G$, the algorithm stores in each node a table of \emph{classes} (sometimes called \emph{characteristics}) depending on the branch of the current sub-expression and the partial solutions (i.e., possible subsets of $X$) encountered so far. Let $G_1$ be such a sub-expression and let $X_1$ be a subset of vertices that we aim to extend into the solution $X$. The intuition is that if the class of $(G_1,X_1)$ is the same as the class of some other pair $(G_2,X_2)$, then we can replace the branch of $G_1$ by an expression of $G_2$, and the new graph $G'$ is such that $X_1$ extends into a solution $X_1 \cup Y$ of $G$ if and only if $X_2$ extends into a solution $X_2 \cup Y$  of $G'$. 

In order to efficiently solve our problem, we need an efficient computation of classes for base graphs, as well as an efficient computation of the classes for compositions of graphs and partial solutions. 

%

We give a formal definition of these ``good'' properties; the vocabulary is inspired by Borie \textit{et al.}~\cite{BPT92}.


Let now $G=(V,T,E)$ be a $(t+1)$-terminal recursive graph. 
For any composition operation $f$, let $\circ_f$ denote the composition operation over pairs $(G,X)$, where $f$ extends in a natural way over the values of vertex sets. If $G = f(G_1)$ then $ \circ_f((G_1,X)) = (G,X)$. 
If $G = f(G_1,G_2)$ then $ \circ_f((G_1,X_1),(G_2,X_2)) = (G,X)$, the operation being valid only if, for each terminal of $G$, either we have mapped terminals from both $G_1$ and $G_2$, contained in both $X_1$ and $X_2$, or we have not mapped any terminal belonging to $X_1$ or $X_2$. Then $X$ is obtained from  $X_1$ and $X_2$ by merging those vertices corresponding to terminals that have been mapped on a same terminal of $G$.

\begin{definition}[Regular Property]\label{de:regular}
Consider a property $\cP(G,X)$ over graphs and corresponding vertex subsets. Property $\cP$ is called \emph{regular} if, for every $t$, there exists a finite set $\cC$, a homomorphism $h$ associating to each $(t+1)$-terminal recursive graph $G$ and every  $X \subseteq V(G)$ a class $h(G,X) \in \cC$,  and an \emph{update function} $\odot_f: \cC \times \cC \rightarrow \cC$ for each composition operation $f$ of arity $2$ (resp. $\odot_f: \cC\rightarrow \cC$ for each composition operation $f$ of arity $1$), satisfying:
\begin{itemize}
\item (property $\cP$ is preserved) If $h(G_1,X_1) = h(G_2,X_2)$ then  $\cP(G_1,X_1)=\cP(G_2,X_2)$.
\item (integrity of operations) For any composition operation $f$, we have that $$h(\circ_f((G_1,X_1),(G_2,X_2))) = \odot_f(h(G_1,X_1),h(G_2,X_2))$$ if $f$ is of arity $2$, and $$h(\circ_f(G_1,X_1)) = \odot_f(h(G_1,X_1))$$ if $f$ is of arity $1$.
\end{itemize}
\end{definition}
We point out that the homomorphism class $h(G,X)$ depends on $G$ and on the value of $X$. 
Typically the class of $h(G,X)$ encodes, among other informations, the intersection of $X$ with the set of terminals. 
For example, if the composition operation $\circ_f((G_1,X_1),(G_2,X_2))$ is not valid, then
$\odot_f(c_1,c_2)$, where $c_1$ and $c_2$ are the respective homomorphism classes of $(G_1,X_1)$ and of $(G_2,X_2)$, is also undefined.

Note that for any fixed $t$ and any regular property $\cP$, the number of classes is constant. Nevertheless, this constant depends on $t$ and on the property $\cP$. For algorithmic purposes, given $t$ and $\cP$, we need an explicit algorithm computing the homomorphism class of a given base graph, and an algorithm computing the update functions $\odot_f$. I.e., we need an algorithm that takes as input a composition operation $f$ and one or two classes $c_1, c_2 \in \cC$ and computes the class $\odot_f(c_1, c_2)$ if $f$ is of arity 2 (resp. $\odot_f(c_1)$ if $f$ is of arity 1). Eventually, we must know the set of \emph{accepting classes}, that is the set of classes $c$ such that $h(G,X)=c$ implies that $\cP(G,X)$.

As an example, consider the property $3COL(G,X)$ which is true only if $G[X]$ is 3-colourable. We show that it is regular. 
Let $P_3(t)$ be the set of partitions of subsets of $\{1, 2 \dots, t+1\}$ into three parts. The set $\cC$ of homomorphism classes is  $P_3(t)$. Consider a $(t+1)$-terminal recursive graph $G=(V,T,E)$ and let $X \subseteq V$. 
For each 3-partition $(X_1, X_2, X_3)$ of the vertex subset $X$ into three independent sets, let $p(X_1,X_2,X_3) \in P_3(t)$ be the $3$-partition of $T \cap X$ corresponding to $(T \cap X_1, T \cap X_2, T \cap X_3)$; here, for $T \cap X_i$, we only keep the ranks of the terminals  of $T \cap X_i$ in the ordered set $T$. The class $h(G,X)$ will be $\{ p(X_1, X_2, X_3) \mid (X_1, X_2, X_3) \text{~is a partition of~} V \text{~into three independent sets}\}$. In particular, the unique non-accepting class is $\emptyset$. It is not hard to see that, for fixed $t$, the class of every base graph can be computed in constant time, and that for any composition operation $f$ the update function $\odot_f$ exists and can also be computed in constant time. The number of classes is constant even though the number of subsets $X$ is arbitrarily large. 
When solving the problem $\max |X| : 3COL(G,X)$ on a $(t+1)$-terminal recursive graph $G$, we must store, in each node $u$ of the $(t+1)$-expression, for each class $c$, the size of the maximum vertex subset $X_u$ of the current graph $G_u$ such that $h(G_u,X_u)=c$. The overall solution is the maximum one among the accepting classes of the root node.

We say that a CMSO-formula $\varphi$ \emph{expresses} a property $\cP(G,X)$ if $\cP(G,X)$ is true if and only if $(G,X)$ models $\varphi$ (i.e., the formula is true exactly on graphs $G$ and vertex subsets $X$ such that $\cP(G,X)$ is true).

\begin{proposition}[Borie \textit{et al.}~\cite{BPT92}]\label{prop:borie}
Any property $\cP(G,X)$ expressible by a CMSO-formula is regular.
\end{proposition}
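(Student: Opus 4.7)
The plan is to prove this proposition via the classical Myhill--Nerode style approach, building a finite deterministic tree automaton $A_\varphi$ recognizing $\varphi$ on the $(t+1)$-expressions of pairs $(G,X)$; the states of $A_\varphi$ then witness that the canonical context equivalence on pairs has finitely many classes, which is exactly the structure required by Definition~\ref{de:regular}.

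First, I would define a context equivalence $\sim_\varphi$ on pairs: set $(G_1,X_1) \sim_\varphi (G_2,X_2)$ iff for every $(t+1)$-expression context $C[\cdot]$ compatible with both graphs and every extension $Y$ of vertices introduced in $C$, the pairs $(C[G_1], X_1 \cup Y)$ and $(C[G_2], X_2 \cup Y)$ simultaneously satisfy or falsify $\varphi$. Taking $\cC$ to be the set of equivalence classes and $h(G,X) := [(G,X)]_{\sim_\varphi}$ gives a well-defined homomorphism; compatibility of $\sim_\varphi$ with composition yields the update functions $\odot_f$ automatically, and property preservation follows by instantiating $\sim_\varphi$ with the trivial (empty) context. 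The only remaining task is to prove that $\cC$ is finite.

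To bound $|\cC|$, I would construct $A_\varphi$ by structural induction on $\varphi$ over the finite alphabet of $(t+1)$-terminal base graphs and composition operations, augmented with binary markings on the terminals recording $X$-membership (with membership of each non-terminal vertex fixed at the unique leaf where the vertex is introduced). Atomic formulas $\mathbf{adj}(u,v)$, $\mathbf{inc}(d,u)$, $u \in U$ reduce to local tests on the bounded terminal set; boolean connectives are handled by product and complement constructions; set quantifiers $\exists U.\psi$ are handled by projecting away the $U$-marking (yielding a nondeterministic automaton) followed by powerset determinization; and the CMSO modular atoms $\mathbf{card}_{q,r}(S)$ are handled by enriching states with $|S \cap V(G)|$ taken modulo $r$, updated at each binary composition by adding the two counters and subtracting the contribution of the terminals identified by the gluing that also belong to $S$ (an inclusion-exclusion correction bounded by the number of shared terminals). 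Two pairs whose expressions drive $A_\varphi$ into the same state are clearly context-equivalent under $\sim_\varphi$, so $|\cC|$ is bounded by the (finite) state count of $A_\varphi$.

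The principal obstacle is the set-quantifier step: projection destroys determinism, and the powerset construction needed to restore it causes a non-elementary blow-up in $|\varphi|$, which is precisely what is absorbed into the function $f(t,\varphi)$ of the main theorem. A secondary subtlety is the bookkeeping for the modular counting atoms across binary compositions, where terminals identified by the gluing must not be double-counted, and where the composition must be declared invalid when the markings of identified terminals disagree on $S$-membership; both are finite-state to handle because the number of terminals involved at each composition is at most $t+1$.
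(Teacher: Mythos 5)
The paper does not prove this proposition at all: it is imported verbatim from Borie, Parker and Tovey~\cite{BPT92}, and the surrounding text only records that the construction there is effective (explicit classes, accepting classes, and update functions). So there is no in-paper argument to compare against; what you have written is a compressed version of the classical proof from~\cite{BPT92} (equivalently, Courcelle's method): a syntactic congruence of finite index witnessed by a bottom-up tree automaton over $(t+1)$-expressions, with leaves carrying the $X$-markings, projection plus powerset determinization for set quantifiers, and mod-$r$ counters with an overlap correction on the at most $t+1$ glued terminals for the $\mathbf{card}_{q,r}$ atoms. As a sketch this is sound and matches what the cited reference actually does, including the point that the resulting constant is absorbed into $f(t,\varphi)$. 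Two things a full write-up must not skate over: (i) the claim that $\odot_f$ exists ``automatically'' rests on $\sim_\varphi$ being a congruence for every composition operation, which in turn needs the observation that plugging $f(\cdot,G_2)$ into a context yields another context (true by definition, but it is the load-bearing step, and it is also what lets you conclude that equal automaton states force equal $\sim_\varphi$-classes even though a single terminal graph has many expressions); and (ii) CMSO as defined here also has edge and edge-set variables, so the leaf markings must cover edge sets as well --- harmless, since in this composition calculus all edges live inside base graphs, but it should be said. Finally, note that Definition~\ref{de:regular} as used later in the paper additionally assumes each class determines $X\cap T$; your $\sim_\varphi$ need not, but the paper's standard class-splitting remark repairs this, so it is not a gap.
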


Moreover, the result of Borie \textit{et al.}~\cite{BPT92} is constructive in the sense that, given a CMSO-formula, it provides the homomorphism classes $\cC$, the subset of accepting classes and the algorithms computing the classes of base graphs as well as the update functions for the regular  property $\cP$ on $(t+1)$-terminal recursive graphs. 
The regularity is actually proven in~\cite{BPT92}  for all properties expressible by CMSO-formulae, which allows an arbitrary number of free variables over vertices, edges, vertex sets and and edge sets. For our applications, it is sufficient to consider properties over graphs and one vertex set, corresponding to formulae with a unique free variable, which is a set of vertices. 

To our knowledge, the question whether all regular properties are CMSO-expressible is still open.

\subsection{Treewidth, minimal triangulations and potential maximal cliques}
\paragraph{Chordal graphs and clique trees}

A graph $H$ is {\em chordal} (or {\em triangulated}) if every
cycle of length at least four has a chord, i.e., an edge between
two nonconsecutive vertices of the cycle. 
By a classical result due to Buneman and Gavril  \cite{Buneman74,Gavril74}, 
every chordal graph $G$  has a tree decomposition   
such that each bag of the decomposition is a maximal clique of $G$.
Such a tree decomposition is referred as a \emph{clique tree} of the chordal graph $G$. 

\paragraph{Minimal triangulations, potential maximal cliques and minimal separators}
A {\em triangulation} of
a graph $G=(V,E)$ is a chordal graph $H = (V, E')$ such that $E
\subseteq E'$. Graph $H$ is a {\em minimal triangulation} of $G$ if
 for every
edge set $E''$ with $E \subseteq E'' \subset E'$, the
graph $F=(V, E'')$ is not chordal.
It is well known that
for any graph  $G$,  $\tw(G)\le k$ if and only if there is a triangulation
$H$ of $G$ of clique size at most $k+1$.

Let $u$ and $v$ be two non adjacent vertices of a graph $G$. A set of
vertices $S \subseteq V$
is a {\em $u,v$-separator} if $u$ and $v$  are in different connected components
of the graph $G[V(G) \sm S]$. A connected component $C$ of $G[V(G) \sm S]$
is a {\em full component associated to $S$} if $N(C)=S$. Separator 
$S$ is a {\em minimal
$u,v$-separator} of $G$ if no proper subset of $S$ is a $u,v$-separator.
Notice that a minimal separator can be
strictly included in another one, if they are minimal separators for different pairs of vertices. 
If $G$ is chordal, then for any minimal separator $S$ and any clique tree $T_G$ of $G$
there is an edge $e$ of $T_G$ such that $S$ is the intersection of the maximal cliques 
corresponding to endpoints of $e$~\cite{Buneman74,Gavril74}. We say that 
$S$ \emph{corresponds} to $e$ in $T_G$.

We will need the following result of  Berry et al.~\cite{BerryBC00}.
\begin{proposition}[\cite{BerryBC00}]\label{th:listing_minsep}
There is an algorithm listing the set $\Delta_G$ of all  minimal separators of an input
graph $G$ in time  $\cO(n^3|\Delta_G|)$.
\end{proposition}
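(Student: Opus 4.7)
The plan is to exploit the classical characterization that $S \subseteq V(G)$ is a minimal separator of $G$ if and only if $G[V \sm S]$ admits at least two \emph{full components}, i.e., connected components $C$ whose exterior neighborhood $N(C)$ equals $S$. This characterization gives a linear-time test for whether a candidate set is a minimal separator, and it is also the structural handle needed to both seed and grow the enumeration.

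First I would produce an initial pool of \emph{close} minimal separators by considering, for every vertex $v$ and every connected component $C$ of $G[V \sm (N(v) \cup \{v\})]$, the set $S_{v,C} = N(C)$. Each such $S_{v,C}$ is indeed a minimal separator: $C$ itself is one full component, and the component of $G[V \sm S_{v,C}]$ containing $v$ is another, since $S_{v,C} \subseteq N(v)$. This seeding step enumerates at most $n^2$ candidates and costs $\cO(n^3)$ in total.

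Next, I would specify a \emph{generation rule}: given a known minimal separator $S$ and a vertex $x \in V$, compute the connected components of $G[V \sm (S \cup \{x\})]$, and for each such component $C$ test whether $N(C)$ is a minimal separator by verifying that the graph $G[V \sm N(C)]$ has a second full component besides $C$. Each $S$ thus produces $\cO(n)$ batches of candidates, each batch requiring $\cO(n^2)$ time for the components-and-neighborhoods work; duplicate detection against the pool is done in $\cO(n)$ per candidate by storing characteristic vectors in a trie. An outer BFS-style loop processes every separator exactly once, spending $\cO(n^3)$ on each, yielding the overall $\cO(n^3 |\Delta_G|)$ bound.

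The heart of the proof, and the step I expect to be the main obstacle, is proving \emph{completeness}: every minimal separator of $G$ must actually appear in the pool after iterating the generation rule on the seeds. The Berry--Bordat--Cogis argument fixes a target minimal separator $S^{*}$ with two full components $C_1, C_2$, selects a witness vertex (or short path) linking a close seed to $S^{*}$, and shows inductively that the generation rule produces a chain of minimal separators from that seed up to $S^{*}$, each step differing from the previous one by a local pivot around the next witness vertex. The delicate point is to check that both minimality and the existence of two full components are preserved at every step, which relies on the full-component structure of $S^{*}$ and on the precise choice of the witness. Once completeness is established, correctness and the claimed complexity follow immediately from the seeding and generation analyses.
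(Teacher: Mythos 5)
The paper does not prove this proposition at all: it is imported verbatim from Berry, Bordat and Cogis~\cite{BerryBC00}, so there is no in-paper argument to compare against, only the original reference. Your sketch does follow the skeleton of that reference (seed with the ``close'' separators $N(C)$ for $C$ a component of $G[V\sm N[v]]$, then grow the pool by a pivoting rule, then prove completeness), and your seeding step is correct as argued: since $N(C)\subseteq N(v)$, both $C$ and the component containing $v$ are full components of $G[V\sm N(C)]$.

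There are, however, two genuine gaps. First, your generation rule is not the one for which the Berry--Bordat--Cogis completeness theorem is proved. They pivot on $S\cup N(x)$ for $x\in S$ (and show the neighborhoods of the resulting components are automatically minimal separators, no filtering needed), whereas you pivot on $S\cup\{x\}$ for arbitrary $x\in V$. Under your rule every generated candidate is contained in $S\cup\{x\}$, so each step can add at most one vertex not already in $S$; it is far from clear, and you do not argue, that iterating such one-vertex perturbations from the seeds reaches every minimal separator. Second, and more fundamentally, you explicitly identify completeness as ``the heart of the proof'' and then defer it to ``the Berry--Bordat--Cogis argument'' without giving it. Since the entire content of the proposition is that the enumeration is complete within the stated time bound, that is precisely the step that cannot be outsourced --- especially when the argument you are invoking was established for a different generation rule. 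The complexity accounting ($\cO(n^3)$ work per separator processed, hence $\cO(n^3|\Delta_G|)$ overall) is fine, but it is conditional on a completeness proof you have not supplied.
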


A set of vertices $\Omega \subseteq V(G)$ of a graph $G$ is called a
{\em potential maximal clique} if there is a minimal triangulation
$H$ of $G$ such that $\Omega$ is a maximal clique of $H$. 

\begin{proposition}[\cite{BoTo02}]\label{pr:listing_pmc}
Let $\Pi_G$ denote the set of all potential maximal cliques of graph $G$. We have $|\Pi_G| \leq n |\Delta_G|^2 + n|\Delta_G|+1$, and the set $\Pi_G$ can be listed in time
$\cO(n^2m|\Delta_G|^2)$.
\end{proposition}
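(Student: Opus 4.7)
The plan is to use the now-standard Bouchitt\'e--Todinca characterization of potential maximal cliques, which I would first recall (and sketch a proof of, if needed). Namely: a vertex set $\Omega \subseteq V(G)$ is a potential maximal clique of $G$ if and only if (i) no connected component $C$ of $G \sm \Omega$ satisfies $N(C) = \Omega$, and (ii) every non-edge $\{u,v\}$ inside $G[\Omega]$ is \emph{covered} in the sense that there exists a component $C$ of $G \sm \Omega$ with $\{u,v\} \subseteq N(C)$. A direct consequence of (i) is that $N(C)$ is a minimal separator of $G$ for each component $C$ of $G \sm \Omega$: since there is another component or vertex of $\Omega$ on the other side of $N(C)$, one can exhibit two full components of $G \sm N(C)$ witnessing minimality.

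For the cardinality bound, I would encode each PMC $\Omega$ injectively by a short tuple. Exactly one PMC, corresponding to $\Omega = V(G)$, is encoded by a trivial symbol, contributing the $+1$ term. For the remaining PMCs, the idea is that $\Omega$ is almost determined by at most two "representative" minimal separators appearing as $N(C)$ for components $C$ of $G \sm \Omega$, together with at most one distinguished vertex of $\Omega$. Concretely, for any $\Omega \neq V(G)$ one can choose a vertex $a \in \Omega$ and one or two components $C_1,C_2$ of $G \sm \Omega$ such that $\Omega$ is the unique PMC reconstructible (by the closure procedure described below) from $(a, N(C_1))$ in the first case and from $(a, N(C_1), N(C_2))$ in the second case. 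This yields $n|\Delta_G|$ and $n|\Delta_G|^2$ tuples respectively, hence the stated bound.

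For the listing algorithm, I would (a) first call Proposition~\ref{th:listing_minsep} to produce $\Delta_G$ in time $\cO(n^3 |\Delta_G|)$; (b) iterate over all candidate triples $(a, S_1, S_2)$ with $a\in V$ and $S_1,S_2\in \Delta_G$, of which there are at most $n|\Delta_G|^2$; (c) for each triple, compute a candidate set $\Omega$ by a closure step, namely starting from $S_1 \cup S_2 \cup\{a\}$ and iteratively augmenting with the neighbourhoods of those components of $G \sm \Omega$ that would otherwise be full; (d) verify the two characterization conditions in $\cO(nm)$ time by computing the components of $G \sm \Omega$, their neighbourhoods, and checking that no non-edge of $G[\Omega]$ is left uncovered. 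The verification cost $\cO(nm)$ multiplied by the $\cO(n|\Delta_G|^2)$ candidates gives the claimed $\cO(n^2 m |\Delta_G|^2)$ running time and dominates the minimal-separator enumeration.

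The main obstacle, and the crux of the argument, is the structural lemma behind the injective encoding: showing that every PMC is recoverable from at most two of its separators $N(C)$ plus a single vertex, and that the closure step in the algorithm reconstructs exactly this PMC. This requires a careful case analysis of $\Omega \sm \bigcup_i N(C_i)$ and of how the separators $N(C_i)$ interact across $\Omega$ — essentially, proving that "extra" vertices of $\Omega$ behave like twins with neighbourhood inside $\Omega$, so that one witness vertex $a$ together with two representative separators suffices to pin $\Omega$ down uniquely.
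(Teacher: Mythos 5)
The paper itself offers no proof of this proposition: it is imported verbatim from Bouchitt\'e and Todinca \cite{BoTo02}, so your attempt can only be measured against the argument in that reference. The parts of your proposal that are solid are the recalled characterization of potential maximal cliques (no full component, and every non-edge of $G[\Omega]$ covered by the neighbourhood of some component --- this is indeed the key theorem of \cite{BoTo01}), the observation that each $N(C)$ is a minimal separator, the use of Proposition~\ref{th:listing_minsep}, and the final arithmetic $\cO(n|\Delta_G|^2)$ candidates times $\cO(nm)$ verification. But the entire construction hangs on the lemma you yourself label ``the crux'': that every $\Omega\neq V(G)$ is recoverable, via your closure procedure, from one vertex $a\in\Omega$ and at most two separators $N(C_1),N(C_2)$. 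You do not prove it, and it is not a known lemma in this form. What is easy to show is that every $a\in\Omega$ either lies in some $N(C_i)$ or satisfies $N[a]=\Omega$; but $\Omega$ can be the union of \emph{many} sets $N(C_i)$ no two of which suffice --- take $\Omega=\{1,\dots,6\}$ inducing $K_6$ minus the perfect matching $\{1,2\},\{3,4\},\{5,6\}$, with three outside components whose neighbourhoods are exactly these three pairs. Here no vertex has $N[a]=\Omega$ and no union of two component-neighbourhoods equals $\Omega$, so everything is delegated to the closure step; yet that step is not well defined as written (if a component $D$ of $G\setminus\Omega_0$ is full, ``adding its neighbourhood'' adds nothing, and it is unclear which vertices of $D$ should be absorbed to reach $\Omega$ rather than some other set). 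Without this lemma neither the cardinality bound nor the completeness of the enumeration is established, so the proposal has a genuine gap at its centre.

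The actual proof in \cite{BoTo02} takes a different, incremental route that sidesteps the need for such a global encoding. One orders the vertices, sets $G_i=G[\{v_1,\dots,v_i\}]$, and shows --- using the notion of a minimal separator that is \emph{active} for a potential maximal clique --- that every potential maximal clique of $G_{i+1}$ either projects (by deleting $v_{i+1}$) onto a potential maximal clique of $G_i$, or is determined by a single minimal separator of $G_{i+1}$ together with $v_{i+1}$, or by a pair involving two minimal separators. Since minimal separators of $G_i$ inject into those of $G_{i+1}$, the recurrence telescopes to $|\Pi_G|\le 1+\sum_i\bigl(|\Delta_{G_i}|+|\Delta_{G_i}|^2\bigr)\le 1+n|\Delta_G|+n|\Delta_G|^2$, and the same case analysis yields the listing algorithm within the stated time. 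If you wish to keep your direct-encoding strategy, the statement you must actually prove is precisely your crux lemma; otherwise, reproduce the incremental argument.
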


We also have:
\begin{proposition}[\cite{Fomin:2010ys}]\label{pr:listing_pmc_gen}
The set of potential maximal cliques can be listed in time
$O(\npmc^n)$.
\end{proposition}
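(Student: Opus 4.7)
The plan is to prove the statement in two stages. First, establish the combinatorial upper bound $|\Pi_G| = O(\npmc^n)$ by a direct counting argument on the structure of potential maximal cliques (PMCs). Second, convert this counting argument into an enumeration algorithm of running time $O(\npmc^n \cdot \mathrm{poly}(n))$; since polynomial factors are absorbed by the exponential base, this matches the stated bound.

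For the counting step, the natural tool is the Bouchitt\'e--Todinca characterization of PMCs: a set $\Omega \subseteq V(G)$ is a potential maximal clique of $G$ if and only if (i) for every connected component $C$ of $G[V(G)\sm \Omega]$ one has $N(C) \subsetneq \Omega$, and (ii) for every two non-adjacent vertices $u,v \in \Omega$ there is a component $C$ of $G[V(G)\sm \Omega]$ with $u,v \in N(C)$. Using this, I would associate to each PMC $\Omega$ an encoding consisting of a pair $(A,B)$ of disjoint subsets of $V(G)$ of controlled total size, with $A$ a cleverly chosen portion of $\Omega$ and $B$ a portion of $V(G)\sm\Omega$ (one may think of $B$ as certifying the full components whose neighborhoods collectively cover $\Omega$, and $A$ as certifying the vertices of $\Omega$ that lie outside all those neighborhoods). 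The argument must ensure that distinct PMCs produce distinct encodings and that the number of admissible encodings is $O(\npmc^n)$, via a binomial/entropy estimate where the base $\npmc$ arises as the optimum of an inequality of the form $\binom{n}{\alpha n}\binom{n-\alpha n}{\beta n}$ over the feasible parameters $\alpha,\beta$.

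For the enumeration step, I would iterate over all candidate encodings $(A,B)$, reconstruct the corresponding candidate set $\Omega$, and check in polynomial time via the Bouchitt\'e--Todinca criterion whether it is a genuine PMC; Proposition~\ref{th:listing_minsep} can be invoked locally to generate the minimal separators appearing as $N(C)$ for full components of $G[V(G)\sm\Omega]$, whenever a given encoding calls for one.

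The main obstacle is the combinatorial bound itself. Plugging the best known bound $|\Delta_G| = O(1.6182^n)$ on the number of minimal separators into Proposition~\ref{pr:listing_pmc} yields only $O(n\cdot |\Delta_G|^2) = O(2.618^n)$, which is far too weak. Achieving the base $\npmc$ requires a \emph{global} argument that handles all full components of $G[V(G)\sm\Omega]$ simultaneously and exploits the precise interplay between $\Omega$ and its surrounding separators, rather than bounding the separators first and then counting how they combine into PMCs. The delicate point is designing the encoding $(A,B)$ so that $|A|+|B|$ stays below a fixed fraction of $n$ while injectivity of the encoding is retained; the precise numerical value $\npmc$ ultimately comes from optimizing this trade-off.
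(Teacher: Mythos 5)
First, a remark on the ground truth: the paper does not prove this proposition at all --- it is imported verbatim from \cite{Fomin:2010ys}, so the only proof to compare against is the one in that reference. Your high-level architecture does match it: the result there is obtained by showing that every \pmc{} admits a small ``certificate'' (a vertex subset, possibly with one or two distinguished vertices) from which the \pmc{} can be recovered in polynomial time, then enumerating all candidate certificates and filtering with the Bouchitt\'e--Todinca characterization, which you state correctly and which is indeed checkable in polynomial time for a given candidate set. You are also right that routing through Proposition~\ref{pr:listing_pmc} together with the $O(1.6181^n)$ bound on minimal separators only gives $O(n|\Delta_G|^2)=O(2.62^n)$ and cannot yield the claim.

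The genuine gap is that the entire mathematical content of the proposition is the structural lemma asserting that a certificate of the right size exists, and your proposal does not construct it: you postulate a pair $(A,B)$ that ``must ensure'' injectivity and the right count, which is a restatement of the problem rather than a solution. In \cite{Fomin:2010ys} this step is a concrete case analysis on a \pmc{} $\Omega$ and the connected components of $G[V\sm\Omega]$ --- roughly, depending on how $|\Omega|$ compares with $n$ and on how much of $\Omega$ is covered by the neighborhood of a single component, one exhibits in each case a polynomial-time reconstruction of $\Omega$ from a set whose size is bounded strictly below $n/2$, and the base $\npmc$ emerges only from balancing the binomial coefficients arising in the different cases. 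Nothing in your proposal produces that case analysis, guarantees that $|A|+|B|$ can be kept small enough while preserving injectivity, or shows that the optimum of your hypothesized trade-off $\binom{n}{\alpha n}\binom{n-\alpha n}{\beta n}$ lands at $\npmc$ rather than at some other constant; the target number is assumed, not derived. As it stands the argument proves nothing beyond the trivial $O(2^n\cdot \mathrm{poly}(n))$ enumeration, since without the size bound on the certificates the enumeration of encodings is not cheaper than enumerating all subsets.
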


Let $\Omega$ be a \pmc. By~\cite{BoTo01}, a subset $S \subseteq \Omega$ is a minimal separator of $G$ if and only if $S$ is the
neighborhood of a connected component of $G [V(G) \setminus \Omega]$. 

For a minimal separator $S$ and a full connected component  $C$ 
 of $G[V(G) \sm S]$,
 we say that $(S,C)$ is a {\em full block} associated to
$S$. We sometimes use the notation $(S,C)$ to denote the set of
vertices $S \cup C$ of the block. It is easy to see that if $X
\subseteq V$ corresponds to the set of vertices of a block, then
this block $(S,C)$ is unique: indeed, $S = N(V \sm X)$ and $C=X
\sm S$. 
For convenience, the couple $(\emptyset, V)$ is also considered as a full block.
For a minimal separator $S$, a full block  $(S,C)$,  and a potential maximal clique 
$\Omega$, we call the triple $(S,C, \Omega)$ \emph{good} if 
$S\subseteq \Omega \subseteq C\cup S$. By~\cite{FKTV08}, the number of good triples is at most $n|\Pi_G|$.
%

The following proposition was obtained by Fomin and Villanger~ \cite{Fomin:2010ys}.

\begin{proposition}[\cite{Fomin:2010ys}]\label{pr:CompTr}
Let $G[F]$ be an induced subgraph of a graph $G$, 
let $TF$ be a minimal triangulation of $G[F]$. There exists a minimal triangulation $TG$  of $G$ such that $TF$ is an induced subgraph of $TG$.

Equivalently, for every clique $K_G$ of $TG$, the set $K_G\cap F$ is a (possibly empty) clique of $TF$.
\end{proposition}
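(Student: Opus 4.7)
The plan is to obtain $TG$ as an inclusion-minimal element of the family $\mathcal{T}$ of chordal graphs on vertex set $V(G)$ which contain $G$ as a subgraph and whose restriction to $F$ is exactly $TF$, and then to verify that any such inclusion-minimal element is already a minimal triangulation of $G$. To see that $\mathcal{T}$ is non-empty, I would exhibit the explicit graph $H_0$ obtained by taking $TF$ on $F$, adding all edges of $G$ that touch $V(G) \sm F$, and in addition making every vertex of $V(G) \sm F$ universal (adjacent to every other vertex). Clearly $G \subseteq H_0$ and $H_0[F] = TF$; moreover $H_0$ is chordal, since any induced cycle of length at least four is either contained in $F$, in which case it is chorded by $TF$, or contains a vertex of $V(G) \sm F$, which is universal and immediately provides a chord. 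Since $\mathcal{T}$ is finite and non-empty, it admits an inclusion-minimal element $TG$.

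By construction $TG$ is a chordal supergraph of $G$ with $TG[F] = TF$, so $TF$ is an induced subgraph of $TG$. To see that $TG$ is additionally a \emph{minimal} triangulation of $G$, I would argue by contradiction: suppose some edge $e \in E(TG) \sm E(G)$ can be deleted while preserving chordality. I split on whether both endpoints of $e$ lie in $F$. If yes, then $e \in E(TG[F]) \sm E(G[F]) = E(TF) \sm E(G[F])$, so $e$ is a fill-in edge of $TF$; by minimality of $TF$ as a triangulation of $G[F]$, the graph $TF - e$ contains a chordless cycle $C$ of length at least four; since $(TG - e)[F] = TF - e$, every chord of $C$ in $TG - e$ would have to lie in $TF - e$, so $C$ is chordless in $TG - e$ as well, contradicting its chordality. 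If instead $e$ has at least one endpoint outside $F$, then $(TG - e)[F] = TG[F] = TF$ is unchanged and $TG - e$ remains a chordal supergraph of $G$ (as $e \notin E(G)$), so $TG - e \in \mathcal{T}$, contradicting the inclusion-minimality of $TG$ in $\mathcal{T}$.

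Once $TG[F] = TF$ is established, the ``equivalently'' clause follows immediately: any clique $K_G$ of $TG$ consists of pairwise adjacent vertices in $TG$, and those in $K_G \cap F$ are pairwise adjacent in $TG[F] = TF$, so $K_G \cap F$ is a (possibly empty) clique of $TF$. The main obstacle I anticipate is the first case of the contradiction argument, namely ruling out that removing a fill-in edge of $TF$ lets a chordless cycle inside $F$ acquire a ``new'' chord coming from $TG$ but not from $TF$; this is precisely handled by the invariant $TG[F] = TF$, which forces any chord of an $F$-cycle in $TG - e$ to live inside $TF - e$.
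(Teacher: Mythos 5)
The paper does not prove Proposition~\ref{pr:CompTr}; it imports it from~\cite{Fomin:2010ys}, so there is no in-paper proof to compare against. Judged on its own, your argument is essentially correct and follows the standard route: exhibit one chordal sandwich $H_0$ with $G\subseteq H_0$ and $H_0[F]=TF$ (your universal-vertex construction is fine), extract an inclusion-minimal element, and use minimality of $TF$ to control what happens inside $F$. The non-emptiness check, the case analysis, and the derivation of the ``equivalently'' clause from $TG[F]=TF$ are all sound.

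The one step to be careful about is the opening of your contradiction argument: ``suppose $TG$ is not a minimal triangulation, then some single fill edge $e$ can be deleted while preserving chordality.'' Under the paper's definition, the negation of minimality only gives you a chordal graph $H'$ with $G\subseteq H'\subsetneq TG$, not a single removable edge; passing from that to single-edge removability is the classical characterization of minimal triangulations due to Rose, Tarjan and Lueker, which you are invoking implicitly and should cite. Alternatively, you can avoid it entirely by running your own two cases on $H'$ directly: since $H'[F]$ is chordal and $G[F]\subseteq H'[F]\subseteq TG[F]=TF$, either $H'[F]\subsetneq TF$, contradicting minimality of $TF$, or $H'[F]=TF$, in which case $H'\in\mathcal{T}$ contradicts the inclusion-minimality of $TG$ in $\mathcal{T}$. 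This is strictly simpler than the edge-by-edge version and closes the only gap in the write-up.
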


Moreover, they consider the problem of finding a maximum induced subgraph of treewidth at most $t$:


\begin{proposition}[\cite{Fomin:2010ys}]\label{pr:IndSubgrTwT}
Given a graph $G$ and with its set $\Pi_G$ of potential maximal cliques, problem \mitwt{} can be solved in time $O(|\Pi_G| n^{t+4})$.
\end{proposition}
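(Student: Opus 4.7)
The plan is to adapt the standard \pmc-based dynamic programming for computing treewidth. By Proposition~\ref{pr:CompTr}, finding $F$ with $\tw(G[F])\leq t$ of maximum size is equivalent to finding a pair $(F,TG)$, where $TG$ is a minimal triangulation of $G$ such that every maximal clique $\Omega$ of $TG$ (necessarily a \pmc{} of $G$) satisfies $|\Omega\cap F|\leq t+1$; indeed, $TG[F]$ is then a minimal triangulation of $G[F]$ of width at most $t$. Viewing $TG$ through its clique tree — whose nodes are \pmc s of $G$ and whose edges are labelled by the minimal separators of $G$ — and rooting this tree arbitrarily yields the familiar recursive decomposition of the graph into good triples $(S,C,\Omega)$, where $S$ is a minimal separator, $(S,C)$ is the full block below the edge, and $\Omega$ is the \pmc{} at the child endpoint of that edge.

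I would index the DP table by a good triple $(S,C,\Omega)$ (with $|\Omega|$ unrestricted) together with a subset $K\subseteq \Omega$ of size at most $t+1$. The value $\alpha(S,C,\Omega,K)$ is defined as the maximum of $|F\cap C|$ over vertex sets $F\subseteq S\cup C$ for which some minimal triangulation of $G[S\cup C]$ makes $S$ a clique with $\Omega$ as the maximal clique associated to $S$, such that $F\cap\Omega=K$ and every maximal clique $\Omega'$ in this local triangulation satisfies $|\Omega'\cap F|\leq t+1$. Within the triple, $\Omega$ splits $C\setminus\Omega$ into connected components $C_1,\dots,C_r$; setting $S_i=N(C_i)\subseteq\Omega$, each $(S_i,C_i)$ is a full block. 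This yields the recurrence
\[ \alpha(S,C,\Omega,K)=|K\setminus S|+\sum_{i=1}^{r}\beta(S_i,C_i,K\cap S_i), \]
where $\beta(S_i,C_i,A)=\max\alpha(S_i,C_i,\Omega_i,K_i)$, the maximum taken over \pmc s $\Omega_i$ with $S_i\subseteq\Omega_i\subseteq S_i\cup C_i$ and subsets $K_i\subseteq\Omega_i$ with $K_i\cap S_i=A$ and $|K_i|\leq t+1$. The optimum value of \mitwt{} is then $\max_{\Omega,K}\alpha(\emptyset,V(G),\Omega,K)$, interpreting $(\emptyset,V(G))$ as a virtual full block.

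For the complexity, the excerpt gives at most $n|\Pi_G|$ good triples; for each, the number of admissible subsets $K$ is $\binom{|\Omega|}{\leq t+1}=O(n^{t+1})$, and the recurrence combines at most $n$ precomputed $\beta$-values, yielding the claimed $O(|\Pi_G|\cdot n^{t+4})$ bound after the routine bookkeeping for evaluating states bottom-up by increasing $|C|$. The main obstacle is verifying correctness of the recurrence. The forward direction — that an optimal $(F,TG)$ restricts to valid child subproblems — follows from the standard properties of clique trees. The reverse direction — that independently chosen child solutions can be glued through $\Omega$ into a minimal triangulation of $G[S\cup C]$, and recursively into a global triangulation $TG$ — relies on the Bouchitt\'e--Todinca theory of \pmc s: if $\Omega$ is a \pmc{} of $G[S\cup C]$ and each block $(S_i,C_i)$ is minimally triangulated with $S_i$ completed into a clique, then the union with the clique on $\Omega$ is a minimal triangulation of $G[S\cup C]$. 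Crucially, the sizes of the child \pmc s $\Omega_i$ may exceed $t+1$ provided only that $|K_i|\leq t+1$, which is exactly why the state indexes over the pair $(\Omega,K)$ rather than restricting to \pmc s of small cardinality.
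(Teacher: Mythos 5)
Your proposal is correct and follows essentially the same route as the paper's machinery: a bottom-up dynamic program over full blocks and good triples $(S,C,\Omega)$, with the state recording the ($\leq t+1$)-sized trace of the solution on $\Omega$ (respectively on $S$), justified by Proposition~\ref{pr:CompTr} and the Bouchitt\'e--Todinca gluing of minimal triangulations of blocks; your $\alpha$ and $\beta$ are the paper's $\beta$ and $\alpha$ with the roles of their names swapped, and your version is just the specialization of the paper's Section~3 algorithm to the trivial regular property (no homomorphism class $c$ needed), with the same $O(|\Pi_G|\,n^{t+4})$ accounting. The only cosmetic slip is calling $TG[F]$ a \emph{minimal} triangulation of $G[F]$ --- it is merely a triangulation of width at most $t$, which is all the argument needs.
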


By Propositions~\ref{pr:IndSubgrTwT}, \ref{pr:listing_pmc} and \ref{pr:listing_pmc_gen}, we deduce that for fixed $t$ the problem can be solved in $O(\npmc^n)$ time for arbitrary graphs, and in polynomial time for classes of graphs with polynomial number of minimal separators.


%
%

\section{Optimal induced subgraph for $\cP$ and $t$}\label{sec:MainSection}

Let $t\geq 0$ be an integer and $\cP(G,X)$ be a  property. We define the following generic problem. 

\defproblem{\ois{}}{A graph $G$}{Find sets $X \subseteq F\subseteq V$ such that $X$ is of maximum size,  the  induced subgraph $G[F]$ is of treewidth at most $t$ and $\cP(G[F],X)$ is true.}
  
  \medskip
  
Let us give two examples of  problems that are particular cases of \ois{}, when $\cP(G,X)$ is  a regular  property. 
\begin{enumerate}
\item Let ${\mathcal F}$ be a finite family of graphs containing at least one planar graph. The problem  \textsc{Maximum induced ${\mathcal F}$-minor free graph} takes as input a graph $G$ and asks for an induced subgraph $G[F]$ such that $G[F]$ contains no minor from ${\mathcal F}$, and $F$ is of maximum size for this property. As we shall see in details in Section~\ref{se:applications}, the property $\cP(G[F],X)$ expressing the fact that $G[F]$ is  ${\mathcal F}$-minor free and $X=F$ is the vertex set of $G[F]$ can be expressed by a CMSO. Since ${\mathcal F}$ contains a planar graph, $G[F]$ must be of treewidth at most $t$ for some constant $t$ depending only on ${\mathcal F}$~\cite{RobertsonS-V}. Therefore, this problem (or the equivalent problem \fd{}) is a particular case of \ois.
\item The problem \textsc{Independent $\mathcal{H}$-Packing} was introduced by Cameron and Hell~\cite{CameronH06}. Here
 $\mathcal{H}$ denotes a finite set of connected graphs, and the task is to find, in an input graph $G$, a maximum number of disjoint copies of graphs from $\mathcal{H}$ such that there is no edges between the copies. Clearly these copies induce a subgraph $G[F]$ of bounded treewidth. We will give a CMSO-formula expressing the property $\cP(G[F],X)$, which is true if and only if $G[F]$ is a collection of copies of $\mathcal{H}$, and $X$ has exactly one vertex in each connected component of $G[F]$. This problem, generalizing the \textsc{Maximum Induced Matching}, is again a particular case of \ois.
\end{enumerate}

We prove here the main theorem of this article. 

\begin{theorem}\label{theorem_main}
For any fixed $t$ and any regular property $\cP$, the problem \ois{} is solvable in $|\Pi_G| n^{t+\cO(1)}$ time, when $\Pi_G$ is given in the input. 
\end{theorem}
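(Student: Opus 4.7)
The plan is to extend the dynamic programming of Proposition~\ref{pr:IndSubgrTwT} by carrying, alongside its combinatorial bookkeeping, the homomorphism class of the regular property $\cP$ applied to the partial $(t+1)$-terminal recursive graph currently being built. The triangulation scaffolding itself is unchanged: Proposition~\ref{pr:CompTr} guarantees that any minimal triangulation of $G[F]$ embeds into a minimal triangulation of $G$, so the clique-tree structure of $G$, accessed through \pmc s and minimal separators, suffices to reconstruct every $G[F]$ of treewidth at most $t$ together with a tree decomposition of width at most $t$.

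I would index the DP by good triples $(S,C,\Omega)$, subsets $F_\Omega\subseteq\Omega$ with $|F_\Omega|\leq t+1$, and classes $c\in\cC$. The value $opt(S,C,\Omega,F_\Omega,c)$ stores the maximum of $|X\cap C|$ over all $F\subseteq C\cup S$ and $X\subseteq F$ satisfying: $F\cap\Omega=F_\Omega$ and $F\cap S\subseteq\Omega$; the restriction $G[F\cap(C\cup S)]$ has treewidth at most $t$ and admits a tree decomposition with a bag containing $F\cap S$, which by Proposition~\ref{pr:tw_expr} turns it into a $(t+1)$-terminal recursive graph with terminal set $F\cap S\subseteq F_\Omega$; and the homomorphism class of that terminal graph with distinguished vertex set $X\cap(C\cup S)$ equals $c$. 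Since there are at most $n|\Pi_G|$ good triples, only $\cO(n^{t+1})$ admissible $F_\Omega$, and $|\cC|$ depends only on $t$ and $\cP$, the table has $|\Pi_G|\, n^{t+\cO(1)}$ entries up to a factor $f(t,\cP)$.

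For the transition, fix $(S,C,\Omega,F_\Omega,c)$, let $C_1,\dots,C_p$ be the connected components of $(C\cup S)\setminus\Omega$, and set $S_i=N(C_i)\subseteq\Omega$; each $(S_i,C_i)$ is a full block of $G$ with $S_i$ a minimal separator. I would build the target terminal graph starting from the base graph on $F_\Omega$ (all vertices made terminals, equipped with the edges of $G[F_\Omega]$), glue in each sub-block via an arity-$2$ composition identifying its terminal set with the corresponding subset $F_\Omega\cap S_i$ of $F_\Omega$, and finally apply an arity-$1$ composition that demotes the vertices of $F_\Omega\setminus S$ to non-terminals, leaving $F_\Omega\cap S$ as the terminal set. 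All of these are legitimate $(t+1)$-terminal compositions, and by regularity each comes with an update function $\odot_f$ on $\cC$. To avoid a product blow-up over the $p$ sub-blocks, I would sweep them sequentially, maintaining an auxiliary table indexed by the intermediate class reached so far; when incorporating sub-block $i$, one enumerates over the compatible options $(\Omega_i,F_{\Omega_i},c_i)$ with $F_{\Omega_i}\cap S_i=F_\Omega\cap S_i$, updates the running class via $\odot_{f_i}$, and adds $opt_i$ to the running sum. After all sub-blocks have been merged, the pre-demotion class determines $|X\cap F_\Omega\cap C|$, which is added before the demotion produces the final class~$c$. Since every sub-block encountered is itself a good triple, the total transition cost aggregates to $|\Pi_G|\,n^{t+\cO(1)}\cdot f(t,\cP)$.

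The optimum of \ois{} is then the maximum of $opt(\emptyset,V,\Omega,F_\Omega,c)$ over $\Omega\in\Pi_G$, admissible $F_\Omega$, and accepting $c$; at the root the boundary $S$ is empty, so every vertex of $X$ has been counted exactly once, and backpointers reconstruct $F$ and $X$ within the same time. The main obstacle will be making the interface between the triangulation-based DP and the regular-property DP airtight: one must check that each gluing really corresponds to a legitimate $(t+1)$-terminal composition, that terminal orderings are handled consistently so that the update functions $\odot_f$ receive matched inputs, and that the counting convention for $X$-vertices (strictly interior to a block, versus on $F_\Omega\setminus S$, versus on $F_\Omega\cap S$) never double-counts nor misses a vertex across sub-block merges and the final demotion. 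Once this bookkeeping is fixed, the analysis follows the template of Proposition~\ref{pr:IndSubgrTwT} and yields the claimed bound, with $f(t,\cP)$ absorbed as a constant for fixed $t$ and $\cP$.
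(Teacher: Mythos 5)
Your proposal follows essentially the same route as the paper: a dynamic program over full blocks and good triples $(S,C,\Omega)$ that augments the treewidth-$t$ machinery of Proposition~\ref{pr:IndSubgrTwT} with the homomorphism class of the regular property, invokes Proposition~\ref{pr:CompTr} for completeness, glues the sub-blocks sequentially to avoid a product blow-up over components, and reads the answer off the root block $(\emptyset,V)$ over accepting classes. The paper's two tables $\beta(S,C,\Omega,W,c)$ and $\alpha(S,C,W,c)$ correspond to your pre- and post-demotion states, and storing $|X\cap C|$ is an equivalent way of handling the over-counting that the paper handles by subtracting $|term(c',W)|$ when joining.

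The one step that does not hold up as written is the running-time accounting for the transition. When you incorporate sub-block $(S_i,C_i)$ into the sweep you ``enumerate over the compatible options $(\Omega_i,F_{\Omega_i},c_i)$''; since a single block can carry up to $\Theta(n|\Pi_G|)$ good triples of its own, and the same sub-block appears in the transitions of many parent good triples, this enumeration cannot simply be charged to the sub-blocks: as described the total cost can reach $|\Pi_G|^2 n^{2t+\cO(1)}$ rather than $|\Pi_G|\,n^{t+\cO(1)}$. The fix is precisely the paper's two-level structure: when block $(S_i,C_i)$ is processed, precompute a table $\alpha(S_i,C_i,W_i,c_i)$ indexed only by the trace $W_i\subseteq S_i$ and the post-demotion class, i.e.\ take the maximum over all $(\Omega_i,F_{\Omega_i})$ once and for all; every later parent transition then performs a single look-up per sub-block and class pair. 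With that restructuring (and the bookkeeping you already flag about terminal orderings and the validity of each composition), your argument yields the claimed bound.
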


Let us note that by Proposition~\ref{prop:borie}, results of Theorem~\ref{theorem_main} hold for   every property $\cP(G,X)$ expressible by a CMSO-formula.
Combined with Propositions~\ref{pr:listing_pmc} and~\ref{pr:listing_pmc_gen}, we obtain the following application of 
Theorem~\ref{theorem_main}.  
\begin{corollary}
For any fixed $t$ and regular property $\cP$, problem \ois{} can be solved in $\cO(\npmc^n)$ time for arbitrary graphs, and in polynomial time for classes of graphs with polynomial number of minimal separators. 
\end{corollary}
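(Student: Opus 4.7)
The plan is to enrich the Fomin--Villanger dynamic programming for \mitwt{} (Proposition~\ref{pr:IndSubgrTwT}) with the homomorphism classes of the regular-property framework (Definition~\ref{de:regular}). The key structural input is Proposition~\ref{pr:CompTr}: for any $F$ of treewidth at most $t$, some minimal triangulation of $G[F]$ extends to a minimal triangulation $TG$ of $G$, so the maximal cliques of $TG$ are potential maximal cliques of $G$ and each such clique $\Omega$ meets $F$ in a clique of $G[F]$'s triangulation, hence in at most $t+1$ vertices. Reading $TG$ from its leaves inward, this yields a $(t+1)$-expression of the terminal graph $G[F]$ (Proposition~\ref{pr:tw_expr}) whose base graphs and composition steps are faithfully encoded by the good triples $(S, C, \Omega)$ introduced before Proposition~\ref{pr:CompTr}. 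The DP will simulate such an expression while only knowing $\Pi_G$.

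For each full block $(S,C)$, I keep a table indexed by pairs $(F_S, c)$, where $F_S \subseteq S$ with $|F_S| \leq t+1$ is the candidate value of $F \cap S$ (treated as an ordered terminal set) and $c \in \cC$ is a homomorphism class of $\cP$. The stored value is the maximum of $|X \cap (S \cup C)|$ over pairs $(F,X)$ with $F \cap S = F_S$ and $X \subseteq F \subseteq S \cup C$, such that the $(t+1)$-terminal graph $(G[F \cap (S \cup C)], F_S)$ together with $X \cap (S \cup C)$ has class $c$. Since $|\cC|$ depends only on $t$ and $\cP$, and $F_S$ ranges over $O(n^{t+1})$ subsets, each table has $O(n^{t+1})$ entries. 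Entries are filled bottom-up over good triples. At $(S,C,\Omega)$, I enumerate a candidate $F_\Omega \supseteq F_S$ in $\Omega$ of size at most $t+1$ and a candidate $X_\Omega \subseteq F_\Omega$; the subblocks $(S_i, C_i)$ arising from the connected components of $G[C \setminus \Omega]$ are strictly smaller and already processed. Iteratively composing the base graph $G[F_\Omega]$ (with $F_\Omega$ as terminals and $X_\Omega$ as its part of the solution) with the subblock terminal graphs along the gluing terminals $F_{S_i} := F_\Omega \cap S_i$ is a bounded-length sequence of the binary operations from Section~2.1; by the integrity axiom, applying the corresponding update functions $\odot_f$ to the stored subblock classes yields the new class on $(G[F \cap (S \cup C)], F_S)$ and the new cardinality $|X_\Omega| + \sum_i |X \cap C_i|$. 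Subblocks sharing the same pair $(S_i, F_{S_i})$ can be pre-aggregated over $c_i \in \cC$ in constant time per class, so the amortised work per good triple is $n^{t + \cO(1)}$.

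Multiplying by the $O(n\,|\Pi_G|)$ good triples (as stated before Proposition~\ref{pr:CompTr}) gives the promised $|\Pi_G|\,n^{t+\cO(1)}$ running time; the final answer is read at the root block $(\emptyset, V)$, maximised over accepting classes. Correctness splits into two directions: every DP entry is realised by an explicit pair $(F, X)$ built along the composition tree of the recursion, and conversely, for any optimum $(F^*, X^*)$ Proposition~\ref{pr:CompTr} yields a minimal triangulation of $G$ extending one of $G[F^*]$, whose clique tree provides via Proposition~\ref{pr:tw_expr} a $(t+1)$-expression built from good triples the DP explicitly enumerates, so that nested $\odot_f$'s produce the class $h(G[F^*], X^*)$ at the root. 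I expect the main obstacle to be formalising the composition operations $f$ at each good triple within the strict setup of Section~2.1: one must fix a canonical ordering of terminals and write down the matrices $m(f)$ ensuring that no vertex is identified twice and that $F_S$ re-emerges as the outgoing terminal set, then verify that Borie et al.'s update functions applied to these $f$'s indeed compute the class of the glued graph. Once this bookkeeping is settled, the remainder is a direct generalisation of the proof of Proposition~\ref{pr:IndSubgrTwT}.
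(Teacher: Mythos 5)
There is a gap, and it is precisely at the point where the corollary differs from the main theorem. What you have written is, in essence, a sketch of the proof of Theorem~\ref{theorem_main} (the paper's Section~3 algorithm: tables $\alpha(S,C,W,c)$ on full blocks, tables on good triples $(S,C,\Omega)$, gluing the subblocks of $G[C\setminus\Omega]$ via the update functions $\odot_f$, correctness via Propositions~\ref{pr:CompTr} and~\ref{pr:tw_expr}), and as such it matches the paper's approach to that theorem. But Theorem~\ref{theorem_main} explicitly assumes $\Pi_G$ \emph{is given in the input}, and your proposal makes the same assumption (``while only knowing $\Pi_G$''). The corollary drops that assumption: its entire additional content is that $\Pi_G$ can be \emph{obtained} within the claimed time bounds. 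You never address this, so the two claims of the corollary --- $\cO(\npmc^n)$ for arbitrary graphs and polynomial time for classes with polynomially many minimal separators --- are not actually derived.

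The missing step is short but essential, and it is exactly how the paper proves the corollary (in one line, from Theorem~\ref{theorem_main}). For arbitrary graphs, invoke Proposition~\ref{pr:listing_pmc_gen}: $\Pi_G$ can be listed in $\cO(\npmc^n)$ time, hence $|\Pi_G| = \cO(\npmc^n)$ and the total running time $|\Pi_G|\, n^{t+\cO(1)}$ is still $\cO(\npmc^n)$ (the polynomial factor is absorbed by the rounding of the base). For a class with polynomially many minimal separators, invoke Proposition~\ref{th:listing_minsep} to list $\Delta_G$ in time $\cO(n^3|\Delta_G|)$ and Proposition~\ref{pr:listing_pmc} to conclude $|\Pi_G| \leq n|\Delta_G|^2 + n|\Delta_G| + 1$ with enumeration in $\cO(n^2 m |\Delta_G|^2)$ time, both polynomial; then the DP runs in polynomial time. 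Without these citations (or equivalent arguments bounding $|\Pi_G|$ and the cost of enumerating it), your argument only establishes the conditional statement of Theorem~\ref{theorem_main}, not the corollary. Re-deriving the whole dynamic program is also unnecessary here: once Theorem~\ref{theorem_main} is available, the corollary should simply cite it.
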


\subsection{Notations and data structures}
Our algorithm proceeds by dynamic programming on blocks and on good triples. The general 
strategy of dynamic programming over  blocks and good triples follows the ideas from 
\cite{FKTV08} and \cite{Fomin:2010ys} for computing the treewidth and subgraphs of bounded treewidth. 
However, the devil is in details, and we need more work to make this strategy applicable for our problem.

Recall that in our definition of $(t+1)$-terminal graphs, the set of terminals is ordered. The vertices of our graph are numbered from $1$ to $n$. An ordered set $W$ of vertices corresponds to
this natural ordering over set $W$. Property $\cP$ is regular, so notations $\cC$, $h$ and $\odot_f$ correspond to Definition~\ref{de:regular}.

Let $G[F]$ be an induced subgraph of $G$ and let $TF$ be a triangulation
of $G[F]$. We say that a minimal triangulation $TG$ of $G$ \emph{respects} $TF$ if, for any clique $K$ of $TG$, its intersection with $F$ is a clique in $TF$. By Proposition~\ref{pr:CompTr}, if $G[F]$ is of treewidth at most $t$, then there exists a (minimal) triangulation $TF$ of $G[F]$ of width at most $t$, and a minimal triangulation $TG$ of $G$ respecting $TF$.

The next definition and the following notations are crucial for our algorithm. 
\begin{definition}[Partial Compatible Solution]\label{de:partsol}
Let $(S,C)$ denote a full block and $(S, C, \Omega)$ denote a good triple. Let $W \subseteq S$ (resp. $W \subseteq \Omega)$ be a vertex subset of size at most $t+1$ and $c \in \cC$ be a homomorphism class for property $\cP$. We say that $(G[F],X)$ is a \emph{partial solution compatible with} $(S,C, W, c)$ (resp. with  $(S, C, \Omega, W, c)$) if:
\begin{enumerate}

\item $F \subseteq S \cup C$ and $F \cap S = W$ (resp. $F \cap \Omega = W$);
\item the $(t+1)$-terminal recursive graph $H=(F, W, E(G[F]))$ satisfies $h(H,X)=c$;
\item there is a  triangulation $TF$ of $G[F]$ of width at most $t$ and a minimal triangulation $TG$ of $G$ respecting $TF$, such that $S$ is a minimal separator (resp. $\Omega$ is a maximal clique) of $TG$.
\end{enumerate}
\end{definition}
The third condition implies that $W$ is a clique in the triangulation $TF$ of $G[F]$. 

Let $\alpha(S,C,W,c)$ (resp. $\beta(S, C, \Omega, W, c)$) denote the size of a largest vertex subset $X$ such that $(G[F],X)$ is a partial solution compatible with $(S,C, W, c)$ (resp. compatible with $(S, C, \Omega, W, c)$).
Observe that in the $\beta$ function,  $W$ represents the intersection between the partial solution and the potential maximal clique $\Omega$, while in the definition of the $\alpha$ function, $W$ is the intersection of the partial solution with the minimal separator $S$. If partial compatible solutions do not exist, we simply set $\alpha$ or $\beta$ to $-\infty$.

\subsection{The algorithm}


Our algorithm proceeds by dynamic programming on full blocks and good triples. By~\cite{FKTV08}, the number of good triples is $O(n|\Pi_G|)$.
The blocks are first sorted by size. For each block $(S,C)$ by increasing size, we first compute the values $\beta(S,C,\Omega,W,c)$ from values $\alpha(S_i,C_i,W_i,c_i)$ corresponding to smaller blocks, then we compute the values $\alpha(S,C,W,c)$ from values $\beta(S,C,\Omega,W',c')$, as described in Algorithm~\ref{al:mainalg}.

		\begin{algorithm}[H]
		\caption{  \ois{}}\label{al:mainalg}
		\SetKwInput{Input}{Input}
		\SetKwInput{Output}{Output}
		\SetKw{all}{all}
		\Indm
		\Input{graph $G$ and $\Pi_G$} 
		\Output{ sets $X \subseteq F \subseteq V(G)$ such that  $G[F]$ has treewidth at most $t$, $\cP(G[F],X)$ is true and $X$ is of maximum size}
		\Indp
		\smallskip
		Order all full blocks by inclusion\;
		\ForAll{full blocks $(S,C)$ in this order}
		{
			\ForAll{good triples $(S,C,\Omega)$, \all $W \subseteq \Omega$ of size $\leq t+1$ and \all $c \in \cC$}
			{
				\eIf{$\Omega = S \cup C$}
				{
					Compute $\beta(S,C,\Omega,W,c)$ using Equation~\ref{eq:base_case}\;
				}
				{
					Compute $\beta(S,C,\Omega,W,c)$ using Equations~\ref{eq:delta}, \ref{eq:gamma1}, \ref{eq:gammai},  and~\ref{eq:beta} \;
				}
			}
			\ForAll{$W \subseteq S$ of size $\leq t+1$ and \all $c \in \cC$}
			{
				Compute $\alpha(S,C,W,c)$ using Equation~\ref{eq:alpha}\;
			}

		}
		Compute the optimal solution using Equation~\ref{eq:final}\;
		\end{algorithm}

Consider a $(t+1)$-terminal recursive graph $D=(V_D,T,E_D)$ and let $c$ be a homomorphism class. Although this is not explicitly required by the definition of regular properties (Definition~\ref{de:regular}), we may assume w.l.o.g. that all sets $Y$ such that $h(D,Y)=c$ have the same intersection with the set $T$ of terminals. (Otherwise, if sets $Y$ and $Y'$ have different intersections with $T$ but $h(D,Y)=h(D,Y')=c$, we can ``split" class $c$ in at most $2^{t+1}$ classes, one for each possible intersection between $T$ and such a vertex subset $Y$.) 
Moreover the class $c$ encodes the intersection of $Y$ with the set of terminals of $D$, i.e., given the homomorphism class $c$, we can retrieve the rank of the vertices of $Y \cap T$. 

Therefore we assume that we have a function $term(c,T)$, taking a class $c$ and an ordered set $T$ of terminals, and returning the terminals that belong to $Y$, for any $Y$ such that $h(D,Y)=c$.  

\paragraph{The base case.} The base case consists in minimal full blocks $(S, C, \Omega)$, in which case $\Omega = S \cup C$ by~\cite{BoTo01}. In this situation, for any partial solution $(G[F],X)$ compatible with $(S,C,\Omega,W,c)$ we must have $F=W$, hence $G[W]$ corresponds to a base $(t+1)$-terminal graph. Also, we must have $X = term(c,W)$, so $X$ is unique (or might not exist).

\begin{equation}\label{eq:base_case}
\beta(S, C, \Omega, W, c) =  \left\{ \begin{array}{l l}
		 |X| &\text{~if there is~} X \subseteq W \text{~such that~}h(G[W],X) = c \\
		-\infty &\text{~otherwise} 
		\end{array}
		\right.
\end{equation}

The computation of each value $\beta(S, C, \Omega, W, c)$ corresponding to a base case takes $O(n)$ time, because we have to store the value in a table indexed by $(S, C, \Omega,c)$. 
The number of good triples is $O(n|\Pi_G|)$ so altogether these computations take $O(n^{t+3}|\Pi_G|)$ time. (Actually, one can prove by a more careful analysis that the number of good triples corresponding to base cases is at most $n$.)


\paragraph{Computing $\alpha$ from $\beta$.}
Our goal is to compute $\alpha(S,C,W,c)$ from values $\beta(S,C,\Omega,W',c')$ such that $(S,\Omega,C)$ is a good triple and $W = W' \cap S$. 

Consider any partial solution $(G[F],X)$ compatible with $(S,C,W,c)$. Let $TF$ be a  triangulation of $G[F]$ like in Definition~\ref{de:partsol} and let $TG$ be a  minimal triangulation of $G$ respecting $TF$. Let $\Omega$ be the  maximal clique of $TG$ such that $S \subseteq \Omega \subseteq S \cup C$ (this clique is unique by~\cite{BoTo01}) and take $W' = \Omega \cap F$.  Note that $(G[F],X)$ is also a partial compatible solution for $(S,C,\Omega, W', c')$ where $c'$ is the homomorphism class of $h(H',X)$; here $H'$ is the $(t+1)$-terminal recursive graph $(F, W', E(G[F]))$. Also observe that  the $(t+1)$-terminal graph $H=(F,W,G[F])$ is obtained from $H'$ by the unary composition operation $f(W',W)$ that consists in removing $W' \sm W$ from the set of terminals, and possibly renumbering the remaining terminals. Therefore  $\odot_{f(W',W)}(c')=c$.

We claim that:
\begin{lemma}\label{le:a_from_b}
\begin{equation}\label{eq:alpha}
	\alpha(S,C,W,c) = \max \beta(S,C,\Omega,W',c'),
\end{equation}
where the maximum is taken over \pmc s $\Omega$ such that $(S,C,\Omega)$ is a good triple, all subsets $W' \subseteq \Omega$ of size at most $t+1$ such that $W' \cap S = W$ and all classes $c' \in \cC$ such that 
$\odot_{f(W',W)}(c')=c$.
\end{lemma}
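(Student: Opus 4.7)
The plan is to prove the equality by proving both inequalities separately. The easier direction should be $\alpha(S,C,W,c) \geq \max \beta(S,C,\Omega,W',c')$, while the converse requires extracting a \pmc{} from a triangulation. Throughout, the bookkeeping is dictated by two observations: (a) $\alpha$ indexes partial solutions by their intersection with the minimal separator $S$, while $\beta$ indexes them by their intersection with the (larger) \pmc{} $\Omega \supseteq S$; (b) the operation $f(W',W)$ simply ``forgets'' the terminals of $W' \setminus W$, so the regularity of $\cP$ ensures $\odot_{f(W',W)}$ governs the corresponding transition between homomorphism classes.

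For $\alpha \geq \beta$: I would start from a partial solution $(G[F],X)$ compatible with $(S,C,\Omega,W',c')$, where the maximum is constrained by $(S,C,\Omega)$ being a good triple, $W' \cap S = W$ and $\odot_{f(W',W)}(c')=c$. Since $S \subseteq \Omega$ (good triple), I would get $F \cap S = F \cap \Omega \cap S = W' \cap S = W$, and $F \subseteq S \cup C$ is inherited. Letting $H=(F,W,E(G[F]))$ and $H'=(F,W',E(G[F]))$, the $(t+1)$-terminal graph $H$ is obtained from $H'$ by the unary composition $f(W',W)$, so by regularity $h(H,X) = \odot_{f(W',W)}(h(H',X)) = \odot_{f(W',W)}(c') = c$. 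The same $TF$ and $TG$ witness the triangulation condition; the only nontrivial point is verifying that $S$ is a minimal separator of $TG$.

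For $\alpha \leq \max \beta$: I would take a partial solution $(G[F],X)$ compatible with $(S,C,W,c)$, together with the witnessing $TF$ and $TG$ where $S$ is a minimal separator of $TG$. A clique tree of $TG$ has an edge whose two endpoints correspond to maximal cliques whose intersection is $S$; picking the endpoint $\Omega$ lying on the ``$C$-side'' yields a maximal clique of $TG$ with $S \subseteq \Omega \subseteq S \cup C$. Since $TG$ is a minimal triangulation, $\Omega$ is a \pmc{} of $G$, so $(S,C,\Omega)$ is a good triple. Setting $W'=\Omega\cap F$, this is a clique in $TF$ (because $TG$ respects $TF$) and hence has size at most $t+1$. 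Defining $c' = h((F,W',E(G[F])),X)$, the same argument as above gives $\odot_{f(W',W)}(c')=c$ and $W' \cap S = W$, and the witness triangulations show that $(G[F],X)$ is compatible with $(S,C,\Omega,W',c')$, so $|X| \leq \beta(S,C,\Omega,W',c')$.

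The main obstacle I anticipate is the asymmetry in Definition~\ref{de:partsol}: the $\alpha$-condition requires $S$ to be a minimal separator of $TG$, while the $\beta$-condition requires $\Omega$ to be a maximal clique of $TG$. In the $\alpha \geq \beta$ direction this forces me to justify that, when $(S,C,\Omega)$ is a good triple and $\Omega$ is a maximal clique of the minimal triangulation $TG$, $S$ is necessarily a minimal separator of $TG$. I would prove this via the Bouchitt\'e--Todinca characterization recalled in the preliminaries: the full component $D$ of $G \sm S$ different from $C$ satisfies $D \subseteq V \sm (S \cup C) \subseteq V \sm \Omega$ and remains a full component of $G \sm \Omega$ with $N_G(D)=S$, so $S$ appears as the intersection of $\Omega$ with the maximal clique of $TG$ sitting on the clique-tree branch rooted at $D$, hence as a minimal separator of $TG$. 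Once this triangulation-theoretic step is in place, the rest of the proof is essentially a transcription of the regularity of $\cP$ through the unary forget-operation $f(W',W)$.
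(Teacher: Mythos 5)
Your proposal is correct and follows essentially the same route as the paper: both directions are established by passing a partial solution between the $(S,C,W,c)$ and $(S,C,\Omega,W',c')$ indexings, using the unique maximal clique $\Omega$ of $TG$ with $S \subseteq \Omega \subseteq S\cup C$ and the unary forget operation $f(W',W)$ for the class transition. You are in fact slightly more careful than the paper's terse ``Observe that\ldots'' in the $\alpha \ge \beta$ direction, where you explicitly justify via the Bouchitt\'e--Todinca characterization that $S$ is a minimal separator of $TG$.
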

\begin{proof}
By the above observation, $\alpha(S,C,W,c)$ is at most the right-hand side of the equality. Conversely, let $(S,C,\Omega,W',c')$ be the quintuple realizing the maximum value of the right-hand side expression. Let $(G[F],X)$ be a partial solution compatible with $(S,C,\Omega,W',c')$. Observe that $(G[F],X)$ is also a partial solution compatible with $(S,C,W,c)$, hence $\alpha(S,C,W,c) \geq |X|$. This proves the correctness of the formula computing  $\alpha(S,C,W,c)$.
\end{proof}

For computing all values $\alpha(S,C,W,c)$ from values $\beta(S,C,\Omega,W',c')$, we proceed in a slightly different and more efficient way than the one described in the Algorithm~\ref{al:mainalg}. When 
$\beta(S,C,\Omega,W',c')$ is computed (lines 5 or 7 of the algorithm), if $\odot_{f(W',W)}(c')=c$ we simply update the value of $\alpha(S,C,W,c)$ by taking the maximum between the previous value and $\beta(S,C,\Omega,W',c')$. This only costs an extra $O(n)$ for each quintuple $(S,C,\Omega,W',c')$. The number of such quintuples is $\cO(n^{t+2}|\Pi_G|)$, thus the total cost of these computations is  $\cO(n^{t+3}|\Pi_G|)$.

\paragraph{Computing $\beta$ from $\alpha$.}
We now compute $\beta(S,C,\Omega,W,c)$ from values $\alpha(S_i, C_i, W_i, c_i)$ where $C_i$, $1 \leq i \leq p$ are the connected components of $G[C \setminus \Omega]$, 
$S_i = N_G(C_i)$, $W_i = C_i \cap S_i$ and $c_i$ are classes (still to be guessed). Recall that, by~\cite{BoTo01}, $(S_i, C_i)$ are full blocks.

Intuitively, let $(G[F],X)$  be an optimal partial solution for  $\beta(S,C,\Omega,W,c)$. We denote by $H = (F, W, E_H)$ the $(t+1)$-terminal recursive graph corresponding to $G[F]$ with terminal set $W$, and let $H_i = (F_i, W_i, E_i)$ be its trace on the smaller block $(S_i, C_i)$. Hence $F_i = F \cap (S_i \cup C_i)$, $W_i = W \cap S_i$ and $E_i = E(G[F_i])$. Also denote $X_i = X \cap (S_i \cup C_i)$. Observe that $H$ is obtained from the smaller $H_i$s as follows:
\begin{itemize}
\item on each $H_i$, we introduce the terminals of $W \sm W_i$, obtaining a graph $H_i^+ = (F_i \cup W, W, E^+_i)$ with $W$ as set of terminals and with $E^+_i = E(G[F_i \cup W])$ as edge set.
\item we perform a sequence of joins, gluing one by one $H^+_1, H^+_2,\dots, H^+_p$ on the same set of terminals $W$.
\end{itemize}

Formally, let us first define $\delta_i(S, C, \Omega, W,c_i^+)$ to be the size of the largest partial solution $(G[F^+_i],X^+_i)$ compatible with $(S,C,\Omega,W,c^+_i)$ such that $F^+_i \subseteq \Omega \cup C_i$. (This partial solution was denoted above by $H^+_i$,  $F^+_i$ corresponds to $F_i \cup W$, and $X^+_i$ is $X_i \cup (X \cap W)$.) Consider the composition operation $in(W_i,W)$ which takes two $(t+1)$-terminal graphs, with terminal sets $W_i$ and $W$ respectively, 
and composes them into a new $(t+1)$-terminal graph having $W$ as set of terminals. In the gluing operation, terminal number $j$ of $W_i$ is glued on terminal number $k$ of $W$ if and only if they correspond to the same vertex of $G$. Hence, this composition operation $in(W_i,W)$ only depends on $W_i$ and $W$. Let $X_W \subseteq W$, let $G[W]$ denote the base $(t+1)$- having $W$ as set of terminals, and $c_W$ be the homomorphism class $h(G[W],X_W)$. 

\begin{lemma}
\begin{equation}\label{eq:delta}
	\delta_i(S, C, \Omega, W, c_i^+) = \max_{c_i, c_W\text{~s.t.~} \odot_{in(W_i,W)}(c_i,c_W) = c^+_i} \alpha(S_i,C_i,W_i, c_i) + |term(c_W,W) \setminus term(c_i,W_i)|
\end{equation}
over all classes $c_i$ and $c_W$ such that $\odot_{in(W_i,W)}(c_i,c_W) = c^+_i$ and $c_W = h(G[W],X_W)$ for some $X_W \subseteq W$.
\end{lemma}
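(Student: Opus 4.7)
I would prove the lemma by establishing the two inequalities separately, based on a natural decomposition/recomposition correspondence between partial solutions compatible with $(S,C,\Omega,W,c_i^+)$ lying inside $\Omega \cup C_i$, and pairs formed by a partial solution on the smaller block $(S_i,C_i)$ together with a choice of $X_W \subseteq W$. The key structural observation underlying everything is that since $C_i$ is a connected component of $G[C\setminus \Omega]$ and $S_i = N_G(C_i) \subseteq \Omega$, no edge of $G$ goes between $C_i$ and $\Omega \setminus S_i$. Together with $S_i \subseteq \Omega$ this gives $F^+_i \cap S_i = W \cap S_i = W_i$ and, crucially, $E(G[F^+_i]) = E(G[F_i]) \cup E(G[W])$. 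Hence the $(t+1)$-terminal graph $H^+_i$ is exactly the composition $in(W_i,W)(H_i,G[W])$, so integrity of the operation $\odot_{in(W_i,W)}$ applies.

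For the direction $\delta_i \geq \text{RHS}$, I would take a maximizing choice of $c_i$ and $c_W$, pick a partial solution $(G[F_i],X_i)$ compatible with $(S_i,C_i,W_i,c_i)$ realizing $\alpha(S_i,C_i,W_i,c_i)$, and set $X_W = term(c_W,W)$, $F^+_i = F_i \cup W$, $X^+_i = X_i \cup X_W$. I would then verify the three conditions of Definition~\ref{de:partsol} for $(G[F^+_i],X^+_i)$ with respect to $(S,C,\Omega,W,c_i^+)$: inclusion $F^+_i \subseteq \Omega \cup C_i$ and $F^+_i \cap \Omega = W$ are direct; the class condition $h(H^+_i,X^+_i)=c_i^+$ follows from integrity; and the triangulation condition is obtained by extending the triangulation $TF_i$ witnessing compatibility at level $(S_i,C_i)$ by turning $W$ into a clique, which is consistent with a minimal triangulation of $G$ in which $\Omega$ is a maximal clique (since $W \subseteq \Omega$). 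The count is then $|X^+_i| = |X_i| + |X_W \setminus (X_i \cap W_i)| = \alpha(S_i,C_i,W_i,c_i) + |term(c_W,W) \setminus term(c_i,W_i)|$, using that $X_i \cap W_i = term(c_i,W_i)$.

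For the reverse direction $\delta_i \leq \text{RHS}$, I would start from an optimal partial solution $(G[F^+_i],X^+_i)$ realizing $\delta_i$ and define $F_i = F^+_i \cap (S_i \cup C_i)$, $X_i = X^+_i \cap F_i$, $X_W = X^+_i \cap W$. The observations above show $F^+_i = F_i \cup W$ and that $H^+_i = in(W_i,W)(H_i,G[W])$. Setting $c_i := h(H_i,X_i)$ and $c_W := h(G[W],X_W)$, integrity yields $\odot_{in(W_i,W)}(c_i,c_W) = c_i^+$. Restricting the triangulations witnessing compatibility of $(G[F^+_i],X^+_i)$ (via Proposition~\ref{pr:CompTr}, with $S_i$ corresponding to an edge of the clique tree of $TG$ separating $\Omega$ from $C_i$) shows $(G[F_i],X_i)$ is compatible with $(S_i,C_i,W_i,c_i)$, so $|X_i| \leq \alpha(S_i,C_i,W_i,c_i)$. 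The same inclusion-exclusion count as above then gives $\delta_i \leq \alpha(S_i,C_i,W_i,c_i) + |term(c_W,W) \setminus term(c_i,W_i)|$.

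\textbf{Main obstacle.} The principal subtlety is the triangulation bookkeeping: in one direction one must verify that splicing a triangulation of $G[F_i]$ together with a clique on $W$ can be extended to a minimal triangulation of $G$ in which $\Omega$ is a maximal clique, and in the other direction that restricting such a global triangulation yields a valid witness for the sub-block $(S_i,C_i)$ with $S_i$ as a minimal separator. Both steps rely on Proposition~\ref{pr:CompTr} and the standard property that, in a minimal triangulation of $G$ with maximal clique $\Omega$, the neighborhoods of the components of $G[C \setminus \Omega]$ correspond to minimal separators of the triangulation. The size-accounting, while routine, also requires care to avoid double-counting the terminals in $W_i$, which is exactly what the set difference $term(c_W,W) \setminus term(c_i,W_i)$ encodes.
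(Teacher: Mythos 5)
Your proof is correct and takes essentially the same route as the paper's: both directions are established via the decomposition $F_i = F^+_i \cap (S_i \cup C_i)$, the integrity of $\odot_{in(W_i,W)}$, and the count $|term(c_W,W)\setminus term(c_i,W_i)|$ for the vertices of $X^+_i$ outside $X_i$. The additional detail you supply --- the observation that there are no edges between $C_i$ and $\Omega\setminus S_i$, hence $H^+_i = in(W_i,W)(H_i,G[W])$, and the triangulation bookkeeping --- is exactly what the paper's terse ``Observe that\dots'' leaves implicit.
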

\begin{proof}
Let $(G[F^+_i],X^+_i)$ be a maximal partial solution compatible with $(S,C,\Omega,W,c^+_i)$ such that $F^+_i \subseteq \Omega \cup C_i$. 
Denote $F_i = F^+_i \cap (S_i \cup C_i)$, $X_i = X^+_i \cap (S_i \cup C_i)$, $X_W = X \cap W$. Observe that $(G[F_i],X_i)$ is a partial solution compatible with $(S_i,C_i,W_i, c_i)$ for some class $c_i$, that $c_W = h(G[W],X_W)$, and these classes must 
satisfy $\odot_{in(W_i,W)}(c_i,c_W) = c^+_i$. 
Hence $\delta_i(S,\Omega,C,W,c_i^+)$ is at most equal to the right-hand side of the equation (note that $term(c_W,W) \setminus term(c_i,W_i) = X^+_i \setminus X_i$).

Conversely, let $c_i, c_W$ be the classes maximizing the right-hand side of the equation. Take a maximum partial solution $(G[F_i],X_i)$ contained in $S_i \cup C_i$, compatible with $(S_i,C_i,W_i, c_i)$, where $ \odot_{in(W_i,W)}(c_i,c_W) = c^+_i$. Then the graph $(F_i \cup W, W, E(G[F_i \cup W]))$ together with the vertex subset $X_i \cup term(c_W,W)$ is a partial solution compatible with  $(S,C,\Omega,W,c^+_i)$, and the equality follows.
\end{proof}

 We introduce another notation $\gamma_i(S,C,\Omega,W,c)$, corresponding to the largest partial solution compatible with $(S,C,\Omega,W,c)$, contained into $\Omega \cup C_1 \cup \dots \cup C_i$. It corresponds to the gluing of some partial solutions $(H_1^+,X^+_1), \dots (H_i^+,X^+_i)$. 
 
 \begin{lemma}
 Function $\gamma_i$ is computed as follows.
\begin{equation}\label{eq:gamma1} 
	\gamma_1(S,C,\Omega,W,c) = \delta_1(S,\Omega,C,W,c)
	\end{equation}
For any $i$, $2 \leq i \leq p$, 
\begin{equation}\label{eq:gammai}
\gamma_i(S,C,\Omega,W,c) = \max_{c',c''} \gamma_{i-1}(S,C,\Omega,W,c') + \delta_i(S,\Omega,C,W,c'') - |term(c',W)|,
\end{equation}
 over all characteristics $c',c''\in \cC$ such that $\odot_{g(W)}(c',c'') = c$, where $g(W)$ is the composition operation corresponding to a join operation on $W$. I.e., the matrix $m(g(W))$ of $g(W)$ has $|W|$ rows, and $m_{j,1}(g(W))=m_{j,2}(g(W))=j$ for each row $j$.
\end{lemma}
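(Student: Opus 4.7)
I would prove the two equations by induction on $i$, handling each by a pair of inequalities between $\gamma_i$ and the claimed formula. The base case $i=1$ is essentially a tautology: by definition $\gamma_1(S,C,\Omega,W,c)$ is the size of the largest partial solution compatible with $(S,C,\Omega,W,c)$ contained in $\Omega \cup C_1$, which is exactly $\delta_1(S,\Omega,C,W,c)$. So equation~\ref{eq:gamma1} is immediate from the definitions of $\gamma_1$ and $\delta_1$.

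For the inductive step $i \geq 2$, the key observation is that every partial solution $(G[F],X)$ contained in $\Omega \cup C_1 \cup \dots \cup C_i$ decomposes canonically into its trace $(G[F'],X')$ on $\Omega \cup C_1 \cup \dots \cup C_{i-1}$ and its trace $(G[F''],X'')$ on $\Omega \cup C_i$. Since $C_1,\dots,C_i$ are distinct connected components of $G[C\setminus \Omega]$, we have $F' \cap F'' = F \cap \Omega \supseteq W$; moreover there are no edges of $G$ between $C_i$ and the other $C_j$'s, so $G[F]$ is literally the join of $G[F']$ and $G[F'']$ on their common vertex set $F \cap \Omega$. Viewed as $(t+1)$-terminal recursive graphs with terminal set $W$, this says $(F,W,E(G[F])) = g(W)\bigl((F',W,E(G[F'])),(F'',W,E(G[F'']))\bigr)$, where $g(W)$ is the join composition described in the statement. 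By the integrity-of-operations axiom in Definition~\ref{de:regular}, the homomorphism classes satisfy $\odot_{g(W)}(h',h'') = h$ where $h',h'',h$ are the classes of the three pairs. A completely analogous triangulation argument (using Proposition~\ref{pr:CompTr} and the fact that the minimal triangulation $TG$ of $G$ respecting $TF$ restricts to valid minimal triangulations on each side) shows that $(G[F'],X')$ is a partial solution compatible with $(S,C,\Omega,W,c')$ for $c' = h'$ and $(G[F''],X'')$ is compatible with $(S,C,\Omega,W,c'')$ for $c'' = h''$.

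This yields the $\leq$ direction: picking $(G[F],X)$ realizing $\gamma_i$, the decomposition above provides candidates $c',c''$ with $\odot_{g(W)}(c',c'')=c$, and $|X| = |X'| + |X''| - |X \cap W|$. Since the class $c'$ (which encodes the partial solution's intersection with $W$) determines $X\cap W = term(c',W)$, the right-hand side of equation~\ref{eq:gammai} is at least $|X|$. Conversely, for the $\geq$ direction, take any $c',c''$ with $\odot_{g(W)}(c',c'')=c$ realizing the maximum: the composition $(c',c'')$ being valid forces $term(c',W)=term(c'',W)$ (they must agree on $W$, else the join would be undefined), and gluing optimal partial solutions for $\gamma_{i-1}(S,C,\Omega,W,c')$ and $\delta_i(S,\Omega,C,W,c'')$ along $W$ produces a partial solution compatible with $(S,C,\Omega,W,c)$ of size $\gamma_{i-1}(S,C,\Omega,W,c') + \delta_i(S,\Omega,C,W,c'') - |term(c',W)|$, which is therefore at most $\gamma_i(S,C,\Omega,W,c)$.

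The main obstacle is bookkeeping the triangulation constraints across the decomposition, in particular verifying that the triangulation of the glued graph restricts to valid width-$\leq t$ triangulations on each side and, in the converse direction, that two minimal triangulations of $G$ respecting the traces on $\Omega \cup C_1 \cup \dots \cup C_{i-1}$ and on $\Omega \cup C_i$ can be coherently merged. This follows from the structure of \pmc s: since $\Omega$ is a maximal clique of the ambient triangulation $TG$ and the $S_j$'s are minimal separators cutting off the $C_j$'s, the triangulations of the pieces agree on $\Omega$ and their union is again a minimal triangulation of $G$ in which $\Omega$ remains a maximal clique, so Definition~\ref{de:partsol}(3) is preserved on both sides of the recurrence.
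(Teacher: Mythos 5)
Your proposal is correct and follows essentially the same route as the paper's (much terser) proof: the base case is a tautology, and the inductive step decomposes a partial solution over $\Omega \cup C_1 \cup \dots \cup C_i$ into its traces on $\Omega \cup C_1 \cup \dots \cup C_{i-1}$ and on $\Omega \cup C_i$, composes the homomorphism classes via the join $g(W)$ using the integrity-of-operations axiom, and subtracts $|term(c',W)|$ to correct for double-counting $X \cap W$. Your additional attention to the triangulation bookkeeping and to the fact that validity of $\odot_{g(W)}$ forces $term(c',W)=term(c'',W)$ fills in details the paper leaves implicit (note only that $F'\cap F''$ equals $W$ exactly, not merely contains it, by condition (1) of Definition~\ref{de:partsol}).
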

\begin{proof}
The proof is trivial for $\gamma_1$.

Now for any $F \subseteq \Omega \cup C_1 \cup \dots \cup C_i$, note that $(G[F],X)$ is a partial solution compatible with $(S,C,\Omega,W,c)$ if and only if $(G[F \setminus C_i],X \setminus C_i)$ (resp. $(G[F \setminus (C_1 \cup \dots \cup C_{i-1})], X \setminus (C_1 \cup \dots \cup C_{i-1}))$) are partial solutions compatible with  $(S,\Omega,C,W,c'')$ (resp. $(S,C,\Omega,W,c')$) and $\odot_{g(W)}(c',c'') = c$. The term  $|term(c',W)|$ corresponds to $X \cap W$ and avoids over-counting of these vertices.
\end{proof}

The following result is a direct consequence of the definition of $\beta$ and $\gamma$ functions.
\begin{lemma}
\begin{equation}\label{eq:beta}
\beta(S,C,\Omega,W,c) = \gamma_p(S,C,\Omega,W,c).
\end{equation}
\end{lemma}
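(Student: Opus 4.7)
The plan is to observe that the family of partial solutions maximized by $\beta(S,C,\Omega,W,c)$ coincides with the family maximized by $\gamma_p(S,C,\Omega,W,c)$, so their maxima are equal. The recursive content — namely that the formulas (\ref{eq:gamma1}) and (\ref{eq:gammai}) correctly compute $\gamma_p$ from $\delta_1, \dots, \delta_p$ — has already been established by the preceding lemmas; all that remains is a set-theoretic identity.

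First I would establish the support equality
\[
S \cup C \;=\; \Omega \cup C_1 \cup C_2 \cup \cdots \cup C_p.
\]
Since $(S,C,\Omega)$ is a good triple we have $S \subseteq \Omega \subseteq S \cup C$, hence $S \cup C = \Omega \cup C$. Moreover, by definition $C_1, \dots, C_p$ are the connected components of $G[C \setminus \Omega]$, so $C \setminus \Omega = C_1 \cup \cdots \cup C_p$ and therefore $\Omega \cup C = \Omega \cup C_1 \cup \cdots \cup C_p$, as claimed.

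Next I would unwind the definitions. By Definition~\ref{de:partsol}(1), every partial solution $(G[F],X)$ compatible with $(S,C,\Omega,W,c)$ satisfies $F \subseteq S \cup C$, and $\beta(S,C,\Omega,W,c)$ is the maximum size of the corresponding $X$ over all such partial solutions. The quantity $\gamma_p(S,C,\Omega,W,c)$ is defined as the same maximum taken over the subfamily of partial solutions with the additional restriction $F \subseteq \Omega \cup C_1 \cup \cdots \cup C_p$. By the set equality above, this additional restriction is vacuous at $i=p$, so the two maxima range over exactly the same family of pairs $(G[F],X)$ and must coincide, yielding (\ref{eq:beta}).

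I do not foresee any genuine obstacle: the result is essentially a definitional tautology at the level of semantic objects, encoding the observation that after processing all $p$ components of $G[C \setminus \Omega]$, no vertex of the block $(S,C)$ remains unaccounted for. The only care required is to keep the two kinds of ``support'' separate — the constraint $F \subseteq S \cup C$ built into Definition~\ref{de:partsol}, and the constraint $F \subseteq \Omega \cup C_1 \cup \cdots \cup C_i$ built into the definition of $\gamma_i$ — and to verify that they agree precisely when $i = p$.
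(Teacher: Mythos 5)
Your proof is correct and matches the paper's treatment: the paper states the lemma as ``a direct consequence of the definition of $\beta$ and $\gamma$'' without further detail, and your argument simply makes explicit the underlying set identity $S \cup C = \Omega \cup C_1 \cup \cdots \cup C_p$ that renders the support restriction in $\gamma_p$ vacuous. No gap; you have just written out what the authors left implicit.
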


We claim that computing, for a fixed quadruple $(S,C,\Omega,W)$, the values $\beta(S,C,\Omega,W,c)$ from values $\alpha$,  takes $O(n^2)$ time. Again by~\cite{FKTV08}, the smaller blocks $(S_i,C_i)$ can be listed in $O(m)$ time. For each $i$, the computation of  function $\delta_i(S,\Omega,C,W,c_i^+)$ takes $O(|S_i|+|C_i|)$ time, because we need to access the values $\alpha(S_i,W_i,C_i,c_i)$. The sum of these values is at most $n+m$~\cite{FKTV08}. Computing $\gamma_i(S,C,\Omega,W,c)$ from values $\gamma_{i-1}$ and $\delta_i$ can be done in $O(n)$ time for each $i$.

Therefore the running time of the algorithm is the number of quintuples $(S, C, \Omega, W, c)$ times $n^2$, that is $O(|\Pi_G|n^{t+4})$.

\paragraph{The global solution.} It can be obtained by considering the (special) full block $(\emptyset, V)$. 
\begin{lemma}
The solution size is 
\begin{equation}\label{eq:final}
\max_c \alpha(\emptyset, V, \emptyset, c),
\end{equation}
 over all accepting classes $c$, i.e., classes such that $(h(G,X)=c)$ implies that $\cP(G,X)$. 
\end{lemma}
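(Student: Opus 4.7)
The plan is to establish the equality by proving the two inequalities independently, each by unfolding Definition~\ref{de:partsol} applied to the distinguished full block $(\emptyset, V)$ with terminal set $W = \emptyset$, and interpreting condition~(3) for this degenerate block in the natural way (the block is declared to be a full block by convention, and the ``$S$ is a minimal separator of $TG$'' clause is vacuously satisfied, so condition~(3) reduces exactly to the existence of a triangulation $TF$ of $G[F]$ of width at most $t$ together with a minimal triangulation $TG$ of $G$ respecting $TF$).

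For the upper bound $\max_c \alpha(\emptyset,V,\emptyset,c) \leq \textrm{OPT}$, I fix any accepting class $c$ and any partial solution $(G[F],X)$ compatible with $(\emptyset,V,\emptyset,c)$. Condition~(3) of Definition~\ref{de:partsol} immediately gives $\tw(G[F]) \leq t$. For property $\cP$, condition~(2) says $h(H,X)=c$ where $H=(F,\emptyset,E(G[F]))$ is the $(t+1)$-terminal graph with no terminals, and then the first clause of Definition~\ref{de:regular} (property $\cP$ is preserved by $h$) together with the definition of accepting class forces $\cP(G[F],X)$ to hold. Hence every such $X$ is feasible for \ois, so its size is at most $\textrm{OPT}$.

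For the lower bound, I start from an optimal pair $(F^*, X^*)$ for \ois{} and produce an accepting class $c^*$ and a compatible partial solution of size $|X^*|$. By Proposition~\ref{pr:CompTr}, since $\tw(G[F^*]) \leq t$, there exist a triangulation $TF^*$ of $G[F^*]$ of width at most $t$ and a minimal triangulation $TG^*$ of $G$ respecting it, which verifies condition~(3). Conditions~(1) with $S=\emptyset$ and $W=\emptyset$ is vacuous. I then set $c^* = h(H^*,X^*)$, where $H^* = (F^*,\emptyset,E(G[F^*]))$; condition~(2) holds by construction, so $(G[F^*],X^*)$ is a partial compatible solution for $(\emptyset, V, \emptyset, c^*)$. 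Finally, since $\cP(G[F^*],X^*)$ holds and $c^* = h(H^*,X^*)$, the class $c^*$ is accepting by definition, so $\alpha(\emptyset,V,\emptyset,c^*) \geq |X^*| = \textrm{OPT}$.

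The only delicate point I anticipate is formally justifying condition~(3) for the degenerate block $(\emptyset,V)$. Once that conventional reading is fixed (condition~(3) simply requires the existence of matching triangulations of $G[F]$ and $G$ of the appropriate widths), the rest of the argument is a direct unpacking of the definitions, with Proposition~\ref{pr:CompTr} providing the one nontrivial geometric input and the regularity of $\cP$ providing the logical input that ties $h(H^*,X^*)$ being accepting to $\cP(G[F^*],X^*)$ being true.
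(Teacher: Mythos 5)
Your proposal is correct and follows essentially the same two-step argument as the paper: one direction unpacks the definitions of compatible partial solutions and accepting classes (with the minor observation that the preservation clause of regularity is what lets you conclude $c^*$ is accepting from the single witness $(G[F^*],X^*)$), and the other direction invokes Proposition~\ref{pr:CompTr} to show the optimal solution is compatible with $(\emptyset, V, \emptyset, c)$ for the class of $(F,\emptyset,E(G[F]))$. Your write-up merely spells out details the paper leaves implicit, including the degenerate reading of condition~(3) for the block $(\emptyset,V)$.
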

\begin{proof}
By definition of regular properties and of $\alpha(\emptyset, V, \emptyset, c)$, our problem has a solution of size at least  $\max_c \alpha(\emptyset, V, \emptyset, c)$ over accepting classes $c$.

Let $(G[F],X)$ be a maximum size solution for our problem. By Proposition~\ref{pr:CompTr}, this solution is compatible with $\alpha(\emptyset, V, \emptyset, c)$ for the class $c$ of the $(t+1)$-terminal graph graph $(F, \emptyset, E(G[F]))$, which achieves the proof of the lemma.
\end{proof}

This latter computation takes constant time. 

The total running time of the algorithm is $O(|\Pi_G|n^{t+4})$. Note that, instead of keeping the size of the largest solution $(G[F],X)$, we could explicitly store the vertex subsets $(F,X)$ of $G$.

\subsection{Extensions}\label{ss:extension}

Theorem~\ref{theorem_main} can be extended to \emph{weighted} and \emph{annotated} versions of problem \ois{}, for any $t\geq 0$ and any regular property $\cP$. 

\defproblem{\owais{}}{A graph $G=(V,E)$ a weight function $w : V \rightarrow \mathbb{R}$, a set $U \subseteq V$ of annotated vertices and a number $t$.}{Find sets $X \subseteq F\subseteq V$ such that $F$ contains $U$, the  induced subgraph $G[F]$ is of treewidth at most $t$, property $\cP(G[F],X)$ is true and $X$ is of maximum weight under these conditions.}

\begin{theorem}\label{th:extensionswa}
For any fixed $t$ and any regular property $\cP$, the problem \owais{} is solvable in $|\Pi_G| n^{O(1)}$ time, when $\Pi_G$ is given in the input. 

In particular the problem can be solved in $\cO(\npmc^n)$ time for arbitrary graphs, and in polynomial time for classes of graphs with polynomial number of minimal separators. 
\end{theorem}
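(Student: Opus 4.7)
The plan is to reuse the dynamic programming scheme of Theorem~\ref{theorem_main} essentially verbatim, modifying only the objective stored in each cell (to accommodate weights) and the definition of ``compatible partial solution'' (to accommodate the annotation constraint $U \subseteq F$). Crucially, neither modification affects the collection of good triples, the homomorphism classes $\cC$, or the update functions $\odot_f$, so the combinatorial structure driving the algorithm is untouched.

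\textbf{Weights.} I would redefine $\alpha(S,C,W,c)$ and $\beta(S,C,\Omega,W,c)$ as the maximum \emph{weight} $w(X)$ of a compatible partial solution rather than its size $|X|$. Every transition in the algorithm aggregates sizes of disjoint sets $X$, so replacing each cardinality by the corresponding weighted sum yields correct recurrences. Concretely, in the base case~\eqref{eq:base_case} we store $w(X)$ in place of $|X|$; in~\eqref{eq:alpha} we take a max of weighted values; in~\eqref{eq:delta}, \eqref{eq:gamma1}, \eqref{eq:gammai} and~\eqref{eq:beta} every occurrence of $|\mathit{term}(c,W)|$ becomes $w(\mathit{term}(c,W))$, and the ``anti-double-counting'' term $|\mathit{term}(c',W)|$ in~\eqref{eq:gammai} becomes $w(\mathit{term}(c',W))$. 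Since $\mathit{term}(c,W)$ is already determined by $c$ and $W$, these weighted quantities are computed in constant time from a precomputed prefix-sum table.

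\textbf{Annotations.} I would enforce $U \subseteq F$ by tightening the notion of compatibility in Definition~\ref{de:partsol}: a partial solution $(G[F],X)$ compatible with $(S,C,W,c)$ must additionally satisfy $U \cap (S \cup C) \subseteq F$, and a partial solution compatible with $(S,C,\Omega,W,c)$ must additionally satisfy $U \cap (S \cup C) \subseteq F$ together with $U \cap \Omega \subseteq W$ (since $F \cap \Omega = W$). These constraints propagate cleanly through the recurrences: in the base case we set $\beta(S,C,\Omega,W,c) = -\infty$ whenever $U \cap (S\cup C) \not\subseteq W$; in~\eqref{eq:alpha} we set $\alpha(S,C,W,c) = -\infty$ whenever $U \cap S \not\subseteq W$; in~\eqref{eq:delta}--\eqref{eq:beta} the constraint on each subblock $(S_i,C_i)$ is automatically captured by the already-computed values $\alpha(S_i,C_i,W_i,c_i)$, while the constraint $U \cap \Omega \subseteq W$ is checked once when the quintuple is enumerated. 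The global extraction~\eqref{eq:final} is unchanged, except that we read $\alpha(\emptyset,V,\emptyset,c)$ only after verifying that all annotated vertices can be housed within the solution, which is enforced recursively.

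\textbf{Correctness and running time.} Correctness of the modified recurrences follows by the same inductive arguments used for Theorem~\ref{theorem_main}, applied mutatis mutandis to the refined notion of compatibility: Proposition~\ref{pr:CompTr} still guarantees that an optimal $(G[F],X)$ induces, along a minimal triangulation $TG$ respecting a width-$\leq t$ triangulation of $G[F]$, precisely the block/good-triple decomposition exploited by the DP. The number of DP cells and the work per cell are identical to those in the proof of Theorem~\ref{theorem_main}, so the running time remains $\cO(|\Pi_G| \cdot n^{t+O(1)})$; combining with Propositions~\ref{pr:listing_pmc} and~\ref{pr:listing_pmc_gen} yields the $\cO(\npmc^n)$ bound on arbitrary graphs and polynomial time on classes with polynomially many minimal separators. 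The only mildly subtle point is the annotation handling, and it is tame because the constraint $U \subseteq F$ is a monotone, ``local'' condition that factors over full blocks.
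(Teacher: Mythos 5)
Your proposal is correct and follows essentially the same route as the paper: reuse the dynamic program of Theorem~\ref{theorem_main}, replace cardinalities by weights of the sets $\mathit{term}(c,W)$ in Equations~\ref{eq:base_case}--\ref{eq:final}, and handle annotation by setting $\alpha$ (resp.\ $\beta$) values to $-\infty$ when annotated vertices are missed. Your explicit invariant $U \cap (S\cup C)\subseteq F$ is a slightly more detailed justification than the paper's local checks $U\cap S\subseteq W$ and $U\cap\Omega\subseteq W$, but the algorithm and the complexity analysis are the same.
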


For this purpose, we slightly adapt the definitions of $\alpha$ and $\beta$ functions. In order to force the annotated vertices to be in $F$, each value $\alpha(S,C,W,c)$ 
(resp. $\beta(S,C,\Omega, W,c)$ such that $U \cap S \not\subseteq W$ (resp. $U \cap \Omega \not\subseteq W$) is immediately set to $-\infty$, meaning that such a partial solution is rejected.

In order to maximize the weight of the solution, the values  $\alpha(S,C,W,c)$ (respectively $\beta(S,C,\Omega, W,c)$) will correspond to the maximum weight over partial solutions compatible with 
$(S,C,W,c)$ (resp. $(S,C,\Omega, W,c)$). In the algorithm, we simply replace the cardinality of sets (e.g., $|X|$ in Equation~\ref{eq:base_case}, $|term(c',W)|$ is Equation~\ref{eq:gammai} and $|term(c_W,W) \setminus term(c_i,W_i)|$ in 
Equation~\ref{eq:delta}) by the weights of these sets. 

We also point out that the weights can be negative. In particular, we can use Theorem~\ref{th:extensionswa} to compute an induced subgraph $G[F]$ of treewidth at most $t$ and a subset $X \subseteq F$ such that $\cP(G[F],X)$ is true, and $X$ is of minimum size (or weight) under these conditions.

\medskip

One can imagine more extensions of Theorems~\ref{theorem_main} and~\ref{th:extensionswa}. A natural one consists in finding sets $X$ and $F$ such that the size of $X$ is \emph{exactly} an input value $v$. For this purpose, we can adapt our definitions of $\alpha$ and $\beta$ to store, for each possible value $v' \leq v$, a boolean $\alpha(S,C,W,c,v')$ (resp. $\beta(S,C,\Omega,W,c,v')$), set to \emph{true} if and only if there exists partial solution $(G[F'],X')$ compatible with $(S,C,W,c)$ (resp. $(S,C,\Omega,W,c)$ such the size of $X'$ is exactly $v'$. The computation of $\alpha$ and $\beta$ is quite straightforward, by adapting Equations~\ref{eq:base_case} to~\ref{eq:final}. The complexity of the algorithm is multiplied by a polynomial factor. 

Even more involved, we can consider properties $\cP(G,X_1,\dots, X_p, E_1, \dots, E_q)$, where each $X_i$ is a vertex subset and each $E_j$ is an edge subset of graph $G$. The notion of regularity extends in a very natural way to several variables. Recall that Borie \textit{et al.}~\cite{BPT92} proved that \emph{all} properties expressible by CMSO-formulae are regular, so we are allowed to use any (fixed) number of free variables corresponding to vertex sets and edge sets. 

Let $t\geq 0$ be an integer and $\cP(G,X_1,\dots, X_p, E_1, \dots, E_q)$ be a regular property on graphs and vertex subsets $X_i$ and edge subsets $E_j$. We define the following generic problem. 

\defproblem{\cis{}}{A graph $G$, integer values $v_1,\dots, v_p \leq n$ and $w_1,\dots, w_p \leq \frac{n(n-1)}{2}$}
{Find $F\subseteq V$, sets $X_i \subseteq F$ and $E_j \subseteq E(G[F])$ such that
the  induced subgraph $G[F]$ is of treewidth at most $t$, $\cP(G,X_1,\dots, X_p, E_1, \dots, E_q)$ is true, each set $X_i$ is of size $v_i$ and each set $E_j$ is of size $w_j$.}
  
Since property $\cP$ is regular, we need to adapt the definition of partial solutions to more variables (again very naturally) and then we define as above boolean functions  $$\alpha(S,C,W,c,v'_1, \dots, v'_p,w'_1 \dots, w'_q),$$ respectively $$\beta(S,C,\Omega,W,c,v'_1, \dots, v'_p,w'_1 \dots, w'_q)$$ to be \emph{true} if there exists a partial solution $(G[F'],X'_1,\dots, X'_p, E'_1, \dots, E'_q)$ compatible with $(S,C,W,c)$ (resp. $(S,C,\Omega,W,c))$ such that each $X'_i$ is of size $v'_i$ and each $E'_j$ is of size $w'_j$. For computing the $\alpha$ and $\beta$ values, we must again adapt Equations~\ref{eq:base_case} to~\ref{eq:final}. Basically, for each class $c$, the function $term(c,W)$ used in the equations for a homomorphism class $c$ and an order set of terminals $W$ must now return each intersection of type $X'_i \cap W$ for vertex sets and $E'_j \cap G[W]$ for edge sets. These intersections will be used to avoid overcounting when glueing partial solutions. The complexity of the algorithm becomes larger by a factor of  $n^{\cO(p+q)}$.

Therefore we can solve problems like finding, among maximum induced subgraph of treewidth at most $t$, the one with minimum dominating set.

%

%
%


\section{Applications}\label{se:applications}
In this section we discuss several applications of Theorem~\ref{theorem_main}. Our results are summarized in the following theorem. Recall that the problems have been defined in the \emph{Introduction}.

\begin{theorem}\label{thm:applications} Let $G$ be an $n$-vertex graph given together with the set of its potential maximal cliques $\Pi_G$. Then
\begin{itemize}
\item
 \amims,
 \item \mislkc,
 \item \mislf, where $\cF$ contains a planar graph, 
   \item \igp, 
    and 
\item \kig
   
 \end{itemize}
 are solvable in time 
$|\Pi_G|\cdot   n^{\cO(1)}$. Here the hidden constants in $\cO$ depend on $m,p,\ell$, 
$\cF$, $t$, and $\varphi$. 
\end{theorem}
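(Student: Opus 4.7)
\medskip
\noindent\textbf{Proof plan for Theorem~\ref{thm:applications}.} The strategy for all five items is uniform: in each case I would express the problem as an instance of \ois{} (or its annotated/weighted variant \owais) and invoke Theorem~\ref{theorem_main} (respectively Theorem~\ref{th:extensionswa}). Concretely, for each problem I must exhibit (a) a constant $t$, depending only on the fixed parameters ($m$, $p$, $\ell$, $\cF$, the internal treewidth bound, or $\varphi$), such that every feasible induced subgraph $G[F]$ has $\tw(G[F])\leq t$; and (b) a CMSO-formula $\varphi_P$ on $(G[F],X)$ whose models are exactly the feasible pairs of the original problem, with $|X|$ (or its weight) equal to the objective. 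Once (a) and (b) are in place, Proposition~\ref{prop:borie} guarantees that the corresponding property is regular, and Theorem~\ref{theorem_main} yields the claimed $|\Pi_G|\cdot n^{\cO(1)}$ running time.

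The main obstacle, and the step I would treat most carefully, is (a): producing the treewidth bound. For \amims{}, I would combine Thomassen's theorem (graphs with no cycle of length $0\imod m$ have treewidth bounded by a function of $m$) with an Erd\H{o}s--P\'osa type duality for $\cF_m$-cycles: if $G[F]$ admits at most $\ell$ pairwise vertex-disjoint cycles from $\cF_m$, then a hitting set $T\subseteq F$ of size bounded by some $h(m,\ell)$ intersects every such cycle, so $G[F\sm T]$ is $\cF_m$-cycle-free and $\tw(G[F])\leq |T|+\tw(G[F\sm T])+1$ is bounded in terms of $m$ and $\ell$. The same scheme handles \mislkc{} (Erd\H{o}s--P\'osa for long cycles and the folklore bound $\tw\leq p-2$ for graphs with no cycle of length $\geq p$), and \mislf{} (Erd\H{o}s--P\'osa for $H$-minors when $H$ is planar, plus the Robertson--Seymour excluded-planar-minor theorem bounding the treewidth of the residual graph). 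For \igp{} and \kig{}, the bound $t$ is inherited directly from the definition of $\mathcal{G}(t,\varphi)$, since each connected component of the sought $G[F]$ is in $\mathcal{G}(t,\varphi)$, hence $\tw(G[F])\leq t$.

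Step (b), the CMSO encoding, is then essentially routine, but uses the modular counting feature and nested quantification over vertex subsets. A cycle can be encoded as a vertex set $C$ such that $G[C]$ is connected and $2$-regular, and ``$|C|\equiv 0 \imod m$'' by $\mathbf{card}_{0,m}(C)$; thus ``no $\ell+1$ pairwise disjoint $\cF_m$-cycles'' is the negation of an existential statement over $\ell+1$ vertex subsets, and setting $X=F$ reduces \amims{} to maximizing $|X|$. The formula for \mislkc{} is analogous. For \mislf{}, for each fixed $H\in\cF$ I encode ``$G[F]$ contains $H$ as a minor'' by existentially quantifying pairwise disjoint connected branch sets $\{V_v\}_{v\in V(H)}$ together with the required edges between them; ``at most $\ell$ vertex-disjoint minor models from $\cF$'' is then a finite disjunction over $\cF^{\ell+1}$ of negated existentials, entirely within CMSO. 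For \igp{}, the formula asserts that each connected component of $G[F]$ satisfies $\varphi$ (and has the required treewidth, encoded by $t$) and that $X$ contains exactly one vertex per connected component, so that maximizing $|X|$ maximizes the number of components. Finally, for \kig{}, I would invoke the annotated version Theorem~\ref{th:extensionswa} with $U$ equal to the $k$ terminals, weights uniformly $1$, and $\varphi_P$ asserting both connectedness and the defining property of $\mathcal{G}(t,\varphi)$; a feasible induced subgraph exists iff the algorithm returns a non-$-\infty$ value.
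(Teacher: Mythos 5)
Your proposal is correct and follows the same skeleton as the paper: for each problem, exhibit a constant treewidth bound $t$ and a CMSO-formula for the feasibility property, invoke Proposition~\ref{prop:borie} for regularity, and apply Theorem~\ref{theorem_main} (or Theorem~\ref{th:extensionswa} for \kig, since $k$ is part of the input). The only place you diverge is in how the treewidth bounds for the three ``at most $\ell$ disjoint copies'' problems are justified. The paper cites propositions that already state the relaxed bound directly: Thomassen's result that graphs with at most $\ell$ disjoint cycles from $\cF_m$ have treewidth $\leq k(\ell,m)$ (Proposition~\ref{prop:Thomas_tw}), Birmel\'e et al.\ for at most $\ell$ disjoint long cycles (Proposition~\ref{prop_Birmele}), and the excluded-grid consequence for at most $\ell$ disjoint planar minor models (Proposition~\ref{prop:R_S}). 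You instead reconstruct these bounds from the corresponding Erd\H{o}s--P\'osa dualities plus the $\ell=0$ treewidth bound and the fact that deleting a hitting set $T$ increases treewidth by at most $|T|$. This is a valid alternative derivation (the needed Erd\H{o}s--P\'osa properties do hold for $\cF_m$-cycles, long cycles, and planar $H$-minors, and indeed this is essentially how the cited propositions are proved), but it buys nothing over citing them directly; the paper's route is shorter and avoids having to verify the duality statements. Your CMSO encodings (existential quantification over $\ell+1$ disjoint witnesses, negated) are logically equivalent to the paper's phrasing via partitions into $\ell+1$ parts, and the treatments of \igp{} and \kig{} coincide with the paper's.
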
 

Combined with Proposition~\ref{pr:listing_pmc_gen}, Theorem~\ref{thm:applications} implies the following. 
\begin{corollary}\label{cor:exact_algorithms}
Let $G$ be an $n$-vertex graph. All problems from Theorem~\ref{thm:applications} are solvable in time  $\cO(\npmc^n)$.
\end{corollary}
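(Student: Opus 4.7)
The plan is to derive this corollary as an immediate pipeline of two previously established results in the paper, namely Theorem~\ref{thm:applications} and Proposition~\ref{pr:listing_pmc_gen}. I do not anticipate any genuine obstacle, since the corollary is stated precisely as ``Combined with Proposition~\ref{pr:listing_pmc_gen}, Theorem~\ref{thm:applications} implies the following''; the only task is to verify that the time budgets compose correctly.

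Concretely, for an arbitrary $n$-vertex graph $G$, I would first invoke Proposition~\ref{pr:listing_pmc_gen} to enumerate the entire set $\Pi_G$ of potential maximal cliques of $G$ in time $\cO(\npmc^n)$. Since listing takes at least one step per element, this simultaneously yields the cardinality bound
\[
|\Pi_G| = \cO(\npmc^n).
\]
Next, for each of the five problems in the list of Theorem~\ref{thm:applications} (\amims, \mislkc, \mislf{} with $\cF$ containing a planar graph, \igp, and \kig), I would feed $G$ together with the precomputed $\Pi_G$ into the corresponding algorithm of Theorem~\ref{thm:applications}, which runs in time $|\Pi_G| \cdot n^{\cO(1)}$. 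The constants hidden in $\cO(\cdot)$ depend on the fixed parameters $m$, $p$, $\ell$, $\cF$, $t$, $\varphi$ of the specific problem, but these are all constants and therefore harmless.

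Summing the enumeration cost and the solution cost gives a total running time of
\[
\cO(\npmc^n) + |\Pi_G|\cdot n^{\cO(1)} \;=\; \cO(\npmc^n)\cdot n^{\cO(1)}.
\]
The last step of the plan is the standard absorption argument for exponential bounds: since $\npmc = 1.7347 > 1$, for every fixed $k$ we have $n^{k} = o(c^n)$ for any $c > 1$, so any polynomial overhead can be absorbed into the exponential base (which in particular is why results of this form in the exact-algorithms literature are conventionally written as $\cO(\npmc^n)$ rather than $\cOs(\npmc^n)$; the stated constant is already chosen to swallow such polynomial factors). This yields the claimed bound $\cO(\npmc^n)$ and completes the argument. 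The only subtlety worth flagging explicitly in the write-up is this polynomial-absorption convention; every other piece is a direct appeal to an earlier result.
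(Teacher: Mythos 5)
Your proposal is correct and takes exactly the route the paper itself intends: the corollary is stated as an immediate pipeline of Proposition~\ref{pr:listing_pmc_gen} (enumerate $\Pi_G$, hence $|\Pi_G| = \cO(\npmc^n)$) with the $|\Pi_G|\cdot n^{\cO(1)}$ algorithms of Theorem~\ref{thm:applications}, and the paper gives no further argument. Your explicit flag on the absorption step is also the right one to make: the bare fact $n^k = o(c^n)$ does not by itself give $n^k c^n = \cO(c^n)$, and the bound only closes because $\npmc = 1.7347$ is a rounded-up version of the strictly smaller enumeration constant proved in \cite{Fomin:2010ys}, so the polynomial factor is swallowed by the slack in the base.
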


The proof of Theorem~\ref{thm:applications} follows from 
  Theorem~\ref{theorem_main} and 
  Lemmata~\ref{lemma:parity_CMSOL},
\ref{lemma:long_cycle_CMSOL},
\ref{lemma:minor_CMSOL},
\ref{lemma:packing_CMSOL},  and
\ref{lemma:kinpath_CMSOL}.

Let us remark that Theorem~\ref{thm:applications} also holds for different modifications of these problems, like requirements of the maximum induced subgraph being connected, of maximum vertex degree at most some constant $\Delta$, etc. Such modifications easily capture problems like   computing a longest induced path, cycle, or an induced tree with given maximum vertex degree.

\medskip
\noindent
\textbf{Hitting and packing cycles of length $0\imod{m}$.} We will need the following  result of Thomassen.

\begin{proposition}[\cite{Thomassen88}]\label{prop:Thomas_tw}
For every integers $\ell,m>0$ there exists an integer $k(\ell,m)>0$ such that the treewidth of a graph with at most $\ell$ vertex-disjoint cycles from $\cF_m$ is at most  $k(\ell,m)$.
\end{proposition}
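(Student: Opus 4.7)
The plan is to deduce this statement from two structural theorems of Thomassen from the cited reference: an Erd\H{o}s--P\'osa-type theorem for cycles of length $\equiv 0 \pmod{m}$, combined with a ``base case'' asserting that graphs containing no such cycle at all have bounded treewidth. The two ingredients fit together in the natural way.

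First, I would invoke the Erd\H{o}s--P\'osa property for $\cF_m$-cycles: for every $p, m \geq 1$ there is an integer $N(p,m)$ such that every graph either contains $p$ pairwise vertex-disjoint cycles from $\cF_m$, or admits a \emph{hitting set} of at most $N(p,m)$ vertices whose removal destroys every cycle from $\cF_m$. Applying the contrapositive with $p = \ell + 1$ to our graph $G$, which by hypothesis has at most $\ell$ vertex-disjoint $\cF_m$-cycles, we obtain a set $S \subseteq V(G)$ with $|S| \leq N(\ell+1,m)$ such that $G - S$ is $\cF_m$-cycle free.

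Next, I would apply Thomassen's theorem stating that any graph without a cycle of length divisible by $m$ has treewidth bounded by a constant $c(m)$ depending only on $m$. (For $m=1$ this is trivial since $G-S$ is then a forest and $c(1)=1$; for $m\geq 2$ it requires a dedicated structural argument.) Hence $\tw(G - S) \leq c(m)$. Adding the vertices of $S$ to every bag of an optimal tree decomposition of $G - S$ yields a valid tree decomposition of $G$ whose width increases by at most $|S|$, so $\tw(G) \leq c(m) + |S| \leq c(m) + N(\ell+1,m)$, and setting $k(\ell,m) := c(m) + N(\ell+1,m)$ finishes the proof.

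The main obstacle is conceptual rather than technical: both ingredients are non-trivial theorems, the Erd\H{o}s--P\'osa property for $\cF_m$-cycles being substantially more delicate than the classical Erd\H{o}s--P\'osa theorem for arbitrary cycles, and the bounded treewidth of $\cF_m$-cycle-free graphs requiring a separate structural analysis. However, both results are due to Thomassen and available in the cited work, so nothing beyond the short assembly above is required here.
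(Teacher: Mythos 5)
The paper offers no proof of this proposition at all: it is imported verbatim from the cited work of Thomassen, so there is no internal argument to compare yours against. Your reconstruction is logically sound: the reduction ``no $\ell+1$ disjoint $\cF_m$-cycles $\Rightarrow$ hitting set $S$ of bounded size $\Rightarrow$ $\tw(G)\le \tw(G-S)+|S|$'' is correct, and the final step of adding $S$ to every bag is the standard way to absorb a bounded apex set into a tree decomposition. Both ingredients you invoke are genuine theorems, so the assembly works.

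One remark on the architecture, though. Your second ingredient --- that graphs with \emph{no} cycle of length $0\imod{m}$ have treewidth at most $c(m)$ --- is not a lightweight ``base case'': it is precisely the $\ell=0$ instance of the proposition being proved, and it is where all the structural difficulty lives (your parenthetical disposes of $m=1$ and $m=2$, but for general $m$ this cannot be reached via minimum-degree arguments, since large treewidth does not force large degeneracy). In the Erd\H{o}s--P\'osa literature the logical flow is usually the reverse of yours: one first proves the quantitative lemma ``treewidth $> w(\ell,m)$ forces $\ell+1$ vertex-disjoint cycles of length $0\imod{m}$'' --- which \emph{is} the proposition, in contrapositive form --- and only then derives the Erd\H{o}s--P\'osa property from it by a balanced-separator argument. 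So your proof consumes the stronger packaged theorem in order to recover what is essentially an intermediate lemma of its own proof. That is not circular, since you ground the $\ell=0$ case independently, but you should be careful in attributing that $\ell=0$ treewidth bound: it may not appear in the 1988 paper phrased in terms of treewidth, and your proof stands or falls with it. If you can cite it cleanly, your two-step assembly is a perfectly acceptable, and arguably more modular, derivation.
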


With the help of Proposition~\ref{prop:Thomas_tw}, we obtain the following lemma.

\begin{lemma}\label{lemma:parity_CMSOL}
\amims{} is a special case of \ois{} with $t=f(\ell,m)$, where  $f$ depends only on  $m$ and $\ell$.
\end{lemma}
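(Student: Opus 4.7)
The plan is to realize \amims{} as an instance of \ois{} by (i) choosing the treewidth bound $t$ large enough to capture every feasible solution, and (ii) encoding the cycle-packing constraint together with $X=F$ as a CMSO-formula, so that regularity follows from Proposition~\ref{prop:borie}.

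First I would set $t := k(\ell,m)$, the constant given by Proposition~\ref{prop:Thomas_tw}. Then every graph $H$ containing at most $\ell$ pairwise vertex-disjoint cycles of $\cF_m$ satisfies $\tw(H)\leq t$. Consequently, for every candidate $F$ that is feasible for \amims, the induced subgraph $G[F]$ already has treewidth at most $t$, so imposing the extra bounded-treewidth constraint of \ois{} discards no feasible solutions.

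Next I would design the CMSO formula $\varphi(H,X)$ as the conjunction of two parts. The first part, $X=V(H)$, is written $\forall v\,(v\in X)$; this forces $|X|=|V(H)|=|F|$ in any feasible solution, so maximizing $|X|$ coincides with maximizing $|F|$. The second part expresses ``$H$ contains no collection of $\ell+1$ pairwise vertex-disjoint cycles of length $\equiv 0\pmod m$''. For this I would negate an existentially quantified sentence over $\ell+1$ edge-sets $E_1,\dots,E_{\ell+1}$ asserting, for each $i$: (a) the set $E_i$ is nonempty and every vertex of $H$ is incident to either $0$ or $2$ edges of $E_i$, (b) the subgraph $(V_i,E_i)$, where $V_i$ is the set of vertices incident to $E_i$, is connected (connectivity of an edge set being a standard MSO${}_2$ property), (c) $\mathbf{card}_{0,m}(E_i)$ holds, and (d) $V_i\cap V_j=\emptyset$ for $i<j$. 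Note that parts (a)–(b) say that $E_i$ is the edge set of a cycle, part (c) enforces its length to be $0\bmod m$, and part (d) enforces vertex-disjointness; part (c) is the one place where the counting fragment of CMSO is essential, so MSO${}_2$ alone would not suffice.

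By Proposition~\ref{prop:borie}, the property $\cP(H,X)$ expressed by $\varphi$ is regular. Under this $\varphi$ and $t=k(\ell,m)$, an instance $G$ of \amims{} and the corresponding instance of \ois{} are equivalent: any optimal $F^*$ of \amims{} gives the \ois-solution $(X,F)=(F^*,F^*)$ of the same size, and conversely any \ois-solution forces $X=F$ by clause (a) of $\varphi$ and forces $G[F]$ to contain at most $\ell$ vertex-disjoint cycles of $\cF_m$ by the remaining clauses. The only mildly delicate point is to make sure that the ``$E_i$ is a cycle'' subformula is written in MSO${}_2$; this is routine using the incidence predicate $\mathbf{inc}$ and the standard MSO${}_2$ encoding of edge-set connectivity (there is no partition of $V_i$ into two nonempty parts with no edge of $E_i$ between them). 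Combining these observations yields the lemma, with $f(\ell,m):=k(\ell,m)$.
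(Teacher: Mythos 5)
Your proposal is correct and follows essentially the same route as the paper: both set $t=k(\ell,m)$ via Proposition~\ref{prop:Thomas_tw}, force $X=F$, and establish regularity by exhibiting a CMSO-formula for the cycle-packing constraint and invoking Proposition~\ref{prop:borie}. The only difference is cosmetic — you express ``at most $\ell$ disjoint $\cF_m$-cycles'' directly by negating an existential over $\ell+1$ disjoint edge-sets, whereas the paper quantifies universally over partitions of the vertex set into $\ell+1$ parts; the two encodings are logically equivalent and both of constant size for fixed $\ell,m$.
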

\begin{proof} 
For a graph $G$ let $F$ be the maximum vertex set such that $ G[F]$ has at most $\ell$ vertex-disjoint cycles from  ${\cal F}_m$.
We put $f(\ell,m)=k(\ell,m)$, where  $k(\ell,m)$ is the integer from Proposition~\ref{prop:Thomas_tw}. By Proposition~\ref{prop:Thomas_tw}, the treewidth of $G_F$ is at most $f(\ell,m)$.

Then \amims{}  is to maximize $|X|$ for the following property
\[\cP(G[F],X)= \{ F=X \text{ and } G[F] \text{
contains at most $\ell$ vertex-disjoint cycles from }  {\cal F}_m. \}
\]
To show that $\cP(G[F],X)$ is regular, 
we observe that it  is expressible by a CMSO-formula. 
Indeed, this formula expresses that for every partition of $V(G_F)$ into $\ell +1$ subsets, there is a subset containing no cycle from  ${\cal F}_m$.
\end{proof}

%
%
%
%
%


\medskip
\noindent
\textbf{Hitting long cycles.} 
%
%
%
%
%
We   need the following result, which  is due to 
Birmel{\'e} et al. 
\begin{proposition}[\cite{Birmele07}]\label{prop_Birmele}
Graphs without $\ell$ disjoint cycles of length at least $p$  are of treewidth $\cO(\ell^2 p)$.
\end{proposition}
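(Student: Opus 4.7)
My plan is to derive this treewidth bound as a consequence of the Erd\H{o}s--P\'osa property for long cycles, combined with a circumference-to-treewidth bound. The strategy proceeds in three steps, and the third is essentially a short combining argument; almost all the work is in steps one and two.

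\emph{Step 1: graphs with small circumference have small treewidth.} I would first establish that if $G$ has no cycle of length at least $p$, then $\tw(G) \leq p-2$. This is a theorem of Birmel\'e: every graph of circumference at most $c$ has treewidth at most $c-1$. The proof reduces to the $2$-connected case via the block-tree decomposition (treewidth is the maximum over biconnected components), and on a $2$-connected graph of circumference $\leq c$ one builds a tree decomposition from an open ear decomposition, arguing inductively that each ear contributes a bag of size at most $c-1$, because any longer bag would yield a cycle of length $\geq c+1$.

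\emph{Step 2: Erd\H{o}s--P\'osa for long cycles.} Next I would prove that if $G$ contains no $\ell$ pairwise vertex-disjoint cycles of length $\geq p$, then there is a set $S \subseteq V(G)$ of size $\cO(\ell^2 p)$ whose removal destroys every cycle of length $\geq p$. The argument is by induction on $\ell$. The case $\ell = 1$ is trivial (take $S = \emptyset$). For the inductive step, pick one long cycle $C$ in $G$; by induction applied to $G - V(C)$ (which has no $(\ell-1)$ disjoint long cycles, lest together with $C$ we obtain $\ell$ disjoint ones), there is a hitting set $S'$ for $G - V(C)$ of size $\cO((\ell-1)^2 p)$, and then $S := V(C) \cup S'$ is a hitting set for $G$. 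The crucial lemma, which controls the cost of the cycle we remove, is that under the hypothesis ``no $\ell$ disjoint long cycles'' one can always choose $C$ with $|V(C)| = \cO(\ell p)$. This lemma is proved by a shortening argument: starting from any long cycle, one analyzes its bridges (connected components of $G - V(C)$ together with their attachments on $C$) and uses the absence of many disjoint long cycles to extract a cycle of length $\cO(\ell p)$ through chord/bridge rerouting. Putting these together, $|S| \leq \cO((\ell-1)^2 p) + \cO(\ell p) = \cO(\ell^2 p)$.

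\emph{Step 3: combining.} Assume $G$ has no $\ell$ disjoint cycles of length $\geq p$. By Step 2, there is a set $S$ with $|S| = \cO(\ell^2 p)$ such that every cycle of $G - S$ has length strictly less than $p$. By Step 1, $\tw(G - S) \leq p - 2$. Finally, adjoining $S$ to every bag of a tree decomposition of $G - S$ produces a valid tree decomposition of $G$ (the separator properties on non-$S$ vertices are preserved, and all edges incident to $S$ are covered by any bag since $S$ lies in every bag), giving
\[
\tw(G) \;\leq\; \tw(G - S) + |S| \;=\; \cO(p) + \cO(\ell^2 p) \;=\; \cO(\ell^2 p).
\]

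\emph{Main obstacle.} The hard part is the ``short long cycle'' lemma in Step 2: a priori a graph might only contain very long cycles, and if we were forced to delete a cycle of length, say, $\Theta(n)$, the induction would collapse. Proving that one can always locate a long cycle of length $\cO(\ell p)$ under the disjointness hypothesis is the combinatorial heart of the result and is precisely where the $\ell^2$ factor (rather than a linear $\ell$) arises, via the $\sum_{k=1}^\ell k = \Theta(\ell^2)$ accumulation across the induction.
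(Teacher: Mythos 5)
First, a remark on context: the paper does not prove this proposition at all --- it is imported verbatim from Birmel\'e, Bondy and Reed, whose actual theorem is the Erd\H{o}s--P\'osa property for long cycles. Your overall architecture is exactly the intended derivation of the treewidth bound from that theorem: (i) find a set $S$ of $\cO(\ell^2 p)$ vertices meeting every cycle of length at least $p$, (ii) apply Birmel\'e's bound $\tw(H)\leq \mathrm{circumference}(H)-1$ to $G-S$, and (iii) add $S$ to every bag. Steps 1 and 3 are sound (and for Step 1 even a cruder $\cO(p)$ circumference-to-treewidth bound would suffice for the stated asymptotics).

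The genuine gap is the ``short long cycle'' lemma on which your Step 2 rests: it is false that a graph containing a cycle of length at least $p$ but no $\ell$ disjoint such cycles must contain one of length $\cO(\ell p)$. A single cycle on $n\gg \ell p$ vertices has no two disjoint long cycles, yet its only cycle has length $n$; more robustly, a theta graph formed by three internally disjoint $u$--$v$ paths of length $n/3$ has bridges and still the property that \emph{every} cycle has length about $2n/3$, while a single vertex ($u$) hits all of them. In either case your inductive step would put $\Theta(n)$ vertices into the hitting set where $\cO(1)$ suffice, so the induction ``delete $V(C)$ for a suitably short long cycle $C$'' cannot be carried out, and no chord/bridge rerouting can rescue it since there may simply be no cycle of intermediate length. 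The actual Birmel\'e--Bondy--Reed argument is considerably more delicate: one never deletes the entire vertex set of a cycle, but instead shows that a long cycle $C$ together with its bridges either yields many disjoint long cycles or admits a bounded-size set of vertices (on $C$ and in its bridges) through which every remaining long cycle must pass. That analysis is the entire content of the cited result, and your proposal leaves it unproved --- indeed, the specific lemma you reduce it to is unprovable as stated.
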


By making use of Proposition~\ref{prop_Birmele}, it is easy to prove the following lemma.

\begin{lemma}\label{lemma:long_cycle_CMSOL}
\mislkc{} is a special case of  \ois{}  with $t= \cO(\ell^2 p)$.
\end{lemma}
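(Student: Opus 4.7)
The plan is to mimic the proof of Lemma~\ref{lemma:parity_CMSOL}, using Proposition~\ref{prop_Birmele} in place of Proposition~\ref{prop:Thomas_tw}. Concretely, I would set $t := \cO(\ell^2 p)$ to be the treewidth bound guaranteed by Proposition~\ref{prop_Birmele}, and I would encode \mislkc{} as an instance of \ois{} by taking $X = F$ and defining the property
\[\cP(G[F],X) \;=\; \{\, F=X \text{ and } G[F] \text{ contains at most } \ell \text{ vertex-disjoint cycles of length } \geq p \,\}.\]
Since any $G[F]$ satisfying $\cP$ has treewidth at most $t$ by Proposition~\ref{prop_Birmele}, maximizing $|X|$ over such pairs is exactly \mislkc{}.

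It remains to show that $\cP$ is regular, and by Proposition~\ref{prop:borie} it suffices to show that $\cP$ is CMSO-expressible. The predicate $F=X$ is trivial. For the ``at most $\ell$ vertex-disjoint long cycles'' part, I would use the same partition trick as in Lemma~\ref{lemma:parity_CMSOL}: write a CMSO-formula asserting that for every partition of $V(G[F])$ into $\ell+1$ vertex subsets $S_1,\dots,S_{\ell+1}$, at least one $S_i$ induces a subgraph containing no cycle of length $\geq p$. One direction of the equivalence is immediate: if $G[F]$ had $\ell+1$ vertex-disjoint cycles of length $\geq p$, place each in a distinct part and extend arbitrarily, violating the formula. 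Conversely, if some partition witnesses that every part contains a long cycle, those $\ell+1$ cycles are pairwise vertex-disjoint, contradicting the bound.

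The only remaining ingredient is expressing ``$G[S_i]$ contains a cycle of length $\geq p$'' in MSO. Since $p$ is a fixed constant, I can write this as the existence of a vertex subset $C \subseteq S_i$ together with $p$ pairwise distinct vertices $v_1,\dots,v_p \in C$ (handled by $p$ existential first-order quantifiers), plus the MSO-expressible condition that $C$ induces a cycle, i.e., $G[C]$ is connected and every vertex of $C$ has exactly two neighbors in $C$. The partition into $\ell+1$ parts is handled by $\ell+1$ set quantifiers expressing that the $S_i$ are pairwise disjoint and cover $F$. Putting these pieces together yields a CMSO-formula for $\cP$, which completes the proof.

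I do not expect any genuine obstacle: the only slightly delicate point is making sure that ``cycle of length at least $p$'' is written uniformly in the size of $p$, but since $p$ is a fixed parameter this is routine.
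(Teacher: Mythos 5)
Your proof follows the paper's argument essentially verbatim: Proposition~\ref{prop_Birmele} supplies the treewidth bound $t=\cO(\ell^2 p)$, the property is encoded with $X=F$, and regularity is obtained via CMSO-expressibility (Proposition~\ref{prop:borie}) using the same partition-into-$(\ell+1)$-parts trick, exploiting that $p$ and $\ell$ are fixed constants. You spell out the equivalence between the partition formula and the disjoint-cycles bound, and the MSO encoding of a cycle, in somewhat more detail than the paper does, but the route is the same.
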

\begin{proof} 
For a graph $G$ let $F$ be the maximum vertex set such that $  G[F]$ has at most $\ell$ vertex-disjoint cycles of length at least $p$.
  By Proposition~\ref{prop_Birmele}, the treewidth of $G[F]$ is at most $\cO(\ell^2 p)$. Then we are  maximizing $|X|$ 
for  the following property
\[\cP(G[F],X)= \{ F=X \text{ and } G[F] \text{
contains   $\leq \ell$ vertex-disjoint cycles of length   }  \geq p. \}
\]
To show that this property is regular, we observe that property of not having a cycle of length at least $p$ is expressible in CMSO. Indeed, a property of a set $C$ of vertices to induce a cycle is CMSO, and because $p$ is fixed, the formula expressing the sentence that for every subset $C$ inducing a cycle, the number of elements is at most $p$, is of constant length. 
Because $\ell$ is also fixed,   it is possible to express by a constant size CMSO-formula the sentence that  for every partition in   $\ell + 1$ subsets there is a subset inducing a subgraph without a cycle of length at least $p$.
 \end{proof}

%

\medskip
\noindent
\textbf{Excluding planar minors.}
 The following proposition follows almost directly from the excluded grid theorem  of Robertson and Seymour \cite{RobertsonS-V}, see also \cite{RobSeymT94}.


\begin{proposition}[\cite{RobertsonS-V}]\label{prop:R_S}
For every integer $\ell>0$ and family $\cF$ containing a planar graph, there exists an integer $k(\ell,\cF)>0$ such that the treewidth of a graph with at most $\ell$ vertex-disjoint minor models from $\cF$ is at most  $k(\ell,\cF)$.
\end{proposition}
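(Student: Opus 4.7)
The plan is to derive this directly from the excluded grid theorem of Robertson and Seymour~\cite{RobertsonS-V}. Fix a planar graph $H \in \cF$. Since $H$ is planar, a standard inflation argument produces an integer $r_0 = r_0(H)$ such that $H$ is a minor of the $(r_0 \times r_0)$-grid: embed $H$ in the plane on a fine grid and blow up each vertex and edge into a connected subgrid. The excluded grid theorem then supplies a function $g$ with the property that every graph of treewidth at least $g(r)$ contains the $(r \times r)$-grid as a minor. I set $k(\ell,\cF) := g((\ell+1)\, r_0) - 1$, and argue the contrapositive: if $\tw(G) > k(\ell,\cF)$, then $G$ contains at least $\ell + 1$ pairwise vertex-disjoint minor models of graphs from $\cF$.

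The core step is a tiling argument. Put $N := (\ell+1)\,r_0$. By the choice of $k(\ell,\cF)$, the graph $G$ contains the $N \times N$-grid $\Gamma$ as a minor. Partition the rows of $\Gamma$ into $\ell+1$ consecutive blocks $R_1, \dots, R_{\ell+1}$ of $r_0$ rows each, and its columns analogously into $C_1, \dots, C_{\ell+1}$. The $(\ell+1)^2$ subgrids $\Gamma_{i,j} := \Gamma[R_i \times C_j]$ are pairwise vertex-disjoint $(r_0 \times r_0)$-grids inside $\Gamma$. A minor model of $\Gamma$ in $G$ is a family of pairwise vertex-disjoint connected branch sets $\{B_v : v \in V(\Gamma)\}$ together with connecting edges. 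For each pair $(i,j)$ the union $U_{i,j} := \bigcup_{v \in V(\Gamma_{i,j})} B_v$ induces a subgraph of $G$ containing $\Gamma_{i,j}$, and hence $H$, as a minor. Since the $\Gamma_{i,j}$'s are vertex-disjoint in $\Gamma$, the $U_{i,j}$'s are vertex-disjoint in $G$. This produces $(\ell+1)^2 \geq \ell + 1$ pairwise vertex-disjoint minor models of $H \in \cF$ in $G$, contradicting the hypothesis that $G$ contains at most $\ell$ vertex-disjoint minor models from $\cF$.

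No genuine obstacle arises, as the whole proof is a short reduction to two well-known facts. The only verification needed is that vertex-disjointness of the subgrids $\Gamma_{i,j}$ lifts to vertex-disjointness of the associated unions of branch sets $U_{i,j}$ in $G$, which is immediate from the definition of a minor model. Note also that no effective bound on $k(\ell,\cF)$ is asserted: mere existence of $r_0 = r_0(H)$ and of the function $g$ from the excluded grid theorem is enough for the existence of $k(\ell,\cF)$.
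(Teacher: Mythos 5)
Your proof is correct and follows exactly the route the paper intends: the paper states this proposition without a detailed argument, remarking only that it "follows almost directly from the excluded grid theorem," and your tiling of a large grid minor into $\ell+1$ disjoint $r_0\times r_0$ subgrids, each hosting a model of the planar graph $H\in\cF$, is the standard way to fill in that derivation. The lifting of disjointness from the subgrids to the unions of branch sets is handled correctly, so there is nothing to add.
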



\noindent

\begin{lemma}\label{lemma:minor_CMSOL} If $\cF$ contains a planar graph, then 
\mislf{} is a special case of  \ois{}  with $t= k(\ell,\cF)$.
\end{lemma}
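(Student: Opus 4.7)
The plan is to mirror the structure of the proofs of Lemma~\ref{lemma:parity_CMSOL} and Lemma~\ref{lemma:long_cycle_CMSOL}: set $X = F$, use the cited structural theorem (here Proposition~\ref{prop:R_S}) to bound $\tw(G[F]) \leq k(\ell,\cF)$, and then verify that the induced-subgraph property is regular by producing a constant-size CMSO-formula, invoking Proposition~\ref{prop:borie}. Concretely, I would take
\[
\cP(G[F], X) \equiv \bigl(F = X\bigr) \wedge \bigl(G[F] \text{ contains at most } \ell \text{ vertex-disjoint minor models of graphs from } \cF\bigr),
\]
so that \mislf{} becomes the problem of maximizing $|X|$ subject to $\cP$, with the treewidth constraint guaranteed by Proposition~\ref{prop:R_S}.

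The heart of the argument is writing the second conjunct as a CMSO-formula of bounded length. First, for each fixed $H \in \cF$ I would express ``$G[S]$ contains $H$ as a minor'' by existentially quantifying the branch sets $V_1, \dots, V_{|V(H)|} \subseteq S$, requiring them to be pairwise disjoint and nonempty, each to induce a connected subgraph, and for every edge $\{i,j\}$ of $H$ requiring an edge of $G$ with one endpoint in $V_i$ and the other in $V_j$. Connectedness of $V_i$ is the one CMSO subtlety and is standard: $V_i$ is connected if and only if for every nonempty $A \subsetneq V_i$ there exist $u \in A$ and $v \in V_i \setminus A$ with $\mathbf{adj}(u,v)$. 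Since both $|V(H)|$ and $|\cF|$ are constants, ``$G[S]$ is $\cF$-minor-free'' is CMSO-expressible by a constant-size formula.

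Having CMSO expressibility of $\cF$-minor-freeness, I would then use the same partition trick that appears in the proof of Lemma~\ref{lemma:parity_CMSOL}: ``$G[F]$ contains at most $\ell$ vertex-disjoint minor models from $\cF$'' is equivalent to ``for every partition of $F$ into $\ell + 1$ subsets, at least one subset induces an $\cF$-minor-free subgraph''. The implication from the partition statement to the packing bound is the contrapositive, obtained by placing any hypothetical collection of $\ell+1$ disjoint models into $\ell+1$ distinct parts; the converse is immediate because the parts are themselves pairwise disjoint. Composing this $(\ell+1)$-partition formula with the minor-freeness formula above yields a constant-size CMSO sentence for $\cP$.

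I do not anticipate a genuine obstacle: the only mildly delicate point is the CMSO encoding of graph connectedness (standard) and the equivalence between the packing property and its $(\ell+1)$-partition reformulation (straightforward in both directions). Regularity of $\cP$ then follows immediately from Proposition~\ref{prop:borie}, and combined with the treewidth bound this exhibits \mislf{} as a special case of \ois{} with $t = k(\ell, \cF)$.
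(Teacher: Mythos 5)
Your proposal follows essentially the same route as the paper's proof: set $X=F$, invoke Proposition~\ref{prop:R_S} for the treewidth bound $t=k(\ell,\cF)$, and establish regularity via CMSO-expressibility (Proposition~\ref{prop:borie}). The paper merely cites that minor-freeness is known to be CMSO-expressible and leaves the $(\ell+1)$-partition reformulation implicit (it appears explicitly in the proofs of Lemmata~\ref{lemma:parity_CMSOL} and~\ref{lemma:long_cycle_CMSOL}), whereas you spell out both the branch-set encoding and the partition equivalence correctly; the extra detail is sound and no gap is introduced.
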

\begin{proof} 
For a graph $G$ let $F$ be the maximum vertex set such that $ G[F]$ has at most $\ell$ vertex-disjoint models of minors from  $\cF$.
  By Proposition~\ref{prop:R_S}, the treewidth of $G[F]$ is at most  $k(\ell,\cF)$. The property that a graph does not contain a fixed graph as a minor is known to be expressible in CMSO.  This implies that the property
  \[\cP(G[F],X)= \{  F=X \text{ and } G[F] \text{
has   $\leq \ell$ vertex-disjoint minor models   from
   } {\cF}  \}
\]
  is regular.
  \end{proof}

\medskip
\noindent\textbf{Independent packing.}
\begin{lemma}\label{lemma:packing_CMSOL} 
\igp{} is a special case of \ois.
\end{lemma}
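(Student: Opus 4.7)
The plan is to realize \igp{} as \ois{} by taking the same treewidth bound $t$ and by constructing, from the CMSO-formula $\varphi$ defining $\mathcal{G}(t,\varphi)$, a CMSO-formula defining a property $\cP(H,X)$ which will play the role required in \ois. The set $F$ will be the vertex set of the packing; $X$ will be a set of representatives, one per connected component of $G[F]$, so that $|X|$ equals the number of components of $G[F]$. With this encoding, maximizing $|X|$ subject to $\cP$ is exactly maximizing the number of components of the packing.

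The property to express is the following: $\cP(H,X)$ holds iff, letting $\mathcal{K}(H)$ be the set of connected components of $H$, every $C \in \mathcal{K}(H)$ satisfies (i) $H[C] \models \varphi$ and (ii) $|X \cap C| = 1$. Both the notion of ``connected subset of $H$'' and the notion of ``$C$ is a connected component of $H$'' (i.e., $C$ is connected and no strict superset of $C$ in $V(H)$ is connected) are standard CMSO-expressible predicates on a vertex set. The condition $|X \cap C| = 1$ can be written as $\exists u\, (u \in X \cap C) \wedge \forall u, v \in X \cap C\ (u=v)$. Hence, if we can relativize $\varphi$ to a vertex subset $C$, we can write
\[
\cP(H,X) \equiv \forall C\ \bigl(\mathrm{CompOf}(C,H) \Rightarrow \varphi^{C}(H) \wedge |X \cap C| = 1\bigr),
\]
where $\mathrm{CompOf}(C,H)$ is the CMSO-predicate asserting that $C$ is a connected component of $H$, and $\varphi^{C}(H)$ is the relativization of $\varphi$ to the subgraph induced by $C$.

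The one nontrivial step is the relativization $\varphi \mapsto \varphi^{C}$: given the formula $\varphi$ that defines membership in $\mathcal{G}(t,\varphi)$, we rewrite every quantification over vertices, edges, vertex sets and edge sets so that it is guarded by $C$ (vertex variables $u$ are restricted by $u \in C$; edge variables $d$ are restricted by the CMSO condition that both endpoints of $d$ lie in $C$, using the built-in $\mathbf{inc}$ and $\mathbf{adj}$ predicates; analogous guards are added for the set variables). This is a routine syntactic transformation preserving CMSO-expressibility, and it turns $\varphi^{C}$ into a formula saying that the substructure induced by $C$ models $\varphi$. Thus $\cP(H,X)$ is CMSO-expressible and, by Proposition~\ref{prop:borie}, regular.

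It remains to check that the two optimization problems coincide. If $F \subseteq V(G)$ has components each in $\mathcal{G}(t,\varphi)$, then $G[F]$ has treewidth at most $t$ (treewidth is the maximum over components), each component satisfies $\varphi$, and we may pick one vertex per component to form $X$; this gives a feasible solution to \ois{} with $|X|$ equal to the number of components. Conversely, any feasible $(F,X)$ for \ois{} with this $\cP$ has $G[F]$ of treewidth at most $t$, hence each component has treewidth at most $t$, and by the two clauses of $\cP$ each component satisfies $\varphi$ and receives exactly one vertex of $X$, so it defines an instance of \igp{} of value $|X|$. I expect the main obstacle is precisely the relativization step, since one must be careful that CMSO (in its $CMS_2$ variant, with edge-set quantifications) relativizes cleanly to induced subgraphs; fortunately this is standard, and all other ingredients (connectivity, ``exactly one'', outer universal quantification over vertex subsets $C$) are already basic CMSO constructs.
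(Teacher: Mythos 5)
Your proposal is correct and follows essentially the same route as the paper: take $F$ to be the packing's vertex set (treewidth at most $t$ since treewidth is the maximum over components), and let $\cP(G[F],X)$ assert that every connected component of $G[F]$ is in $\mathcal{G}(t,\varphi)$ and contains exactly one vertex of $X$, so that maximizing $|X|$ maximizes the number of components. The only difference is that you spell out the relativization of $\varphi$ to a component explicitly, which the paper leaves implicit as a standard CMSO construction.
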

\begin{proof} 
 
For a graph $G$ let $F$ be a vertex set such that $G_F=G[F]$ has the maximum number of connected components, and each of the components is in   $\mathcal{G}(t,\varphi)$.
Because the treewidth of every component does not exceed $t$, the treewidth of $G[F]$ does not exceed $t$. We use $cc(G[F])$ to denote the set of connected components of $G[F]$.
Because the property that every connected component has regular is also regular, we have  that the following property is regular
  \[\cP(G[F],X)= \{  [X\subseteq V(G_F)] \wedge [\forall C\in cc(G_F) (C\in \mathcal{G}(t,\varphi)  \wedge   |X\cap C|=1)] .\}
\]
\end{proof}

%
%
%
%
%
%
%
%
%
%
%
%
%
%
%

\medskip
\noindent
\textbf{$k$-in-a-graph.} Because in  \kig{},  $k${} is  part of the input we need the annotated variant of the main theorem (Theorem~\ref{th:extensionswa}).
The following lemma follows from the definition of the problems.

\begin{lemma}\label{lemma:kinpath_CMSOL} 
\kig{} is a special case of  \owais.
\end{lemma}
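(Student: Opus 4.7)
The plan is a direct reduction: given an instance of \kig{} consisting of a graph $G$ with $k$ terminal vertices, I would construct an instance of \owais{} by taking the annotated set $U$ to be exactly the set of $k$ terminals, keeping the same treewidth bound $t$, and setting the weight function to be identically $0$ (so that \owais{} reduces to a feasibility question). The treewidth bound in \owais{} matches the treewidth bound that defines the class $\mathcal{G}(t,\varphi)$, and the annotation constraint $U \subseteq F$ is precisely what forces the $k$ terminals into the induced subgraph $G[F]$.

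The remaining task is to design the property $\cP(G[F],X)$ so that it captures membership in $\mathcal{G}(t,\varphi)$. Since graphs in $\mathcal{G}(t,\varphi)$ are connected graphs of treewidth at most $t$ satisfying $\varphi$, and since treewidth is already handled by \owais, I would let $\cP(G[F],X)$ be the sentence "$G[F]$ is connected and $(G[F]) \models \varphi$", ignoring the variable $X$ entirely. Connectivity of $G[F]$ is expressible in CMSO in the standard way (for every non-trivial partition of $V(G[F])$ into two non-empty parts, there is an edge across the partition), and $\varphi$ is given as a CMSO-formula, so $\cP$ is itself CMSO-expressible and therefore regular by Proposition~\ref{prop:borie}.

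It then suffices to check that the two feasibility conditions coincide: if $F$ solves \kig, then $(F, \emptyset)$ is a feasible solution of weight $0$ for the constructed \owais{} instance; conversely, any feasible pair $(X \subseteq F)$ returned by \owais{} has $U \subseteq F$, $\tw(G[F]) \leq t$, $G[F]$ connected, and $G[F] \models \varphi$, so $G[F] \in \mathcal{G}(t,\varphi)$ and $F$ solves \kig. There is essentially no technical obstacle here; the only conceptual point worth stressing is that the \emph{annotated} component of \owais{} is introduced precisely to permit reductions of this type, where certain vertices must be forced into the solution without appearing in the optimized set $X$, and that CMSO has enough expressive power to encode both connectivity and the defining formula $\varphi$ simultaneously.
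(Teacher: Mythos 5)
Your reduction is correct and is exactly what the paper intends: the paper itself gives no written proof beyond the remark that the lemma ``follows from the definition of the problems,'' and your explicit construction (annotated set $U$ equal to the $k$ terminals, trivial weights, and $\cP(G[F],X)$ expressing connectivity of $G[F]$ together with $G[F]\models\varphi$, with $X$ ignored) spells out precisely that intended, essentially definitional argument.
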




\section{Graph classes}\label{sec:graph_classes}
In this section we discuss the consequences of Theorem~\ref{thm:applications} for special graph classes. In particular,    by Proposition~\ref{pr:listing_pmc}, every class of graphs with polynomially many minimal separators also has polynomially many potential maximal cliques.  
For example, every $n$-vertex  \emph{weakly chordal} graph, i.e. graph  with no induced cycle or its complement of length greater than four, 
has  $\cO(n^2)$ minimal separators~\cite{BoTo01}.  This class of graphs is a generalization of many graph classes intensively studied in the literature like chordal, split, and interval graphs.
 Another class of graphs of this type is the class of 
 \emph{circular-arc}  graphs,   intersection graphs of a set of arcs on the circle. 
Every circular-arc with  $n$ vertices has  at most $2n^2 -3n$ minimal separators \cite{KloksKW98}. 
The class of  {$d$-trapezoid} graphs is defined as follows.
Let $L_1, \dots, L_d$ be $d$ parallel lines in the plane. A $d$-trapezoid is the polygon obtained by choosing an interval $I_i$ on every line $L_i$ and connecting the left, respectively, right endpoint of $I_i$ with the left, respectively,  right endpoint of $I_{i+1}$.
A graph is a \emph{$d$-trapezoid graph} if it has an intersection model consisting of $d$-trapezoids between $d$ parallel lines. Every
$d$-trapezoid graph has   at most  $(2n-3)^{d-1}$ minimal separators \cite{Kr96}, see also \cite{brandstadt1999graph}. An intersection graph of polygons
  enclosed by a bounding circle is is know as a \emph{polygon-circle graph}.
As  it was observed by Suchan in \cite{Suchan2003}, every 
polygon-circle with $n$ vertices has $\cO(n^2)$ minimal separators.  
See  Fig~\ref{fig:graph_classes} of the \emph{Introduction} for the relations between most known classes of graphs with polynomially many minimal separators. We refer to the encyclopedia of graph classes  \cite{brandstadt1999graph} for definitions of different  graphs  from  Fig~\ref{fig:graph_classes}.

Let us remark that the only information for our algorithms we need is the bound on the number of minimal separators in the specific graph class. While many of the algorithms from the literature for intersection classes of graphs strongly use the intersection model  this is not necessary for our algorithms---they produce correct output regardless of whether the input actually belongs to the specific class of graphs. 
If the number of minimal separators and thus \pmc s is bounded, our algorithm correctly solves the problem. Otherwise, the algorithm correctly reports that the given input is not from the restricted domain. 
Such type of algorithms were called \emph{robust} by Raghavan and Spinrad \cite{RaghavanS03}. For example, while recognition of $d$-trapezoid and polygon-circle graphs is NP-complete \cite{Yannakakis82,Pergel:2007fk}, our algorithm either correctly solves the problem or outputs that the input graph is not $d$-trapezoid or polygon-circle.

\begin{corollary}\label{cor:grapht_classes}
 All problems from Theorem~\ref{thm:applications} are solvable in polynomial time  on classes of graphs from Fig~\ref{fig:graph_classes}. 
\end{corollary}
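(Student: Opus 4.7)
The plan is to apply Theorem~\ref{thm:applications} directly, which solves each of the listed problems in time $|\Pi_G| \cdot n^{\cO(1)}$, and to combine it with polynomial bounds on $|\Pi_G|$ for the graph classes in Fig~\ref{fig:graph_classes}. The pivot is Proposition~\ref{pr:listing_pmc}, which tells us that $|\Pi_G| \leq n|\Delta_G|^2 + n|\Delta_G| + 1$ and that $\Pi_G$ can be enumerated in time $\cO(n^2 m |\Delta_G|^2)$; together with Proposition~\ref{th:listing_minsep} for listing $\Delta_G$, this reduces the entire corollary to showing $|\Delta_G| = n^{\cO(1)}$ for every class in the figure.

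First I would gather the known bounds on the number of minimal separators for the ``anchor'' classes already mentioned in the preceding text: weakly chordal graphs have $\cO(n^2)$ minimal separators, circular-arc graphs at most $2n^2 - 3n$, $d$-trapezoid graphs at most $(2n-3)^{d-1}$ (polynomial for fixed $d$), and polygon-circle graphs $\cO(n^2)$. Every other class appearing in Fig~\ref{fig:graph_classes} (chordal, split, interval, circle, circle $n$-gon, permutation, etc.) is contained in one of these, and the number of minimal separators is hereditary in the sense relevant here (it is a property of the graph, so subclasses inherit the polynomial bound from a superclass). Plugging $|\Pi_G| = n^{\cO(1)}$ into the bound of Theorem~\ref{thm:applications} yields an overall runtime of $n^{\cO(1)}$, which is what we need.

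The one point that requires care is \emph{robustness}. For $d$-trapezoid graphs and polygon-circle graphs the recognition problem is NP-complete, so we cannot first test membership and then run the algorithm. The remedy is the observation of Raghavan and Spinrad~\cite{RaghavanS03}: our algorithm does not need to know that $G$ belongs to the class. We simply invoke the enumeration of Propositions~\ref{th:listing_minsep} and~\ref{pr:listing_pmc} with a hard cap equal to the polynomial bound promised by the class. If the enumeration exceeds this cap, we abort and report that $G$ is not in the claimed class; otherwise we run the algorithm of Theorem~\ref{thm:applications} on the (complete) set $\Pi_G$ and return a correct answer. The only ``obstacle'' here is checking that each class in the figure really does admit such a polynomial bound and that the bounds are effectively usable by the enumerator, but this is immediate from the references cited above.
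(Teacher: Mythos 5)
Your proposal is correct and follows essentially the same route as the paper: the corollary is obtained by combining Theorem~\ref{thm:applications} with Proposition~\ref{pr:listing_pmc} (and Proposition~\ref{th:listing_minsep} for enumeration), the cited polynomial bounds on minimal separators for weakly chordal, circular-arc, $d$-trapezoid, and polygon-circle graphs, and the containment of the remaining classes in these, with the same robustness argument via a cap on the enumeration. Nothing is missing.
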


On several classes of graphs even more general problems can be solved. The observation here is that for many classes of graphs from Fig~\ref{fig:graph_classes}, the treewidth of a graph is upper bounded by some function of other  parameters like the maximum clique-size  or  maximum degree. 

For example, Yannakakis and Gavril 
 \cite{YannakakisG87} have shown that for every fixed $\chi$, a maximum induced subgraph of a chordal graph colorable in $\chi$ colors can be found in polynomial time.  To see why this result follows as a corollary of our theorem, let us observe that  for chordal graphs, as for  all perfect graphs, the chromatic number is equal  to the maximum clique size, see e.g.  \cite{Golumbic80}. On the other hand, 
the treewidth of a chordal graph is known to be equal to the maximum clique size minus one. Thus every induced $\chi$-colorable subgraph of a chordal graph is of treewidth at most $\chi -1$.  Since colorability in a constant number of colors is expressible in CMSO, the result follows.

For other variant of colorings, we need the  the following proposition due to Gaspers et al.
\begin{proposition}[\cite{GaspersKLT09}]\label{prop:degree_tw}
Let $G$ be a graph of maximum vertex degree at most $D$. Then the treewidth of $G$ is at most
\begin{itemize}
\item 
$4D$, if $G$ is a circle graph,
\item $2D$,  if $G$ is a 
weakly chordal  graph or a  circular-arc  graph.
\end{itemize}
\end{proposition}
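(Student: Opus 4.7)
The plan is to bound treewidth through the size of potential maximal cliques, exploiting the identity
\[
\tw(G) \;=\; \min_{H} \max_{K \in \mathcal{K}(H)} |K| - 1,
\]
where the minimum is over minimal triangulations $H$ of $G$ and $\mathcal{K}(H)$ denotes the set of maximal cliques of $H$, each of which is a \pmc{} of $G$. Hence it suffices, for each graph class in the proposition, to exhibit a minimal triangulation in which every maximal clique has size at most $cD+1$ for the appropriate constant $c \in \{2,4\}$. I would aim at the stronger statement that \emph{every} \pmc{} $\Omega$ of $G$ satisfies $|\Omega| \le cD+1$, which automatically bounds the treewidth regardless of which minimal triangulation is chosen.

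For the weakly chordal and circular-arc cases, the route I would take is to prove a structural lemma saying that every minimal separator $S$ of $G$ is contained in the closed neighborhood $N[v]$ of some vertex $v$, so that $|S|\le D+1$. For circular-arc graphs this is natural from the arc-intersection model: a minimal $a,b$-separator must contain the arcs straddling one of the two ``boundary'' points of a component, and these arcs all pass through a single arc. For weakly chordal graphs one can use Hayward's two-pair theorem: in any minimal separator one can identify a dominating vertex relative to the separated sides. Combined with the Bouchitt\'e--Todinca characterization of \pmc s, which expresses any \pmc{} $\Omega$ as $S \cup \{v\}$ or as the union of at most two minimal separators sharing a common structure, the bound $|S| \le D+1$ promotes to $|\Omega| \le 2D+1$, and thus $\tw(G) \le 2D$.

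For circle graphs the chord-intersection model does not yield such a sharp bound, so I would expect a weaker structural lemma: every minimal separator $S$ decomposes as $S = S_1 \cup S_2$, where each $S_i$ is contained in the neighborhood of a single vertex, corresponding intuitively to the two arcs of the bounding circle cut off by the separator. This gives $|S| \le 2D$, and combining two separators as in the previous case yields $|\Omega| \le 4D+1$, hence $\tw(G) \le 4D$.

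The main obstacle will be in establishing the precise structural lemmas on minimal separators in each of the three classes; the step from ``bounded separators'' to ``bounded \pmc s'' and then to a treewidth bound is essentially bookkeeping via the Bouchitt\'e--Todinca framework. In particular, pinning down the correct decomposition of a \pmc{} (as opposed to a single separator) into vertex neighborhoods is where the chord-diagram geometry of circle graphs and the two-pair structure of weakly chordal graphs have to be genuinely exploited, and is where I would expect to spend most of the proof.
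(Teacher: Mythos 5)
The paper offers no proof of this proposition: it is imported verbatim from Gaspers et al.~\cite{GaspersKLT09}, so there is no in-paper argument to compare yours against and I can only judge the plan on its own terms. Your top-level reduction is sound --- since $\tw(G)$ is the minimum over minimal triangulations of the maximum clique size minus one, and every maximal clique of a minimal triangulation is a \pmc{}, a bound of $cD+1$ on the size of every \pmc{} does yield $\tw(G)\le cD$. The trouble is in the two structural lemmas that are supposed to deliver that bound.

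First, the circular-arc lemma is false as stated. A minimal separator $S=N(C)$ of a circular-arc graph consists of the arcs crossing the \emph{two} endpoints of the circular interval occupied by a full component $C$; it is a union of two cliques, not a subset of a single closed neighborhood. Already in $C_6$ (a circular-arc graph with $D=2$) the set $\{3,6\}$ is a minimal separator contained in no $N[v]$, and taking $k$ arcs through each of two antipodal points, all met by a two-arc path on each side of the circle, produces a minimal separator of size $2k=2D-2$, so $|S|\le D+1$ fails outright. Second, your reading of the Bouchitt\'e--Todinca characterization is wrong: a \pmc{} $\Omega$ is the union of the minimal separators $N(C_i)$ over \emph{all} components $C_i$ of $G\setminus\Omega$ (plus possibly vertices adjacent to everything else in $\Omega$), and the number of these separators is not bounded by two --- in $C_6$ the \pmc{} $\{1,3,5\}$ is the union of three minimal separators. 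So the step you dismiss as ``essentially bookkeeping'' is precisely where the argument breaks; and even if both lemmas held, two sets of size $D+1$ give $|\Omega|\le 2D+2$, i.e.\ $\tw\le 2D+1$, overshooting the claimed bound. The weakly chordal case rests on an unproved domination claim loosely attributed to the two-pair theorem, and the circle-graph case on an unproved two-neighborhood decomposition of separators; neither is substantiated. A viable route for circular-arc graphs is closer to the fact the paper itself uses in Lemma~\ref{lemma:bounds_tw} (every \pmc{} of a circular-arc graph is a union of boundedly many cliques, each of size at most $D+1$, sharpened by noting that separator vertices also have neighbors in a full component), but as written your proposal does not establish any of the three stated bounds.
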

 Combined with Proposition~\ref{prop:degree_tw}, Theorem~\ref{thm:applications} allows us to show that on several graph classes,  in addition to problems encompassed by Corollary~\ref{cor:grapht_classes}, 
 even larger class of problems can be solved efficiently.  
 For example, \emph{edge coloring} of a graph is an assignment of colors to the edges of the graph so that no two adjacent edges have the same color. The \emph{chromatic index} of a graph is the minimum number of colors required for edge coloring. By Vizing's theorem, for every graph with maximum vertex degree $D$, its chromatic index is either  $D$ or $D+1$. Since   edge coloring in a constant number of colors is expressible in CMSO, we conclude that the problem of finding a maximum induced  edge-colorable in $k$ colors subgraph (for a fixed constant $k$) is solvable in polynomial time on  circle, weakly chordal  and   circular-arc graphs.
%
 Similarly, the problems like for a fixed constant $k$ finding a maximum induced (connected) subgraph of maximum vertex degree at most $k$ are also solvable in polynomial time on these classes of graphs.



The next lemma provides a different set of applications of the main theorem for special graph classes. 

\begin{lemma}\label{lemma:bounds_tw}
Let $G$ be a graph excluding some fixed graph $H$ as a minor.
  Then the treewidth of $G$ is at most
\begin{itemize}
\item 
$f(H)$ for some function $f$ of $H$ only,  if $G$ is a weakly-chordal  graph, and 
\item $3|V(H)|$  if $G$ is a circular-arc  graph.
\end{itemize}
\end{lemma}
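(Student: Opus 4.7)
Both bounds rest on the observation that if $G$ excludes $H$ as a minor and $h=|V(H)|$, then $G$ also excludes $K_h$ as a minor, since $K_h$ contains $H$ as a subgraph and minors are monotone under taking subgraphs; in particular $\omega(G)\leq h-1$. From there the two cases proceed by rather different arguments.

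For the circular-arc case, the plan is to fix a circular-arc representation of $G$ and choose a point $p$ on the circle that is not an endpoint of any arc. The set $A_p$ of arcs containing $p$ is pairwise intersecting, so $|A_p|\leq \omega(G)\leq h-1$. Cutting the circle at $p$ turns the remaining arcs into intervals on a line, and the induced subgraph $G'$ on $V(G)\setminus A_p$ is an interval graph with $\omega(G')\leq h-1$, hence $\tw(G')\leq h-2$. A tree decomposition of $G'$ of this width can be extended to one of $G$ by inserting every vertex of $A_p$ into every bag: edges inside $A_p$ are covered because $A_p$ sits in every bag, edges between $A_p$ and $V(G')$ because each $v\in V(G')$ appears in some bag, and edges of $G'$ by the original decomposition. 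The new bags have size at most $(h-1)+(h-1)=2h-2$, giving $\tw(G)\leq 2h-3\leq 3|V(H)|$.

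For the weakly chordal case, I would prove the contrapositive: if $G$ is weakly chordal with $\tw(G)\geq f(h)$ for a suitable function $f$, then $G$ contains $K_h$ as a minor. The clique bound alone is not enough, since $K_{n,n}$ is weakly chordal with $\omega=2$ and $\tw(K_{n,n})=n$; however, $K_{n,n}$ contains $K_n$ as a minor via the contraction of a perfect matching, and this is precisely the phenomenon to generalize. I would take a minimal triangulation of $G$ and a maximum clique $\Omega$ in it; then $\Omega$ is a potential maximal clique of $G$ of size at least $f(h)+1$. Using the characterization of potential maximal cliques of weakly chordal graphs via the full components of their associated minimal separators, combined with Ramsey-type reasoning applied inside the induced subgraph $G[\Omega]$ (which is weakly chordal and hence perfect), the plan is to extract from $\Omega$ and its attached full components a family of $h$ pairwise adjacent connected branches, assembling a $K_h$-minor.

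The main obstacle is the weakly chordal case. The circular-arc bound is essentially a geometric reduction to interval graphs, where treewidth is controlled directly by clique number. The weakly chordal bound is genuinely subtler because a potential maximal clique is not a clique of $G$, and turning it into a clique minor requires routing many vertex-disjoint connectors through the full components that hang off the separators lining the clique. Controlling this routing, and in particular forcing the emergence of a bipartite-like ``large matching'' structure analogous to the $K_{n,n}$ example whenever the triangulation clique is large, is the technical heart of the argument, and it is here that the absence of long induced cycles in $G$ and in $\overline{G}$ must be exploited decisively.
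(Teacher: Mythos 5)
Your circular-arc argument is correct and is genuinely different from the paper's. The paper derives the bound from the structural fact (from Kloks--Kratsch--Wong) that every potential maximal clique of a circular-arc graph is a union of at most three cliques, so treewidth at least $3|V(H)|$ forces a clique of size at least $|V(H)|$ and hence a $K_{|V(H)|}$-minor. Your geometric route --- cut the circle at a point $p$, observe that the at most $\omega(G)\leq h-1$ arcs through $p$ form a clique, reduce the rest to an interval graph of clique number at most $h-1$, and insert $A_p$ into every bag --- is more elementary, avoids potential maximal cliques entirely, and in fact yields the sharper bound $2h-3$. That part stands on its own.

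The weakly chordal case, however, has a genuine gap: you describe a plan (take a maximum clique $\Omega$ of a minimal triangulation, use the structure of potential maximal cliques and ``Ramsey-type reasoning'' to route $h$ pairwise adjacent connected branch sets through the full components attached to $\Omega$) but you do not carry out the routing step, and you yourself identify it as ``the technical heart of the argument.'' This is precisely the part that cannot be waved away: a potential maximal clique of a weakly chordal graph can be very far from a clique of $G$ (your own $K_{n,n}$ example shows $\Omega$ may induce a graph of clique number $2$), and extracting a $K_h$-minor from a large $\Omega$ together with its attached components is not a routine Ramsey argument --- nothing in your sketch explains how the absence of long holes in $G$ and $\overline{G}$ forces the ``large matching'' structure you need, nor why the connectors can be chosen vertex-disjoint. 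The paper sidesteps this entirely by invoking a black-box structural theorem: every $H$-minor-free graph of treewidth at least $c_H k^2$ can be contracted to either a planar triangulation $\Gamma_k$ of the $(k\times k)$-grid or to $\Pi_k$ ($\Gamma_k$ plus a universal vertex), and both of these contain induced cycles of length at least $6$ for $k\geq 3$; since a long induced cycle in a contraction of $G$ lifts to a long hole in $G$, this contradicts weak chordality. Without either that theorem or a completed construction of the clique minor, your weakly chordal bound is not proved.
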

\begin{proof}
Let $G$ be a weakly chordal graph excluding $H$ as a minor.
By a theorem from \cite{Fomin:2011fk}, there is a constant $c_H$ such that every $H$-minor-free graph of treewidth at least $c_H k^2$ can be transformed by making only edge contractions  either to a  planar triangulation $\Gamma_k$ of a $(k
\times k)$-grid, or to $\Pi_k$, which is a graph obtained from $G_k$ by adding a universal vertex. Since both $\Gamma_k$ and $\Pi_k$ for $k\geq 3$ contain an induced cycle of length at least $6$, we conclude that the treewidth of $G$  does not exceed some constant depending only on $H$. Indeed, otherwise a contraction of $G$, and hence $G$ too, would contain an induced cycle of length more than $4$.

 \medskip
 
For circular-arc graphs, we can prove the statement of the lemma by using the observation from \cite{KloksKW98} that every potential maximal clique of a circular-arc graph is 
  the union of at most of  three cliques. 
  Thus every circular-arc graph of treewidth at least $3|V(H)|$ should contain a \pmc{} of size at least $3|V(H)|$, and hence a clique of size at least $ |V(H)|$.  Thus every   circular-arc graph of treewidth at least $3|V(H)|$ contains $H$ as a minor.
  \end{proof}

By combining Lemma~\ref{lemma:bounds_tw} with  Theorem~\ref{thm:applications}, we obtain that 
\fd{} is solvable in polynomial time on circular-arc and weakly chordal graphs for every finite family $\cF$ of graphs. The requirement that $\cF$   contains a planar graph can be omitted in this case.

\section{Conclusion}\label{sec:conclusion}
While regular properties and CMSO capture many interesting problems, it seems that the approach based on minimal triangulations is not restricted by these settings. 
Take for example the following problem.
\smallskip

\defproblem{\midcls{}}{A graph $G$,  and a  collection   $\{T_1, T_2, \dots, T_p\}$ of terminal vertices, $T_i\subseteq V(G)$, of size at most $\ell$.}{Find a  set $F \subseteq V(G)$ of minimum size such that  
 $G[F]$ has connected components $C_1, C_2, \dots, C_p$ and for every $1\leq i\leq p$, $T_i \subseteq C_i$.}

\smallskip 

This problem is a generalization of the 
 \textsc{Induced Disjoint Paths}, where for a given set of $p$ pairs of terminals $x_i,y_i$, $1\leq i \leq p$, the task is to find a set of paths connecting terminals such that the vertices from different paths are not adjacent.  
Belmonte et al. \cite{BelmonteGHHKP11} have shown that \textsc{Induced Disjoint Paths} is solvable in polynomial time on chordal graphs.  Because $p$ is part of the input and not fixed, this problem cannot be expressed by a CMSO-formula of constant size. On the other hand, by applying a modification of the dynamic programming algorithm over \pmc s and minimal separators, it is possible to show that this problem is solvable in time  proportional to the number of \pmc s, up to polynomial factor $n^{t+\cO(1)}$.
 
Another example can be the following problem. Let $t$ be an integer. 

\smallskip

\defproblem{\hftg{}}{Graph $G$ and $H$}{Find a  set $F \subseteq V(G)$ of maximum size such that the treewidth of $G[F]$ is at most $t$ and there is a homomorphism from   
 $G[F]$ to $H$.}

\smallskip 
By the classical result of  Yannakakis and Gavril 
 \cite{YannakakisG87},  for every fixed $\chi$, a maximum induced subgraph of a chordal graph colorable in $\chi$ colors can be found in polynomial time. Because coloring into $\chi$ colors is homomorphism in a complete graph on $\chi$ vertices, and because  the treewidth of a $\chi$-colorable chordal graph is at most $\chi -1$, 
 \hftg{} extends this problem. However, the property of having a homomorphism to $H$ is not CMSO-expressible because $H$ is part of the input. Moreover,  it is easy to see that already very special case of graph homomorphism problem, where we are asked for a homomorphism from  a clique 
 of size $k$ (and thus of treewidth $k-1$) to  $H$ is equivalent to deciding if $H$ has a clique of size at least $k$, which is W[1]-hard. Thus homomorphism from $G$ to $H$ parameterized by the treewidth of $G$  is W[1]-hard.  But on the other hand,  dynamic programming  over \pmc s and minimal separators shows that  \hftg{}  is solvable in time proportional to the number of \pmc s, up to polynomial factor $n^{O(t)}$.
 
 Both examples indicate that even more general framework capturing problems solvable in time proportional to   the number of \pmc s can exist. Defining such a  general framework is an interesting open question. 
 
 Another open question concerns counting problems. 
  Our approach does not work for counting problems  due to potential  double counting in the process of computing functions $\alpha $ and $\beta$. We do not exclude a possibility that with additional (clever) ideas the main algorithm  of the paper can also count maximum sets with regular properties but we do not know how to do it, and leave it as an interesting open question. 

Another problem which   seems to be very much related but still cannot be handled directly by our approach is \textsc{Connected Feedback Vertex Set}, where we are asked to find a minimum feedback vertex set inducing a connected subgraph.  Interestingly, out approach works without problems for  \textsc{Maximum Induced Tree}, where the task is to find a minimum feedback vertex set such the remaining graph is connected, i.e. a tree. 

\medskip\textbf{Acknowledgements} We thank Bruno Courcelle, Daniel Lokshtanov, Mamadou  Kant\'e, Dieter Kratsch, Saket Saurabh, Bich Dao and Dimitrios M. Thilikos for fruitful discussions and useful suggestions on the topic of the paper. 

\newpage

\bibliographystyle{siam}
\bibliography{MITPP.bib}
 
\end{document}